\numberwithin{equation}{section}
\newtheorem{definition}{Definition}[section]
\newtheorem{theorem}{Theorem}[section]
\newtheorem{lemma}{Lemma}[section]	
\newtheorem{proposition}{Proposition}[section]
\begin{document}
	\begin{frontmatter}
		\title{Higher Chern-Simons based on (2-)crossed modules} 
		\author{Danhua Song\corref{cor1}}	
		\ead{danhua_song@163.com}
		\author{Mengyao Wu}
		\ead{2210502108@cnu.edu.cn}
		\author{Ke Wu}
		\ead{wuke@cnu.edu.cn}
		\author{Jie Yang}
	\ead{yangjie@cnu.edu.cn}
			\cortext[cor1]{Corresponding author.}
		\address{School of Mathematical Sciences, Capital Normal University, Beijing 100048, China}
		\date{}
		
		\begin{abstract} 
			We present higher Chern-Simons theories  based on (2-)crossed modules.
			We start from the generalized differential forms in Generalized Differential Calculus and define the corresponding generalized connections which consist of higher connections. Then we establish the generalized Chern-Simons forms to get the higher Chern-Simons actions. Finally, we develop the higher second Chern forms and  Chern-Weil theorems.
			
		\end{abstract}
		
		\begin{keyword}
			Higher Gauge Theories, Chern-Simons Theories, Gauge Symmetry, Topological Field Theories
		\end{keyword}	
	\end{frontmatter}
	%
	

	\clearpage
	\tableofcontents

		\section{Introduction}
	The purpose of this paper is to construct Higher Chern-Simons (HCS) gauge theories based on (2-)crossed modules. 
	The motivation for this work comes from literature  \cite{HSUSJS, DFUSJS, DFCLRUS, PRCS} and Soncini and Zucchini's papers \cite{ESRZ, R.Z, R.Z2, R.Z3, R.Z4} in which they  formulate a 4-dimensional  semistrict higher gauge theoretic Chern-Simons (CS) theory. But we only consider the case of the strict higher gauge theories. Besides,
	our treatments and techniques adopted in constructions are significantly different from that in literature.
	Within the framework of higher gauge theory \cite{Baez.2010, FH},
	we construct 2-Chern-Simons (2CS) and 3-Chern-Simons (3CS) gauge theories by applying
	the Generalized Differential Calculus (GDC) \cite{Robinson1}. 

	The higher gauge theory has attracted considerable attentions in many branches of physics, such as the $D$-branes and $M$-branes in the string
	theory \cite{JP, KBMBJHS, CVJ}, the spin foam models in the loop quantum gravity \cite{Baez, CR}, the 6-dimensional superconformal field theory \cite{Saemann:2013pca} and so on. In the past two decades, the 2-gauge theory \cite{ACJ, Baez2005HigherGT, Bar, JBUS} and the 3-gauge theory \cite{Faria_Martins_2011, doi:10.1063/1.4870640, Saemann:2013pca, DFHSUS} have been studied deeply. Based on  these higher gauge theories, the electromagnetic theory has been generalized to the $p$-form electromagnetic theory \cite{HP, MHCT, MKPR}. Likewise, the Yang-Mills (YM) theory has been generalized to the 2-form Yang-Mills (2YM) theory \cite{2002hep.th....6130B} and the 3-form Yang-Mills (3YM) theory \cite{Song}, respectively. Furthermore, the topological BF theory has been generalized to the topological 2BF \cite{Martins:2010ry, FGHEMP, AMMAO, AMMAOM, AMMAOMV, AMMAO2} and 3BF theories \cite{TRMV1, Radenkovic:2019qme, AMM}.
	The development of the higher YM and BF theories motivates us to consider the higher counterparts of the CS theory by following the idea on the higher gauge theory.

	As an ordinary gauge theory, the CS theory has a wide variety of applications, for instance pure mathematics, string theory, condensed matter physics \cite{CS, NM, BJMJ}, and so on. Since the CS theory does not depend on the metric of the underlying spacetime manifold, it is known as a sort of topological field theory \cite{GTH, DBMBMRGT}. The CS action is constructed by the connection 1-form on a principle bundle, which is called a ``gauge field'', and the equation of motion implies the flatness of the connection, i.e. the corresponding curvature 2-form vanishes. Without loss of generality, similar arguments shall be applied to the 2CS and 3CS theories. In carrying out the developments of these HCS theories, the main difficulty is the construction of the 2CS and 3CS actions. The main objects are the 2-connections and 3-connections, which consist of two and three ordinary differential forms valued in higher algebras respectively.
	We find out that the choice of the generalized differential forms in the GDC seems to be the best choice to our theories.

	The GDC has been discussed in a number of literature and they are employed in many different geometrical and physical contexts. For example, in 2002, Guo et al. introduced the generalized topological field theories \cite{GHY}, and found that a direct relation between CS and BF theories can be presented by using the GDC. Besides, they established the generalized Chern-Weil homomorphism for generalized curvature invariant polynomials, and showed that the BF gauge theory can be obtained from the generalized second Chern class. In 2003, they presented a general approach to construct a class of generalized topological field theories with constraints  and found that the ordinary BF formulations of general relativity, Yang-Mills theories, and $\mathscr{N}=1, 2$ chiral super-gravity can  be reformulated as generalized topological fields with constraints \cite{ghy}. 	Recently, the GDC has been developed by Robinson in a number of works  \cite{Robbinson0, Robinson2, Robbinson4, Robbinson5, Robinson3}.
	In the mean time, analogous ideas have been explored in the dual context, and generalized vector fields have been introduced and studied in  \cite{Robinson6, gv}. 
	
	In the GDC, the algebra and calculus of ordinary differential forms are extended to an algebra and calculus of different type $N$ generalized differential forms.  We are concerned in this paper with the type $N=1$ and $N=2$, which consist of two and three ordinary differential forms, respectively. Based on these facts, it is possible to redefine the 2-connection as a type $N=1$ generalized differential form, and the 3-connection as a type $N=2$ generalized differential form.
	The advantage of using the GDC lies in the fact that we can rewrite the higher connection as a kind of generalized connection.           
	
	Analogous to higher BF and YM theories, the CS theory will be generalized to the higher counterpart by using the idea of a categorical ladder. As indicated in Table \ref{table 1}, the underlying algebraic structure  is promoted from a ordinary group to higher groups.
	\begin{center}
		\begin{table}[H]
			\caption{The mathematical structures of (higher) CS theories}\label{table 1}
			\centering
			\setlength{\tabcolsep}{4mm}{
				\begin{tabular}{cccc}
					\specialrule{1pt}{5pt}{1pt} \specialrule{0.5pt}{1pt}{3pt}
					Gauge theory & Categorical structure& Algebraic structure& Linear structure   \\[1mm] 
					\specialrule{0.5pt}{2pt}{6pt} 
					CS & Lie group & Lie group & Lie algebra  \\[1.5mm]
					\specialrule{0.5pt}{1pt}{7pt}  
					2CS & Lie 2-group & Lie crossed module &
					\begin{tabular}[c]{@{}l@{}}\ \ \ \ differential  \\\ crossed module \end{tabular} \\[1.5mm]
					\specialrule{0.5pt}{1pt}{6pt} 
					3CS & Lie 3-group& Lie 2-crossed module &  \begin{tabular}[c]{@{}l@{}} \ \ \ \ \ differential \\\
						2-crossed module \end{tabular}   \\[2.5mm]  
					\specialrule{0.5pt}{2pt}{1pt} \specialrule{1pt}{1pt}{0pt} 
			\end{tabular} } 
		\end{table}
	\end{center}
	It follows that the Lie algebra-valued connection form shall be replaced by the differential (2-)crossed module-valued connection form. 
	In order to construct the 2CS and 3CS actions, we need to consider the invariant bilinear form on the differential (2-)crossed module. We follow the definition of the invariant form on the differential crossed module in \cite{ESRZ, R.Z}. Motivated by this construction, we define an invariant form on the differential 2-crossed module.
	Furthermore, we establish generalized bilinear forms based on the above expansions and establish the generalized CS forms to give the HCS forms and actions.

	According to the ordinary CS gauge theories, three questions naturally arise:
	\begin{enumerate}
		\item Whether higher second Chern forms exist?
		\item Whether the higher second Chern forms are gauge invariant under higher gauge transformations?
		\item Whether the higher second Chern forms satisfy higher Chern-Weil theorems?
	\end{enumerate}
	For the first question, we establish two new second Chern forms called respectively second 2-Chern form and second 3-Chern form, which are constructed under the categorical generalizations. For the second question, we prove that the second 2-Chern form is 2-gauge invariant under the 2-gauge transformation, and the second 3-Chern form is 3-gauge invariant under the 3-gauge transformation.
	For the third question,
	we show that the two new second Chern forms are closed and have topologically invariants, which give 2- and 3-Chern-Weil theorems separately. 
	We list  our results (2CS and 3CS) and the known results in Table \ref{table 2}. We note that the higher CS theories are nice generalizations of the CS theory. 
	\begin{center}
		\begin{table}[H]
			\caption{The (higher) CS gauge theories}\label{table 2}
			\centering
			\setlength{\tabcolsep}{1mm}{
				\begin{tabular}{cccc}
					\specialrule{1pt}{5pt}{1pt} \specialrule{0.5pt}{1pt}{3pt}
					(Higher) Gauge theory & CS   &2CS   &3CS\\[1mm] 
					\specialrule{0.5pt}{2pt}{6pt} 
					Gauge field&$A$&$(A, B)$ & $(A, B, C)$ \\[1.5mm]
					\specialrule{0.5pt}{2pt}{6pt}  
					Lagrangian	 & $\langle A, F- \frac{1}{3}A\wedge A\rangle_{\mathcal{g}}$ &$ \langle 2 \Omega_1+ \alpha(B), B \rangle_{\mathcal{g}, \mathcal{h}}$ & $ \langle 2\Omega_1 + \alpha(B), C\rangle_{\mathcal{g}, \mathcal{l}} +\langle B, \Omega_2 \rangle_{\mathcal{h}}$ \\[1.5mm]
					\specialrule{0.5pt}{2pt}{6pt} 
					Second Chern form 	 &$\langle F, F \rangle_{\mathcal{g}}$&  $2\langle \Omega_1, \Omega_2 \rangle_{\mathcal{g}, \mathcal{h}}$  &$2 \langle \Omega_1, \Omega_3 \rangle_{\mathcal{g}, \mathcal{l}} + \langle \Omega_2, \Omega_2 \rangle_{\mathcal{h}}$ \\[1.5mm]  
					\specialrule{0.5pt}{2pt}{6pt} 
					Equations of motion 	 &$F=0$&  $\Omega_1= \Omega_2=0$  &$\Omega_1= \Omega_2=\Omega_3=0$ \\[1.5mm]  
					\specialrule{0.5pt}{2pt}{1pt} \specialrule{1pt}{1pt}{0pt} 
			\end{tabular} } 
		\end{table}
	\end{center}
	
	The layout of the paper is summarized in the following outline:
	\begin{enumerate}
		\item In section \ref{sub2}, we review the relevant topics of the higher gauge theories. The notations of Lie algebra-valued differential forms and some conventions are presented in subsection \ref{sub21}. The 2-connection and its 2-gauge transformations are given in \ref{2connection}. Similarly, the 3-connection and its 3-gauge transformations are listed in subsection \ref{3connection}.
		\item In section \ref{sub3}, we recall the balanced differential crossed modules and the related invariant forms, which will play major roles in the 2CS theory. Motivated by these arguments, we establish the balanced differential 2-crossed modules and the relational invariant forms, which will play similar roles in the 3CS theory.
		\item In section \ref{sub4}, we recall the type $N$ generalized differential forms in the GDC and present the type $N=1$ and $N=2$ generalized $p$-forms. Then, we develop the two types generalized forms valued in differential crossed modules and 2-crossed modules.
		\item In section \ref{sub5}, we rewrite the 2-connections as the type $N=1$ generalized connections, and the 3-connections as the type $N=2$ generalized connections based on the above GDC.
		\item In section \ref{sub6}, 
		we present the two main results of this paper.  Firstly, in subsection \ref{sub61} we recollect the ordinary second Chern form and Chern-Weil theorem and the CS form. 
		Secondly, in subsection \ref{sub62}, we introduce the 2CS form based on the type $N=1$ generalized connections, and generalize the second Chern form to the second 2-Chern form satisfying the 2-Chern-Weil theorem.
		Finally, the similar construction apply to the type $N=2$ generalized connections, and we get the 3CS form in subsection \ref{sub63}. In addition, we also build the second 3-Chern form and show that it fulfills the 3-Chern-Weil theorem.
	\end{enumerate}

	%
	%
	%
	%
	%
	%
	%
	%
	%
	%
	%
	%
	%
	%
	%
	%
	%
	%

	\section{Higher connections and gauge transformations}\label{sub2}
	In this section, we shall review the higher connections and associated gauge transformations. Since we focus on the construction of HCS gauge theories, we consider these higher gauge fields (or higher connections) as basic fields of these theories.


	\subsection{Higher algebra valued differential forms}\label{sub21}
	
	In the ordinary CS gauge theory with gauge group $G$, the gauge fields are  $\mathcal{g}$-valued differential forms. Similarly, the higher gauge fields can be described by higher algebra valued differential forms in the HCS gauge theories. 
	Both Lie crossed modules and Lie 2-crossed modules are the algebraic structures of the HCS gauge theories elaborated in this paper rests.  We will not go into all the details about these concepts,  however a fuller introduction is given in Appendix \ref{2cm} for serving also the purpose of setting the notations used throughout this paper. 
	Furthermore, see refs. \cite{Faria_Martins_2011, 2002hep.th....6130B, Beaz, Crans, Brown, Kamps20022groupoidEI, Mutlu1998FREENESSCF}  for an exhaustive exposition of this subject. As for prerequisites, it is enough to be familiar with the following definitions.
	
	A Lie crossed module $(H, G; \alpha, \vartriangleright)$ consists of two Lie groups $H$ and $G$ together with a Lie group action $\vartriangleright: G \times H \longrightarrow H$ of $G$ on $H$ by automorphisms, and  an equivariant target map $\alpha : H \longrightarrow G$ satisfying certain identities. 
	The Lie crossed module has an infinitesimal version, $(\mathcal{h}, \mathcal{g}; \alpha, \vartriangleright)$, called the differential crossed module. In order not to introduce additional notations, we use the same letters $\alpha$ and $\vartriangleright$ for counterparts in the infinitesimal version of the crossed module.  The convention also applies to the Lie 2-crossed modules.
	
	A Lie 2-crossed module $(L, H, G; \beta, \alpha, \vartriangleright, \{, \})$ is a set of three groups $L$, $H$, $G$ with
	a complex of Lie groups $L\stackrel{\beta}{\longrightarrow}H \stackrel{\alpha}{\longrightarrow}G$ and actions $\vartriangleright$ of $G$ on $G$ by conjugation, and on $H$, $L$ by automorphisms, as well as a $G$-equivariant smooth function $\{-, -\} : H \times H \longrightarrow L$, called the Peiffer lifting. 	If $(H, G; \alpha, \vartriangleright)$ is also a crossed module in $(L, H, G; \beta, \alpha, \vartriangleright, \{, \})$, we call this 2-crossed module fine. In this paper, we focus on the fine 2-crossed module.
	Similarly, the Lie 2-crossed module has an  infinitesimal version, $(\mathcal{l}, \mathcal{h}, \mathcal{g}; \beta, \alpha, \vartriangleright, \{, \})$, called the differential 2-crossed module.

	Then, we consider the differential forms valued in these associated algebras as follows.
	In this paper, we will mostly follow the notations and definitions of \cite{doi:10.1063/1.4870640, SDH}. 	We denote by $\Lambda^k(M, \mathcal{g})$ the vector space of $\mathcal{g}$-valued differential $k$-forms on a manifold $M$ over $C^{\infty}(M)$.
	For $A\in \Lambda^k (M, \mathcal{g})$, have $A=\sum\limits_{a}A^a X_a $ with a scalar differential $k$-form $A^a$ and an element $X_a$ in $\mathcal{g}$. We define
	\begin{align}
		dA := \sum\limits_{a} dA^a X_a.
	\end{align}
	Here, we assume $\mathcal{g}$ to be a matrix Lie algebra. Then, we have $\left[X, X'\right]=XX'-X'X$ for each $X, X'\in \mathcal{g}$.
	For $A_1=\sum\limits_{a}A^a_1 X_a \in \Lambda^{k_1} (M, \mathcal{g}) $, $A_2=\sum\limits_{b}A^b_2 X_b \in \Lambda^{k_2} (M, \mathcal{g}) $, 
	define
	\begin{align*}
		A_1 \wedge A_2 :=\sum\limits_{a,b}A^a_1 \wedge A^b_2 X_a X_b,\ \ \ \ 
		A_1 \wedge^{\left[, \right]} A_2 :=\sum\limits_{a,b}A^a_1 \wedge A^b_2 \left[X_a, X_b\right], 
	\end{align*}
	then there is an identity
	$$A_1\wedge^{[, ]} A_2=A_1\wedge A_2-(-1)^{k_1 k_2}A_2\wedge A_1.$$
	The convention also applies to $\mathcal{h}$ and $\mathcal{l}$.
	
	Besides, for $B=\sum\limits_{a} B^a Y_a \in \Lambda^{t_1}(M,\mathcal{h})$, $B'=\sum\limits_{b} B'^b Y_b \in \Lambda^{t_2}(M,\mathcal{h})$ with $Y_a$, $Y_b \in \mathcal{h}$, define
	\begin{align*}
		A\wedge^{\vartriangleright }B :=\sum\limits_{a,b} A^a \wedge B^b X_a \vartriangleright Y_b,\ \ 	\alpha (B) := \sum\limits_{a}B^a \alpha(Y_a),\\
		B\wedge^{\left\{, \right\}} B':= \sum\limits_{a,b} B^a \wedge B'^b \left\{Y_a,Y_b\right\}.
	\end{align*}
	For $C=\sum\limits_{a}C^a Z_a \in \Lambda^{q_1} (M, \mathcal{l})$, $C'=\sum\limits_{b}C'^b Z_b \in \Lambda^{q_2} (M, \mathcal{l})$, define
	\begin{align*}
		A\wedge^\vartriangleright C :=\sum\limits_{a,b} A^a \wedge C^b X_a \vartriangleright Z_b, \ \  \beta(C):=\sum\limits_{a}C^a \beta(Z_a),\\
		B\wedge^{\vartriangleright'}C:=\sum\limits_{a,b}B^a \wedge C^b Y_a\vartriangleright' Z_b\
	\end{align*}
	where $Y_a \vartriangleright' Z_b = -\left\{\beta(Z_b), Y_a\right\}$ by using \eqref{YZZ}.
	
	For more properties for the Lie algebra valued differential forms corresponding to the identities of the differential (2-)crossed module, we refer the reader to \cite{SDH}.
	
	\subsection{2-connections and 2-gauge transformations}\label{2connection}
	
	Given a crossed module $O=(H, G; \alpha, \vartriangleright)$, let $\mathcal{O}=(\mathcal{h}, \mathcal{g}; \alpha, \vartriangleright)$ be the associated differential crossed module. 	The basic gauge fields of 2-gauge theory are 2-connections valued in the differential crossed module.
	On the manifold $M$, a 2-connection $(A, B)$ is given by a $\mathcal{g}$-valued 1-form $A \in \Lambda^1(M, \mathcal{g})$ and an $\mathcal{h}$-valued 2-form $B \in \Lambda^2(M, \mathcal{h})$. 
	The curvature $(\Omega_1, \Omega_2)$ consists of a $\mathcal{g}$-valued fake curvature 2-form and an $\mathcal{h}$-valued 2-curvature 3-form:
	\begin{align}\label{1}
		\Omega_1= dA + A \wedge A - \alpha(B),\ \ \ \
		\Omega_2= dB + A \wedge^{\vartriangleright}B, 
	\end{align}
	and call $(A, B)$ fake-flat, if $\Omega_1=0$, and flat, if it is fake-flat and $\Omega_2=0$. 
	
	On the other hand, the curvature automatically satisfies the 2-Bianchi Identities:
	\begin{align}\label{2BI}
		d \Omega_1 + A \wedge^{[, ]}\Omega_1 + \alpha(\Omega_2)&=0,\nonumber\\
		d \Omega_2 + A \wedge^{\vartriangleright} \Omega_2 - \Omega_1 \wedge^{\vartriangleright} B &=0.
	\end{align}
	
	Moreover, there are two kinds of 2-gauge transformations from a 2-connection $(A, B)$ to another one $(A', B')$:
	\begin{itemize}
		\item the first kind of 2-gauge transformation with $g \in \Lambda^0(M, G)$,
		\begin{align}
			A'=g^{-1}A g + g^{-1}dg, \ \ \ \ 
			B'= g^{-1}\vartriangleright B; \label{11}
		\end{align} 
		\item the second kind of 2-gauge transformation with $ \phi \in \Lambda^1(M, \mathcal{h})$,
		\begin{align}
			A'= A + \alpha(\phi), \ \ \ \ 
			B'= B + d \phi + A' \wedge^{\vartriangleright}\phi - \phi \wedge \phi. 
		\end{align}
	\end{itemize}
	If we write the second kind of 2-gauge transformation as
	\begin{align}
		A''= A' + \alpha(\phi), \ \ \ \ 
		B'' = B' + d \phi + A'' \wedge^{\vartriangleright}\phi - \phi \wedge \phi, \label{22}
	\end{align}
	then the composition of \eqref{11} and \eqref{22} gives a general 2-gauge transformation:
	\begin{align}\label{33}
		&A''=  g^{-1}A g + g^{-1}dg + \alpha(\phi),\nonumber\\
		&B''= g^{-1}\vartriangleright B + d \phi + A'' \wedge^{\vartriangleright}\phi - \phi \wedge \phi. 
	\end{align}
	Under the general 2-gauge transformation, the associated curvature transforms as follows:
	\begin{align}\label{2gtc}
		&\Omega''_1 = g^{-1} \Omega_1 g,\nonumber\\
		& \Omega''_2 = g^{-1} \vartriangleright \Omega_2 + \Omega''_1 \wedge^{\vartriangleright} \phi.
	\end{align}
	There is a basic result that the 2-connection is fake flat on the trivial principal $O$-2-bundle over $M$. The 2-curvature 3-form $\Omega_2$ is thus covariant under the general 2-gauge transformation. 
	
	\begin{theorem}
		Under the general 2-gauge transformation \eqref{33}, the 2-connection $(A, B)$ has the following properties:
		\begin{itemize}
			\item[(a).]  The fake-flatness is 2-gauge invariant;
			\item[(b).] The flatness is 2-gauge invariant;
			\item[(c).] The 2-Bianchi Identities are 2-gauge invariant.
		\end{itemize}
	\end{theorem}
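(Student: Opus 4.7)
The plan is to treat the three parts in increasing order of substance, leaning heavily on the transformation rules \eqref{2gtc} for the curvatures and on the fact that the 2-Bianchi identities \eqref{2BI} are algebraic identities built into the definition \eqref{1} of $(\Omega_1,\Omega_2)$.

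For part (a), I would simply apply \eqref{2gtc}. If $\Omega_1=0$, then $\Omega''_1 = g^{-1}\Omega_1 g = 0$, so fake-flatness is preserved. It is worth remarking here that \eqref{2gtc} already encodes the combined effect of the first kind transformation \eqref{11} and the second kind transformation \eqref{22}; the $\alpha(\phi)$-shift in $A$ is absorbed into the redefinition of $B$ precisely so that $\Omega_1$ is shifted only by conjugation. For part (b), given flatness, part (a) gives $\Omega''_1=0$, and then the second line of \eqref{2gtc} reduces to $\Omega''_2 = g^{-1}\vartriangleright \Omega_2 + 0\wedge^{\vartriangleright}\phi = g^{-1}\vartriangleright \Omega_2 = 0$, proving invariance of flatness.

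Part (c) is the most content-heavy but actually follows conceptually rather than through a grind. The key observation is that the 2-Bianchi identities are not extra constraints but consequences of the definitions $\Omega_1 = dA + A\wedge A - \alpha(B)$ and $\Omega_2 = dB + A\wedge^{\vartriangleright} B$ together with the differential crossed module axioms (in particular $\alpha(X\vartriangleright Y)=[X,\alpha(Y)]$ for $X\in\mathcal{g}$, $Y\in\mathcal{h}$, and $d\circ d = 0$). Hence they are satisfied by \emph{every} pair consisting of a $\mathcal{g}$-valued 1-form and an $\mathcal{h}$-valued 2-form through the curvatures built from them. Since the general 2-gauge transformation \eqref{33} produces a new 2-connection $(A'',B'')$ of exactly this form, its curvature $(\Omega''_1,\Omega''_2)$ automatically satisfies
\begin{align*}
d\Omega''_1 + A''\wedge^{[,]}\Omega''_1 + \alpha(\Omega''_2) &= 0,\\
d\Omega''_2 + A''\wedge^{\vartriangleright}\Omega''_2 - \Omega''_1\wedge^{\vartriangleright} B'' &= 0,
\end{align*}
which is the statement of 2-gauge invariance of \eqref{2BI}.

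The only potential obstacle I anticipate is in justifying the conceptual step in (c): one must verify that the transformation formulas \eqref{33} are consistent in the sense that computing $\Omega_1,\Omega_2$ from $(A'',B'')$ via \eqref{1} genuinely reproduces the expressions \eqref{2gtc}. This is a direct but careful calculation using the Leibniz rule for $d$, the identity $A\wedge^{[,]} A' = A\wedge A' -(-1)^{k_1 k_2}A'\wedge A$, and the compatibility $\alpha(X\vartriangleright Y)=[X,\alpha(Y)]$. Once this is checked (it is standard and essentially already implicit in the statement of \eqref{2gtc}), parts (a), (b), (c) all follow immediately as described above.
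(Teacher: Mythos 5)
Your parts (a) and (b) coincide with the paper's treatment: both simply read off invariance of (fake-)flatness from the curvature transformation laws \eqref{2gtc}. For part (c), however, you take a genuinely different route. The paper proves (c) by the long explicit computation in \ref{2gt}: it substitutes the transformed fields \eqref{33} and curvatures \eqref{2gtc} into the left-hand sides of \eqref{2BI} and, using $dg^{-1}=-g^{-1}dg\,g^{-1}$ together with the crossed-module identities such as $\alpha(X\vartriangleright Y)=[X,\alpha(Y)]$ and $\alpha(Y)\vartriangleright Y'=[Y,Y']$, reduces each expression to $g^{-1}(d\Omega_1+A\wedge^{[,]}\Omega_1+\alpha(\Omega_2))g$ and $g^{-1}\vartriangleright(d\Omega_2+A\wedge^{\vartriangleright}\Omega_2-\Omega_1\wedge^{\vartriangleright}B)$, which vanish. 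You instead observe that \eqref{2BI} are universal identities, valid for the curvature of \emph{any} pair $(A,B)\in\Lambda^1(M,\mathcal{g})\times\Lambda^2(M,\mathcal{h})$ as a consequence of \eqref{1}, $d^2=0$ and the crossed-module axioms, so the transformed connection $(A'',B'')$ automatically satisfies them; this is a correct and cleaner argument, and you rightly flag its only hypothesis, namely that the quantities in \eqref{2gtc} really are the curvatures of $(A'',B'')$ computed via \eqref{1}. What each approach buys: yours isolates the conceptual content (Bianchi identities are structural, not connection-specific) and makes invariance immediate once \eqref{2gtc} is accepted, while the paper's appendix calculation is self-contained and in effect simultaneously certifies the consistency of \eqref{33} with \eqref{2gtc} — the very check you defer — so if one does not take \eqref{2gtc} for granted, the residual computational burden of the two arguments is comparable.
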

	\begin{proof}
		According to the definitions of (fake-)flatness and the transformations \eqref{2gtc}, it is straightforward to get (a) and (b).  Besides, the proof of (c) is not difficult but is too long to give here. We will give a rigorous proof in Appendix \ref{2gt}.
	\end{proof}
	
	\subsection{3-connections and 3-gauge transformations}\label{3connection}
	
	Given a 2-crossed module $W=(L, H, G; \beta, \alpha, \vartriangleright, \{, \})$, let $\mathcal{W}=(\mathcal{l}, \mathcal{h}, \mathcal{g}; \beta, \alpha, \vartriangleright, \{, \})$ be the associated differential 2-crossed module.
	The basic gauge fields of 3-gauge theory are 3-connections valued in the differential 2-crossed module.
	On a manifold $M$, a 3-connection $(A, B,C)$ is given by a $\mathcal{g}$-valued 1-form $A \in \Lambda^1(M, \mathcal{g})$, an $\mathcal{h}$-valued 2-form $B \in \Lambda^2(M, \mathcal{h})$, and an $\mathcal{l}$-valued 3-form $C \in \Lambda^3(M, \mathcal{l})$. The curvature $(\Omega_1, \Omega_2, \Omega_3)$ consists of a $\mathcal{g}$-valued fake curvature 2-form, an $\mathcal{h}$-valued fake 2-curvature 3-form and an $\mathcal{l}$-valued 3-curvature 4-form:
	\begin{align}\label{2}
		&\Omega_1=dA + A\wedge A - \alpha(B), \ \ 
		&\Omega_2=dB + A \wedge^{\vartriangleright}B - \beta(C),\ \ \ \
		&\Omega_3= dC + A \wedge^{\vartriangleright}C + B \wedge^{\{, \}}B,
	\end{align}
	and call $(A, B, C)$ fake 1-flat, if $\Omega_1=0$, and fake flat, if it is fake 1-flat and $\Omega_2=0$, and flat, if it is fake flat and $\Omega_3=0$.
	
	Starting from the definition of the curvature form, we can also calculate its exterior
	derivative to obtain the 3-Bianchi Identities:
	\begin{align}\label{3bi}
		d \Omega_1 + A \wedge^{[, ]} \Omega_1 + \alpha(\Omega_2)&=0,\nonumber \\
		d \Omega_2 + A \wedge^{\vartriangleright}\Omega_2 - \Omega_1 \wedge^{\vartriangleright}B + \beta(\Omega_3)&=0, \nonumber \\
		d \Omega_3 + A \wedge^{\vartriangleright}\Omega_3 - \Omega_1\wedge^{\vartriangleright}C - B\wedge^{\{, \}}\Omega_2 - \Omega_2 \wedge^{\{, \}}B &=0.
	\end{align}
	For a rigorous derivation of these identities, the reader is referred to \cite{Song}.
	
	Without loss of generality, we consider three kinds of 3-gauge transformations from a 3-connection $(A, B, C)$ to another one $(A', B', C')$:
	\begin{itemize}
		\item the first kind of 3-gauge transformation with $g \in \Lambda^0(M, G)$,
		\begin{align}\label{31}
			A'= g^{-1}A g + g^{-1}dg, \ \ \ \ 
			B'= g^{-1}\vartriangleright B,\ \ \ \ 
			C'= g^{-1}\vartriangleright C;
		\end{align}
		\item the second kind of 3-gauge transformation with $\phi \in \Lambda^1(M, \mathcal{h})$,
		\begin{align}
			A'= A + \alpha(\phi), \ \ \ \ 
			B' = B + d \phi + A' \wedge^{\vartriangleright}\phi - \phi \wedge \phi,\ \ \ \ 
			C'= C - B' \wedge^{\{, \}} \phi - \phi \wedge^{\{, \}} B;
		\end{align}
		\item  the third kind of 3-gauge transformation with $\psi \in \Lambda^2(M, \mathcal{l})$,
		\begin{align}
			A'= A, \ \ \ \ 
			B'= B - \beta(\psi), \ \ \ \ 
			C'= C- d \psi - A' \wedge^{\vartriangleright}\psi.
		\end{align}
	\end{itemize}
	
	If we write the second kind of 3-gauge transformation as
	\begin{align}\label{32}
		A''= A'+ \alpha(\phi), \ \ \ \ 
		B''= B' + d \phi + A'' \wedge^{\vartriangleright}\phi - \phi \wedge \phi, \ \ \ \ 
		C''= C' - B'' \wedge^{\{, \}} \phi - \phi \wedge^{\{, \}} B',  
	\end{align}
	and the third kind of 3-gauge transformation as
	\begin{align}\label{34}
		\overline{A}=A'', \ \ \ \ 
		\overline{B}= B'' - \beta(\psi), \ \ \ \ 
		\overline{C}= C''- d \psi - \overline{A} \wedge^{\vartriangleright}\psi,
	\end{align}
	then the composition of \eqref{31}, \eqref{32} and \eqref{34} gives a general 3-gauge transformation:
	\begin{align}\label{g3gt}
		\overline{A}&= g^{-1}A g +g^{-1}dg + \alpha(\phi), \nonumber\\
		\overline{B}&= g^{-1}\vartriangleright B + d \phi + \overline{A}\wedge^{\vartriangleright}\phi - \phi \wedge \phi -\beta(\psi), \nonumber\\
		\overline{C}&=  g^{-1}\vartriangleright C - \overline{B}\wedge^{\{, \}} \phi + \phi \wedge^{\vartriangleright'}\psi - \phi \wedge^{\{, \}}(g^{-1}\vartriangleright B)- d \psi - \overline{A} \wedge^{\vartriangleright} \psi.
	\end{align}
	
	Under the general 3-gauge transformation, the 3-curvature $(\Omega_1, \Omega_2, \Omega_3)$ transforms as follows:
	\begin{align}\label{3gtc}
		\overline{\Omega}_1& = g^{-1}\Omega_1 g, \nonumber\\
		\overline{\Omega}_2& = g^{-1}\vartriangleright \Omega_2 + \overline{\Omega}_1\wedge^{\vartriangleright}\phi,\nonumber\\
		\overline{\Omega}_3& = g^{-1}\vartriangleright \Omega_3 - \overline{\Omega}_2 \wedge^{\{, \}} \phi + \phi \wedge^{\{, \}} (g^{-1}\vartriangleright \Omega_2) - \overline{\Omega}_1\wedge^{\vartriangleright} \psi.
	\end{align}
	
	By the same token with the 2-connections, there is a similar result that the 3-connections is fake flat, i.e. $\Omega_1= \Omega_2=0$, on the trivial principal $W$-3-bundle over $M$. The 3-curvature 4-form $\Omega_3$ is then covariant under the general 3-gauge transformation.
	
	\begin{theorem}
		Under the general 3-gauge transformation \eqref{g3gt}, the 3-connection $(A, B,C)$ has the following properties:
		\begin{itemize}
			\item[(a).]  The fake 1-flatness is 3-gauge invariant;
			\item[(b).]  The fake-flatness is 3-gauge invariant;
			\item[(c).] The flatness is 3-gauge invariant;
			\item[(d).] The 3-Bianchi Identities are 3-gauge invariant.
		\end{itemize}
	\end{theorem}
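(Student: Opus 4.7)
The plan is to treat items (a)--(c) as immediate consequences of the explicit curvature transformation rules \eqref{3gtc}, and to handle (d), which carries the real content, by a systematic computation in the spirit of the 2-connection case deferred to \ref{2gt}.

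For (a), the first line of \eqref{3gtc} gives $\overline{\Omega}_1 = g^{-1}\Omega_1 g$, so $\Omega_1=0$ forces $\overline{\Omega}_1=0$. Assuming fake flatness and invoking (a), the second line of \eqref{3gtc} collapses to $\overline{\Omega}_2 = g^{-1}\vartriangleright\Omega_2 + \overline{\Omega}_1\wedge^{\vartriangleright}\phi = 0$, which proves (b). For (c), the previous two steps plus the third line of \eqref{3gtc} make every term of $\overline{\Omega}_3$ vanish, so flatness is preserved.

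For (d), the goal is to show that the barred triple $(\overline{\Omega}_1,\overline{\Omega}_2,\overline{\Omega}_3)$ satisfies the same 3-Bianchi Identities \eqref{3bi} as the unbarred one. My preferred strategy is to decompose the general transformation \eqref{g3gt} as the composition of the three elementary pieces \eqref{31}, \eqref{32}, \eqref{34} and verify invariance for each piece in turn; in each step the transformed triple $(\overline{A},\overline{B},\overline{C})$ is again a bona fide 3-connection on $M$, so the 3-Bianchi Identities for it follow from the very derivation that produced \eqref{3bi} in the first place. Alternatively, one may substitute \eqref{3gtc} directly into \eqref{3bi}, expand with the Leibniz rule, and reduce using the identities of the differential 2-crossed module collected in \cite{SDH}.

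The hard part will be keeping track of the cross-terms produced by the Peiffer lifting in the third Bianchi identity, in particular the mixing of $\phi\wedge^{\{,\}}(g^{-1}\vartriangleright B)$ with $\phi\wedge^{\vartriangleright'}\psi$ and with the pieces of $\overline{\Omega}_3$. The cancellations should rely on the $G$-equivariance of the Peiffer lifting, the relation $Y\vartriangleright' Z = -\{\beta(Z),Y\}$ recalled in the opening section, and the complex condition $\alpha\beta=0$ satisfied by the differential 2-crossed module. Since this calculation follows the pattern of the 2-connection case but is considerably longer, I would defer the full verification to an appendix paralleling \ref{2gt}.
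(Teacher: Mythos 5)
Your treatment of (a)--(c) coincides with the paper's: both read the vanishing of $\overline{\Omega}_1$, $\overline{\Omega}_2$, $\overline{\Omega}_3$ directly off the transformation rules \eqref{3gtc}, in the stated order. For (d) you diverge. The paper's actual proof (in \ref{2gt}) is what you list as the \emph{alternative}: substitute \eqref{g3gt} and \eqref{3gtc} into the three identities \eqref{3bi}, expand, and cancel term by term using the differential 2-crossed module identities, reducing each expression to $g^{-1}(d\Omega_1+A\wedge^{[,]}\Omega_1+\alpha(\Omega_2))g$ or $g^{-1}\vartriangleright(\cdots)$ applied to the corresponding original identity, hence zero. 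Your \emph{preferred} route --- each elementary transformation \eqref{31}, \eqref{32}, \eqref{34} returns a triple of forms of the correct degrees and values, so the derivation of \eqref{3bi} from the curvature definitions \eqref{2} applies verbatim to $(\overline{A},\overline{B},\overline{C})$ --- is legitimate and far shorter, but note what it costs: it is only as strong as the assertion (stated in the paper without proof) that the formulas \eqref{3gtc} are exactly the curvatures of \eqref{g3gt} computed via \eqref{2}; the content of ``the 3-Bianchi identities are 3-gauge invariant'' in the paper is precisely a consistency check of those displayed transformation laws, and your shortcut presupposes the compatibility that the appendix computation verifies. Moreover, once you take that shortcut, the decomposition into elementary pieces and your worries about the Peiffer-lifting cross-terms, $\phi\wedge^{\vartriangleright'}\psi$, and $\alpha\circ\beta=0$ become superfluous --- those cancellations only arise on the direct-expansion route, which is what the paper carries out and what your closing plan to ``defer the full verification to an appendix'' effectively falls back on. In short: correct; the conceptual argument buys brevity at the price of leaning on \eqref{3gtc} as a derived fact, while the paper's brute-force check buys a self-contained verification (and an independent check of \eqref{3gtc} itself).
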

	\begin{proof}
		\begin{itemize}
			\item[(a).] If $(A,B, C)$ is fake 1-fat, $\Omega_1=0$. Under the general 3-gauge transformation \eqref{g3gt},  $\overline{\Omega}_1= g^{-1}\Omega_1 g = 0$. Then $(\overline{A}, \overline{B}, \overline{C})$ is also fake 1-flat, i.e. the flat 1-flatness is 3-gauge invariant.
			\item[(b).] If $(A,B, C)$ is fake fat, $\Omega_1= \Omega_2= 0$. Under the general 3-gauge transformation \eqref{g3gt},  $\overline{\Omega}_1= g^{-1}\Omega_1 g = 0$, and $\overline{\Omega}_2= g^{-1}\vartriangleright \Omega_2 + \overline{\Omega}_1\wedge^{\vartriangleright}\phi=0$. Then $(\overline{A}, \overline{B}, \overline{C})$ is also fake flat, i.e. the flat flatness is 3-gauge invariant. 
			\item[(c).] If $(A,B, C)$ is fat, $\Omega_1= \Omega_2=\Omega_3= 0$. Under the general 3-gauge transformation \eqref{g3gt},  $\overline{\Omega}_1= g^{-1}\Omega_1 g = 0$, $\overline{\Omega}_2= g^{-1}\vartriangleright \Omega_2 + \overline{\Omega}_1\wedge^{\vartriangleright}\phi=0$, and $\overline{\Omega}_3= g^{-1}\vartriangleright \Omega_3 - \overline{\Omega}_2 \wedge^{\{, \}} \phi + \phi \wedge^{\{, \}} (g^{-1}\vartriangleright \Omega_2) - \overline{\Omega}_1\wedge^{\vartriangleright} \psi=0$. Then $(\overline{A}, \overline{B}, \overline{C})$ is also flat, i.e. the flatness is 3-gauge invariant. 
			\item[(d).] The proof of (d) is not difficult but is too long to give here. We will give a rigorous proof in Appendix \ref{2gt}.
			
		\end{itemize}
	\end{proof}

	\section{Balanced differential (2-)crossed modules with invariant forms}\label{sub3}
	
	Similar to the role of invariant traces in ordinary CS gauge theory, the invariant form is also an essential ingredient of the construction of the HCS action.
	Thus, the first key issue for developing the HCS theories is defining the invariant bilinear form on the relevant algebras. 
	
	\paragraph{\textbf{Balanced differential crossed modules}}
	Balanced differential crossed modules play a major role in the construction of the 2CS theory. The notion of balance has arisen in  \cite{ESRZ}, which has non counterpart in ordinary Lie algebra theory.
	
	\begin{definition}[Balanced differential crossed modules]
		A differential crossed module $(\mathcal{h}, \mathcal{g}; \alpha, \vartriangleright)$ is said balanced if dim $\mathcal{h}$= dim $\mathcal{g}$.
	\end{definition}
	
	\begin{proposition}
		For any non balanced differential crossed module $(\mathcal{h}, \mathcal{g}; \alpha, \vartriangleright)$, there exists a balanced differential crossed module $(\tilde{\mathcal{h}}, \tilde{\mathcal{g}}; \tilde{\alpha}, \tilde{\vartriangleright})$ minimally extending $(\mathcal{h}, \mathcal{g}; \alpha, \vartriangleright)$.
	\end{proposition}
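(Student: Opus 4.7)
The plan is to construct the balanced extension explicitly from the kernel and cokernel of the boundary map $\alpha$. First I would set $\mathcal{e} = \ker\alpha \subseteq \mathcal{h}$ and $\mathcal{f} = \mathcal{g}/\mathrm{im}\,\alpha$; the crossed module axioms guarantee that $\mathrm{im}\,\alpha$ is an ideal of $\mathcal{g}$ (so $\mathcal{f}$ is a Lie algebra on which $\mathcal{g}$ acts through the induced adjoint action on the quotient), while the Peiffer identity makes $\mathcal{e}$ a central sub-ideal of $\mathcal{h}$. Rank--nullity applied to $\alpha$ then supplies the key numerical identity
\begin{equation*}
\dim \mathcal{g} - \dim \mathcal{h} = \dim \mathcal{f} - \dim \mathcal{e},
\end{equation*}
which drives the whole construction regardless of the sign of the left-hand side.

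Next I would define the extended data. As vector spaces, set $\tilde{\mathcal{h}} = \mathcal{h} \oplus \mathcal{f}$ and $\tilde{\mathcal{g}} = \mathcal{g} \oplus \mathcal{e}$, equipped with the direct-sum Lie bracket in which the added summands $\mathcal{f}$ and $\mathcal{e}$ are declared abelian and commute with everything else, so the original $\mathcal{h}$ and $\mathcal{g}$ sit as Lie subalgebras. The identity above gives at once
\begin{equation*}
\dim \tilde{\mathcal{h}} = \dim \mathcal{h} + \dim \mathcal{f} = \dim \mathcal{g} + \dim \mathcal{e} = \dim \tilde{\mathcal{g}},
\end{equation*}
so balance is automatic. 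I would then extend the boundary by $\tilde{\alpha}(Y, \bar X) = (\alpha(Y), 0)$ (i.e.\ $\tilde{\alpha}$ is $\alpha$ on the old summand and zero on the new $\mathcal{f}$-summand) and the action by $(X, E) \,\tilde{\vartriangleright}\, (Y, \bar X') = (X \vartriangleright Y,\, \overline{[X, X']})$, so that the new $\mathcal{e}$-summand of $\tilde{\mathcal{g}}$ acts trivially and $\mathcal{g}$ acts on $\mathcal{f}$ through its induced quotient action. Well-definedness of the latter uses the equivariance $[X, \alpha(Y')] = \alpha(X \vartriangleright Y') \in \mathrm{im}\,\alpha$ already present in $(\mathcal{h}, \mathcal{g}; \alpha, \vartriangleright)$.

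Finally, I would verify the two differential crossed module axioms for $(\tilde{\mathcal{h}}, \tilde{\mathcal{g}}; \tilde{\alpha}, \tilde{\vartriangleright})$ by a case split on the direct-sum decomposition. Equivariance $\tilde{\alpha}(\tilde X \,\tilde{\vartriangleright}\, \tilde Y) = [\tilde X, \tilde{\alpha}(\tilde Y)]$ reduces to the original equivariance on the pure $\mathcal{h}$--$\mathcal{g}$ piece and vanishes identically on the new summands. The Peiffer identity $\tilde{\alpha}(\tilde Y)\,\tilde{\vartriangleright}\, \tilde Y' = [\tilde Y, \tilde Y']$ reduces to the original Peiffer after one computes the mixed term $\overline{[\alpha(Y), X'']} = -\overline{\alpha(X'' \vartriangleright Y)} = 0$, which is exactly the point where equivariance is used again. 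Minimality is then the observation that the extension adds only the intrinsic invariants $\mathcal{e}$ and $\mathcal{f}$ of $\alpha$, and these corrections are forced by the numerical identity of the first paragraph. The main technical obstacle will be the Peiffer check in the mixed cases, because an a priori nontrivial $\mathcal{f}$-contribution appears on the left-hand side and must be seen to cancel precisely via the compatibility between the induced quotient action on $\mathcal{f}$ and the equivariance of the original $\vartriangleright$.
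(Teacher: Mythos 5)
Your construction does produce a genuine balanced differential crossed module extending $(\mathcal{h}, \mathcal{g}; \alpha, \vartriangleright)$: the well-definedness of the quotient action (because $\mathrm{im}\,\alpha$ is an ideal), the vanishing of the mixed $\mathcal{f}$-term in the Peiffer check via $[\alpha(Y), X''] = -\alpha(X'' \vartriangleright Y) \in \mathrm{im}\,\alpha$, and the rank--nullity count for balance are all correct. The genuine gap is the word ``minimally''. You pad \emph{both} algebras, by $\ker\alpha$ on the $\mathcal{g}$-side and by $\mathrm{coker}\,\alpha$ on the $\mathcal{h}$-side, so you add $\dim\ker\alpha + \dim\mathrm{coker}\,\alpha$ dimensions in total, whereas the dimension mismatch is only $|\dim\mathcal{g} - \dim\mathcal{h}| = |\dim\mathrm{coker}\,\alpha - \dim\ker\alpha|$. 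The paper's proof pads only the smaller of the two algebras with a single abelian summand $\mathcal{w}$ of exactly that deficit, acting (or acted upon) trivially and killed by (or mapping through) $\tilde{\alpha}$; this adds strictly fewer dimensions than your construction whenever $\alpha$ is neither injective nor surjective. Concretely, take $\alpha = 0$ with $\mathcal{h}$ abelian of dimension $m$ and $\dim\mathcal{g} = n > m$: your extension has dimension $m+n$ on both sides, while padding $\mathcal{h}$ by an $(n-m)$-dimensional trivial summand gives a balanced extension of dimension $n$ on both sides, so yours cannot be the minimal one (and in particular is not isomorphic to it). Your closing sentence that the corrections $\mathcal{e}$ and $\mathcal{f}$ ``are forced by the numerical identity'' is exactly where the argument fails: the identity only forces the \emph{difference} of the added dimensions, not the individual summands, so nothing forces you to add both.

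A secondary remark: the Lie-theoretic structure you invoke (centrality of $\ker\alpha$ in $\mathcal{h}$, the quotient Lie algebra structure on $\mathcal{g}/\mathrm{im}\,\alpha$) plays no real role in your construction, since you declare both new summands abelian and central anyway; this is a sign that the kernel/cokernel decomposition is heavier machinery than the problem needs. If you want to salvage your write-up, either prove the existence statement with the one-sided padding (which is the paper's route, split into the cases $\dim\mathcal{g} < \dim\mathcal{h}$ and $\dim\mathcal{g} > \dim\mathcal{h}$) or give a precise definition of ``minimal'' and show your object satisfies it --- as it stands, it does not.
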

	\begin{proof}
		Firstly,  let dim $\mathcal{g} < $ dim $\mathcal{h}$. Then, there is a balanced
		differential crossed module $(\tilde{\mathcal{h}}, \tilde{\mathcal{g}}; \tilde{\alpha}, \tilde{\vartriangleright})$ where
		\begin{itemize}
			\item[1.] $\tilde{\mathcal{h}}=\mathcal{h}$;
			\item[2.] $\tilde{\mathcal{g}}= \mathcal{g} \oplus \mathcal{w}$, where $\mathcal{w}$ is a vector space such that dim $\mathcal{w}=$ dim $\mathcal{h}$ $-$  dim $\mathcal{g}$, and the Lie bracket $[-, -]$ in the Lie algebra $\tilde{\mathcal{g}}$ is given by
			\begin{align}
				[X_1 \oplus W_1, X_2 \oplus W_2] = [ X_1, X_2] \oplus 0;
			\end{align}
			\item[3.] the map $\tilde{\alpha}: \tilde{\mathcal{h}} \longrightarrow \tilde{\mathcal{g}}$ is given by
			\begin{align}
				\tilde{\alpha}(Y)= \alpha(Y) \oplus 0;
			\end{align}
			\item[4.] the action $\tilde{\vartriangleright}: \tilde{\mathcal{g}} \times \tilde{\mathcal{h}} \longrightarrow \tilde{\mathcal{h}}$ is given by 
			\begin{align}
				(X \oplus W)\tilde{\vartriangleright} Y = X \vartriangleright Y,
			\end{align}
		\end{itemize}
		for any $X, X_1, X_2 \in \mathcal{g}, Y \in \mathcal{h}$ and $W, W_1, W_2 \in \mathcal{w}$.
		
		Secondly, let dim $\mathcal{g} > $ dim $\mathcal{h}$. Then, there is a balanced
		differential crossed module $(\tilde{\mathcal{h}}, \tilde{\mathcal{g}}; \tilde{\alpha}, \tilde{\vartriangleright})$ where 
		\begin{itemize}
			\item[1.] $\tilde{\mathcal{g}}=\mathcal{g}$;
			\item[2.] $\tilde{\mathcal{h}}= \mathcal{h} \oplus \mathcal{w'}$, where $\mathcal{w'}$ is a vector space such that dim $\mathcal{w'}=$ dim $\mathcal{g}$ $-$  dim $\mathcal{h}$, and the Lie bracket $[-, -]$ in the Lie algebra $\tilde{\mathcal{h}}$ is given by
			\begin{align}
				[Y_1 \oplus W'_1, Y_2 \oplus W'_2] = [ Y_1, Y_2] \oplus 0;
			\end{align}
			\item[3.] the map $\tilde{\alpha}: \tilde{\mathcal{h}} \longrightarrow \tilde{\mathcal{g}}$ is given by
			\begin{align}
				\tilde{\alpha}(Y\oplus W')= \alpha(Y);
			\end{align}
			\item[4.] the action $\tilde{\vartriangleright}: \tilde{\mathcal{g}} \times \tilde{\mathcal{h}} \longrightarrow \tilde{\mathcal{h}}$ is given by 
			\begin{align}
				X \tilde{\vartriangleright} (Y \oplus W') = (X \vartriangleright Y)\oplus 0,
			\end{align}
		\end{itemize}
		for any $X \in \mathcal{g}, Y_1, Y_2 \in \mathcal{h}$ and $ W', W'_1, W'_2 \in \mathcal{w'}$.
		
		Further, the differential crossed module $(\tilde{\mathcal{h}}, \tilde{\mathcal{g}}; \tilde{\alpha}, \tilde{\vartriangleright})$ is unique up to isomorphism.
	\end{proof}
	
	\paragraph{\textbf{Balanced differential 2-crossed modules}}
	Similarly, we can define a balanced differential 2-crossed module in order to develop the invariant form in the 3CS theory.
	\begin{definition}[Balanced differential 2-crossed modules]
		A differential 2-crossed module $(\mathcal{l}, \mathcal{h}, \mathcal{g};\beta, \alpha, \vartriangleright, \{, \})$ is said balanced if dim $\mathcal{l}$= dim $\mathcal{g}$.
	\end{definition}
	\begin{proposition}
		For any non balanced differential 2-crossed module $(\mathcal{l}, \mathcal{h}, \mathcal{g};\beta, \alpha, \vartriangleright, \{, \})$, there exists a balanced differential 2-crossed module $(\tilde{\mathcal{l}}, \tilde{\mathcal{h}}, \tilde{\mathcal{g}}; \tilde{\beta}, \tilde{\alpha}, \tilde{\vartriangleright}, \widetilde{\{, \}})$ minimally extending $(\mathcal{l}, \mathcal{h}, \mathcal{g};\beta, \alpha, \vartriangleright, \{, \})$.
	\end{proposition}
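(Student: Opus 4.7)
The plan is to mimic the construction used for balanced differential crossed modules in the preceding proposition, splitting into two cases according to whether $\dim \mathcal{g} < \dim \mathcal{l}$ or $\dim \mathcal{g} > \dim \mathcal{l}$, and in each case padding the smaller of $\mathcal{g}$ or $\mathcal{l}$ by an abelian summand on which every structure map acts trivially.

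If $\dim \mathcal{g} < \dim \mathcal{l}$, I would set $\tilde{\mathcal{h}}=\mathcal{h}$, $\tilde{\mathcal{l}}=\mathcal{l}$, and $\tilde{\mathcal{g}} = \mathcal{g} \oplus \mathcal{w}$ with $\dim \mathcal{w} = \dim \mathcal{l} - \dim \mathcal{g}$. Equip $\tilde{\mathcal{g}}$ with the Lie bracket $[X_1 \oplus W_1, X_2 \oplus W_2]=[X_1, X_2] \oplus 0$, and define $\tilde{\alpha}(Y)=\alpha(Y)\oplus 0$, $\tilde{\beta}=\beta$, $(X\oplus W)\tilde{\vartriangleright}Y = X \vartriangleright Y$, $(X\oplus W)\tilde{\vartriangleright}Z = X \vartriangleright Z$, and the Peiffer lifting $\widetilde{\{Y_1,Y_2\}}=\{Y_1,Y_2\}$. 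Dually, if $\dim \mathcal{g} > \dim \mathcal{l}$, keep $\tilde{\mathcal{g}}=\mathcal{g}$, $\tilde{\mathcal{h}}=\mathcal{h}$, and set $\tilde{\mathcal{l}}=\mathcal{l}\oplus \mathcal{w'}$ with $\dim \mathcal{w'}=\dim \mathcal{g}-\dim \mathcal{l}$, abelian Lie bracket on the new summand, $\tilde{\beta}(Z\oplus W')=\beta(Z)$, $\tilde{\alpha}=\alpha$, $X\tilde{\vartriangleright}(Z\oplus W')=(X\vartriangleright Z)\oplus 0$, and $\widetilde{\{Y_1,Y_2\}}=\{Y_1,Y_2\}\oplus 0$. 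In both cases the construction yields $\dim \tilde{\mathcal{l}}=\dim \tilde{\mathcal{g}}$ and contains the original data as a sub-2-crossed module.

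Once the data is laid down, the work is to verify that $(\tilde{\mathcal{l}}, \tilde{\mathcal{h}}, \tilde{\mathcal{g}}; \tilde{\beta}, \tilde{\alpha}, \tilde{\vartriangleright}, \widetilde{\{,\}})$ still satisfies all axioms of a fine differential 2-crossed module. The equivariance and complex conditions ($\tilde{\alpha}\tilde{\beta}=0$, $\tilde{\alpha}(X\tilde{\vartriangleright}Y)=[X,\tilde{\alpha}Y]$, etc.) reduce to the original identities once the $\mathcal{w}$- or $\mathcal{w'}$-components are stripped, because by construction the added directions act trivially and land in zero. The Jacobi identity on $\tilde{\mathcal{g}}$ (resp.\ $\tilde{\mathcal{l}}$) is immediate from the block form of the bracket. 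The embedded crossed module $(\mathcal{h},\mathcal{g};\alpha,\vartriangleright)$ extends to $(\tilde{\mathcal{h}},\tilde{\mathcal{g}};\tilde{\alpha},\tilde{\vartriangleright})$, which is again a crossed module by the previous proposition, ensuring fineness.

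The main obstacle I anticipate is verifying the Peiffer lifting axioms, in particular the identities $\tilde{\beta}\widetilde{\{Y_1,Y_2\}}=\langle Y_1,Y_2\rangle$-type Peiffer commutator relation and the $\tilde{\mathcal{g}}$-equivariance $X\tilde{\vartriangleright}\widetilde{\{Y_1,Y_2\}}=\widetilde{\{X\tilde{\vartriangleright}Y_1,Y_2\}}+\widetilde{\{Y_1,X\tilde{\vartriangleright}Y_2\}}$, and the secondary Peiffer identities involving $\tilde{\vartriangleright}'$; these must be checked separately in each of the two cases because the Peiffer lifting is untouched in case~1 but acquires a trivial $\mathcal{w'}$-component in case~2, so in case~2 one additionally needs the extended $\mathcal{w}$-free elements on the left-hand side of $\tilde{\beta}\circ\widetilde{\{,\}}$ to reproduce $\beta\{Y_1,Y_2\}$ exactly. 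Finally, minimality and uniqueness up to isomorphism follow as in the preceding proposition: any balanced extension must supply at least $|\dim\mathcal{g}-\dim\mathcal{l}|$ extra dimensions to either $\tilde{\mathcal{g}}$ or $\tilde{\mathcal{l}}$, and the requirements that the added directions commute with everything and act trivially pin down the structure up to a linear isomorphism of $\mathcal{w}$ (resp.\ $\mathcal{w'}$).
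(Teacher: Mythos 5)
Your construction coincides with the paper's own proof: both cases (padding $\tilde{\mathcal{g}}=\mathcal{g}\oplus\mathcal{w}$ when $\dim\mathcal{g}<\dim\mathcal{l}$, padding $\tilde{\mathcal{l}}=\mathcal{l}\oplus\mathcal{v}$ when $\dim\mathcal{g}>\dim\mathcal{l}$) use exactly the same block-diagonal bracket, trivially extended maps $\tilde{\alpha},\tilde{\beta},\tilde{\vartriangleright},\widetilde{\{,\}}$, and the same uniqueness-up-to-isomorphism conclusion. Your additional remarks on verifying the Peiffer-lifting axioms and fineness only make explicit checks the paper leaves implicit, so the approach is essentially identical and correct.
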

	\begin{proof}
		Firstly,  let dim $\mathcal{g} < $ dim $\mathcal{l}$. Then, there is a balanced
		differential 2-crossed module $(\tilde{\mathcal{l}}, \tilde{\mathcal{h}}, \tilde{\mathcal{g}}; \tilde{\beta}, \tilde{\alpha}, \tilde{\vartriangleright}, \widetilde{\{, \}})$ where
		\begin{itemize}
			\item[1.] $\tilde{\mathcal{l}}=\mathcal{l}$ and $\tilde{\mathcal{h}}=\mathcal{h}$;
			\item[2.] $\tilde{\mathcal{g}}=\mathcal{g} \oplus \mathcal{w}$, where $\mathcal{w}$ is a vector space such that dim $\mathcal{w}$ $=$ dim $\mathcal{l}$ $-$ dim $\mathcal{g}$, and the Lie bracket $[-, -]$ in the Lie algebra $\tilde{\mathcal{g}}$ is given by 
			\begin{align}
				[X_1 \oplus W_1, X_2 \oplus W_2] = [ X_1, X_2] \oplus 0;
			\end{align}
			\item[3.] the map $\tilde{\beta}= \beta$;
			\item[4.] the map $\tilde{\alpha}: \tilde{\mathcal{h}} \longrightarrow \tilde{\mathcal{g}}$ is given by
			\begin{align}
				\tilde{\alpha}(Y)= \alpha(Y) \oplus 0 \ \ \ Y \in \mathcal{h};
			\end{align}
			\item[5.] the action $\tilde{\vartriangleright}$ of $\tilde{\mathcal{g}}$ on $ \tilde{\mathcal{h}}$ and $ \tilde{\mathcal{l}}$ are given by 
			\begin{align}
				(X \oplus W)\tilde{\vartriangleright} Y = X \vartriangleright Y, \\
				(X \oplus W)\tilde{\vartriangleright} Z = X \vartriangleright Z;
			\end{align}
			\item[6.] the map $\widetilde{\{, \}} = \{, \}$,
		\end{itemize}
		for any $X, X_1, X_2 \in \mathcal{g}, Y \in \mathcal{h}, Z \in \mathcal{l}, W, W_1, W_2 \in \mathcal{w}$.

		Secondly, let dim $\mathcal{g} > $ dim $\mathcal{l}$. Then, there is a balanced differential crossed module  $(\mathcal{l}, \mathcal{h}, \mathcal{g};\beta, \alpha, \vartriangleright, \{, \})$ where 
		\begin{itemize}
			\item[1.] $\tilde{\mathcal{h}}=\mathcal{h}$ and $\tilde{\mathcal{g}}=\mathcal{g}$;
			\item[2.] $\tilde{\mathcal{l}}=\mathcal{l} \oplus \mathcal{v}$, where $\mathcal{v}$ is a vector space such that dim $\mathcal{v}$ $=$ dim $\mathcal{g}$ $-$ dim $\mathcal{l}$, and the Lie bracket $[-, -]$ in the Lie algebra $\tilde{\mathcal{l}}$ is given by 
			\begin{align}
				[Y_1 \oplus V_1, Y_2 \oplus V_2] = [ Y_1, Y_2] \oplus 0;
			\end{align}
			\item[3.] the map $\tilde{\beta}: \tilde{\mathcal{l}} \longrightarrow \tilde{\mathcal{h}}$ is given by
			\begin{align}
				\tilde{\beta}(Z \oplus V)= \tilde{\beta}(Z);
			\end{align}
			\item[4.] $\tilde{\alpha}= \alpha$;
			\item[5.] the action $\tilde{\vartriangleright}$ of $\tilde{\mathcal{g}}$ on $ \tilde{\mathcal{l}}$ is given by 
			\begin{align}
				X\tilde{\vartriangleright} (Z \oplus V) = (X \vartriangleright Z)\oplus 0;
			\end{align}
			\item[6.] the map $\widetilde{\{, \}}: \tilde{h} \times  \tilde{h} \longrightarrow \tilde{l}$ is given by
			\begin{align}
				\widetilde{\{ Y_1, Y_2 \}}= \{ Y_1, Y_2 \} \oplus 0,
			\end{align}
		\end{itemize}
		for any $X \in \mathcal{g}, Y_1, Y_2 \in \mathcal{h}, Z \in \mathcal{l}$, and $V, V_1, V_2 \in \mathcal{v}$.
		
		Further, the differential 2-crossed module $(\tilde{\mathcal{l}}, \tilde{\mathcal{h}}, \tilde{\mathcal{g}}; \tilde{\beta}, \tilde{\alpha}, \tilde{\vartriangleright}, \widetilde{\{, \}})$  is unique up to isomorphism.
	\end{proof}
	
	Using the above results, we can always suppose that the differential (2-)crossed module we are dealing with is balanced. 
	We have established the existence of balance, and now we further develop the invariant forms on the differential (2-)crossed modules.

	\paragraph{\textbf{Invariant forms on the differential crossed modules}} The study of invariant forms on Lie 2-algebras was originally proposed in \cite{ESRZ, R.Z1} as a way of constructing the semistrict 4-d CS theory. In a similar manner, we will introduce an invariant form on the differential crossed module, which is equivalent to a strict Lie 2-algebra.
	
	\begin{definition}
		Given a Lie crossed module $(H, G; \alpha, \vartriangleright)$ and the associated differential crossed module $(\mathcal{h}, \mathcal{g}; \alpha, \vartriangleright)$, a $G$-invariant form on $(\mathcal{h}, \mathcal{g}; \alpha, \vartriangleright)$ is defined as a non singular bilinear form $\langle - , - \rangle_{\mathcal{g},\mathcal{h}}: \mathcal{g} \times \mathcal{h} \longrightarrow \mathbb{R}$, satisfying
		\begin{align}
			&	\langle [X_1, X_2], Y \rangle_{\mathcal{g},\mathcal{h}} = - \langle X_2, X_1 \vartriangleright Y\rangle_{\mathcal{g},\mathcal{h}},\label{XXY}\\
			&\langle\alpha(Y_1), Y_2 \rangle_{\mathcal{g},\mathcal{h}} = \langle \alpha(Y_2), Y_1\rangle_{\mathcal{g},\mathcal{h}} \label{XY}
		\end{align}
		for any $X_1, X_2 \in \mathcal{g}$, and $Y, Y_1, Y_2 \in \mathcal{h}$.
	\end{definition}
	The non singularity of $\langle -, - \rangle_{\mathcal{g}, \mathcal{h}}$ implies that $(\mathcal{h}, \mathcal{g}; \alpha, \vartriangleright)$ is balanced, i.e. dim $\mathcal{h}=$ dim $\mathcal{g}$. Besides, the form $\langle - , - \rangle_{\mathcal{g},\mathcal{h}}$ is $G$-invariant, i.e.
	\begin{align}\label{gin}
		\langle g X g^{-1}, g \vartriangleright Y\rangle_{\mathcal{g},\mathcal{h}} = \langle X, Y\rangle_{\mathcal{g},\mathcal{h}}
	\end{align}
	for any $g \in G, X \in \mathcal{g}$, and $Y \in \mathcal{h}$.

	\paragraph{\textbf{Invariant forms on the differential 2-crossed modules}} 
	Using a similar argument, we can define an invariant form on the differential 2-crossed module.
	\begin{definition}
		Given a 2-crossed module $(L, H, G; \beta, \alpha, \vartriangleright, \{, \})$ and the associated differential 2-crossed module $(\mathcal{l}, \mathcal{h}, \mathcal{g}; \beta, \alpha, \vartriangleright, \{, \})$, a $G$-invariant form in $(\mathcal{l}, \mathcal{h}, \mathcal{g}; \beta, \alpha, \vartriangleright, \{, \})$ consists of a pair of bilinear forms given by 
		\begin{itemize}
			\item an antisymmetric non-degenerate  bilinear form $\langle - , - \rangle_\mathcal{h}:  \mathcal{h} \times \mathcal{h} \longrightarrow \mathbb{R}$, satisfying
			\begin{align}
				&	\langle [Y,Y_1], Y_2 \rangle_\mathcal{h}=-\langle Y_1,[Y,Y_2]\rangle_\mathcal{h},\\
				&	\langle Y, X\vartriangleright Y_1\rangle_\mathcal{h}=\langle Y_1, X\vartriangleright Y \rangle_\mathcal{h};\label{YX}
			\end{align}
			\item a non singular bilinear form $\langle - , - \rangle_{\mathcal{g},\mathcal{l}}: \mathcal{g} \times \mathcal{l} \longrightarrow \mathbb{R}$, satisfying
			\begin{align}
				&	\langle[X_1, X_2], Z\rangle_{\mathcal{g},\mathcal{l}} = - \langle X_2, X_1 \vartriangleright  Z\rangle_{\mathcal{g},\mathcal{l}} \label{XZ},\\
				&\langle \alpha(Y), Z \rangle_{\mathcal{g},\mathcal{l}} = - \langle \beta(Z), Y\rangle_{\mathcal{h}},\label{YZ}\\
				&\langle X, \{ Y_1, Y_2\} \rangle_{\mathcal{g}, \mathcal{l}}=\frac{1}{2} \langle Y_2, X \vartriangleright Y_1 \rangle_{\mathcal{h}},\label{XYY}
			\end{align}
			for $X, X_1, X_2 \in \mathcal{g}$, $Y, Y_1, Y_2 \in \mathcal{h}$ and $Z \in \mathcal{l}$.
		\end{itemize}
		
	\end{definition}
	The non singularity of $\langle -, - \rangle_{\mathcal{g}, \mathcal{l}}$ implies that $(\mathcal{l}, \mathcal{h}, \mathcal{g}; \beta, \alpha, \vartriangleright, \{, \})$ is balanced, i.e. dim $\mathcal{l}=$ dim $\mathcal{g}$. Besides, 
	these forms are $G$-invariant, i.e. 
	\begin{align}
		\langle g\vartriangleright Y,  g\vartriangleright Y' \rangle_\mathcal{h}&= \langle Y,Y'\rangle_\mathcal{h},\\
		\langle g X g^{-1}, g \vartriangleright Z\rangle_{\mathcal{g},\mathcal{l}} &= \langle X, Z\rangle_{\mathcal{g},\mathcal{l}}
	\end{align}
	for any $g \in G, X \in \mathcal{g}, Y, Y'\in \mathcal{h}$ and $ Z \in \mathcal{l}$.

	\section{Generalized differential forms valued in differential (2-)crossed modules}\label{sub4}
	In this section, we first recall the generalized differential forms of type $N$ in the framework of GDC \cite{Robbinson0, Robinson1, Robinson2, Robinson3, Robbinson4, Robbinson5, Robinson6}. Then we formulate analogous results, which are applicable to the generalized differential forms valued in differential (2-)crossed modules.
	Since we aim to the construction of 2CS and 3CS gauge theories, we focus on just the types $N=1$ and $N=2$.

	\subsection{Generalized differential forms}\label{GDC1}
	There are $N$ linearly independent minus $1$-forms \{$\xi^i$\}, ($i = 1, 2, ... , N$), which assumed to satisfy the ordinary distributive and associative laws of exterior algebra. The exterior product rule is given by
	\begin{align}
		\xi^i \xi^j = - \xi^j \xi^i, \ \ \  \overset{p}{a} \, \xi^i = (-1)^p  \xi^i  \,\overset{p}{a} ,
	\end{align}
	where $ \overset{p}{a}$ is any ordinary $p$-form on a manifold $M$ of dimension $n$. 
	In particular, they satisfy a condition of linear independence, $\xi^1 \xi^2 ... \xi^N \neq 0 $ and $(\xi^i )^2 =0$.  In order to ensure that their exterior derivatives are zero-forms and that $d^2 = 0$, they are required to satisfy the condition $d \, \xi^i = k^i$, where $k^i$ is a constant for any $i=1, ..., N$.  
	
	A generalized $p$-form of type $N$ is thus defined as
	\begin{align}\label{gpf}
		\mathcal{W} = \overset{p}{a} +  \overset{p+1}{a}_{i_1} \, \xi^{i_1} + \dfrac{1}{2!} \overset{p+2}{a}_{i_1 i_2} \, \xi^{i_1}\xi^{i_2} + ... + \dfrac{1}{j!} \overset{p+j}{a}_{i_1 ... i_j} \, \xi^{i_1} ... \xi^{i_j} +... + \dfrac{1}{N!} \overset{p+N}{a}_{i_1 ... i_N} \, \xi^{i_1} ... \xi^{i_N},
	\end{align}
	where $\overset{p}{a}$, $\overset{p+1}{a}_{i_1}$, ..., $\overset{p+j}{a}_{i_1 ... i_j} = \overset{p+j}{a}_{[i_1 ... i_j]}$, ..., $\overset{p+N}{a}_{i_1 ... i_N} $ are, respectively, ordinary $p$-, $(p+1)$-, ..., $(p+j)$-, ..., $(p+N)$- ordinary forms; $-N \leqslant p \leqslant n$, $j = 1, 2, ..., N$ and $i_1, ..., i_j, ..., i_N$ range and sum over $1, 2, ..., N$.
	It then follows that the generalized forms
	satisfy the same basic rules
	of exterior multiplication and differentiation as those which govern the algebra
	and calculus of ordinary differential forms, apart from $p$ taking the value minus one in standard formulae.
	For example, these exterior products and derivatives of generalized $p$-form $\mathcal{U}$ and $q$-form $\mathcal{V}$ satisfy the standard rules of exterior algebra
	\begin{align}
		\mathcal{U}\boldsymbol{\wedge} \mathcal{V} &= (-1)^{pq} \mathcal{V} \boldsymbol{\wedge} \mathcal{U},\\
		\textbf{d} (\mathcal{U}\boldsymbol{\wedge} \mathcal{V})&=d \mathcal{U} \wedge \mathcal{V} + (-1)^p \mathcal{U} \wedge \mathcal{V},
	\end{align}
	and $\textbf{d}^2=0$.
	From \eqref{gpf}, we can see that a generalized $p$-form of type $N = 0$ is an ordinary differential $p$-form, and the exterior algebra of type $N = 0$ forms is the ordinary exterior algebra.
	Further discussion of type $N$ generalized forms can be found in \cite{Robinson1}.
	In this paper, we focus on the type $N=1$ and $N=2$ generalized $p$-forms.
	
	\textbf{Type $N=1$ generalized $p$-forms.}
	As in the equation \eqref{gpf}, the type $N=1$ generalized $p$-form is defined as
	\begin{align}\label{U}
		\mathcal{U} = \overset{p}{a} +  \overset{p+1}{a}\xi,
	\end{align}
	where $\overset{p}{a}$ and $\overset{p+1}{a}$ are, respectively, ordinary $p$- and $(p+1)$-forms and $p$ can take integer values from $-1$ to $n$ with $\overset{-1}{a} = 0$.
	Then the generalized exterior product of the generalized $p$-form $\mathcal{U}$ in \eqref{U} and a generalized $q$-form $\mathcal{V}= \overset{q}{b} +  \overset{q+1}{b}\xi$ is given by
	\begin{align}\label{gep}
		\mathcal{U} \boldsymbol{\wedge} \mathcal{V} = \overset{p}{a} \wedge \overset{q}{b}+ ( \overset{p}{a} \wedge \overset{q+1}{b} + (-1)^q \,\overset{p+1}{a} \wedge  \overset{q}{b}) \xi,
	\end{align}
	and the generalized exterior derivative of $\mathcal{U}$ is given by
	\begin{align}\label{gd}
		\textbf{d} \mathcal{U} = d \, \overset{p}{a} + (-1)^{p+1} k \, \overset{p+1}{a} +  d \, \overset{p+1}{a} \xi,
	\end{align}
	where $k$ is a constant.	   
	
	\textbf{Type $N=2$ generalized $p$-forms.}
	As in the equation \eqref{gpf}, the type $N=2$ generalized $p$-form is defined as
	\begin{align}\label{UU}
		\mathcal{U} = \overset{p}{a} +  \overset{p+1}{a}_i \xi^i + \overset{p+2}{a}\xi^{12}, \ \ \ i =1, 2,
	\end{align}
	where $\overset{p}{a}$, $\overset{p+1}{a}_i$ and $\overset{p+2}{a}$ are, respectively, ordinary $p$-, $(p+1)$- and $(p+2)$-forms, and $p$ can take integer values from $-2$ to $n$ with $\overset{-1}{a} =\overset{-2}{a} = 0$ and $\xi^{12}=\xi^1 \wedge \xi^2$.
	Then the generalized exterior product of the generalized $p$-form $\mathcal{U}$ in \eqref{UU} and a generalized $q$-form $\mathcal{V} = \overset{q}{b} +  \overset{q+1}{b}_i \xi^i + \overset{q+2}{b}\xi^{12}$ is given by
	\begin{align}\label{gepq}
		\mathcal{U} \boldsymbol{\wedge} \mathcal{V} =& \overset{p}{a} \wedge \overset{q}{b}+ ( \overset{p}{a} \wedge \overset{q+1}{b}_i + (-1)^q \,\overset{p+1}{a}_i \wedge  \overset{q}{b}) \xi^i \nonumber\\
		&+ (\overset{p}{a}\wedge \overset{q+2}{b} + (-1)^{q+1}(\overset{p+1}{a}_1\wedge \overset{q+1}{b}_2 - \overset{p+1}{a}_2\wedge \overset{q+1}{b}_1) + \overset{p+2}{a}\wedge \overset{q}{b}) \xi^{12},
	\end{align}
	and the generalized exterior derivative of $\mathcal{U}$ is given by
	\begin{align}\label{gdq}
		\textbf{d} \mathcal{U} = d \, \overset{p}{a} + (-1)^{p+1} k^i \overset{p+1}{a}_i +  (d \overset{p+1}{a}_1 +(-1)^{p+1}k^2 \overset{p+2}{a})\xi^1 + (d \overset{p+1}{a}_2 +(-1)^{p}k^1 \overset{p+2}{a})\xi^2 + d \overset{p+2}{a}\xi^{12} ,
	\end{align}
	with $k^i$ for $i=1,2$.

	\subsection{Type $N=1$ generalized forms valued in differential crossed modules}\label{1gf}
	
	Similar to ordinary algebra-valued differential forms, we define a generalized form valued in the differential crossed module, which consists of a pair of ordinary algebra-valued differential forms.
	
	\begin{definition}
		Given a differential crossed module $(\mathcal{h}, \mathcal{g}; \alpha, \vartriangleright)$ and an $n$-dimension manifold $M$, a type $N=1$ generalized $p$-form valued in $(\mathcal{h}, \mathcal{g}; \alpha, \vartriangleright)$ is given by 
		\begin{align}\label{A}
			\mathcal{A} = \overset{p}{A} +  \overset{p+1}{A}\xi, \ \ \ -1\leq p \leq n,
		\end{align}
		with $\overset{p}{A} \in \Lambda^p(M, \mathcal{g})$ and $\overset{p+1}{A} \in \Lambda^{p+1}(M, \mathcal{h})$.
	\end{definition}
	
	According to the GDC in subsection \ref{GDC1}  and the higher gauge theory in the derived formulation considered by Zucchini in \cite{ESRZ, R.Z, R.Z2, R.Z3}, we can define the generalized exterior product and derivative for the above generalized $p$-forms valued in the differential crossed modules.
	
	\begin{definition}
		Given a generalized $p$-form $\mathcal{A}$ and a $q$-form $\mathcal{B}$ valued in  a differential crossed module $(\mathcal{h}, \mathcal{g}; \alpha, \vartriangleright)$, 
		\begin{align}
			\mathcal{A} = \overset{p}{A} +  \overset{p+1}{A}\xi,\ \ \ \mathcal{B} = \overset{q}{B} +  \overset{q+1}{B}\xi \label{B}
		\end{align}
		with $\overset{p}{A} \in \Lambda^p(M, \mathcal{g})$, $\overset{p+1}{A} \in \Lambda^{p+1}(M, \mathcal{h})$, $\overset{q}{B} \in \Lambda^q(M, \mathcal{g})$ and $\overset{q+1}{B} \in \Lambda^{q+1}(M, \mathcal{h})$,
		the generalized exterior product is given by
		\begin{align}\label{N1}
			\mathcal{A} \boldsymbol{\wedge} \mathcal{B} = \overset{p}{A} \wedge \overset{q}{B}+ \overset{p}{A} \wedge^{\vartriangleright} \overset{q+1}{B}  \xi,
		\end{align}
		and the generalized exterior derivative of $\mathcal{A}$ is given by
		\begin{align}\label{N11}
			\textbf{d} \mathcal{A} = d \, \overset{p}{A} + (-1)^{p+1} k \alpha(\overset{p+1}{A}) +  d \, \overset{p+1}{A} \xi,
		\end{align}
		where $k$ is a constant. 
	\end{definition}
	The important point to note here is that the equations \eqref{N1} and \eqref{N11} are induced by \eqref{gep} and \eqref{gd}, respectively.  This definition provides a computing method for higher algebra-valued differential forms.
	It is easy to show that 
	\begin{align}\label{N111}
		\mathcal{A}\wedge^{[, ]} \mathcal{B}=\overset{p}{A}\wedge^{[, ]}\overset{q}{B} +(\overset{p}{A}\wedge^{\vartriangleright}\overset{q+1}{B} - (-1)^{pq}\overset{q}{B}\wedge^{\vartriangleright}\overset{p+1}{A})\xi.
	\end{align}
	Furthermore, it can be straightforwardly verified that
	\begin{align}
		\textbf{d}(\mathcal{A}\wedge^{[, ]} \mathcal{B})= \textbf{d} \mathcal{A}\wedge^{[, ]} \mathcal{B} + (-1)^p \mathcal{A} \wedge^{[, ]} \textbf{d} \mathcal{B}
	\end{align}
	and that
	\begin{align}
		\textbf{d}^2=0.
	\end{align}
	
	Encouraged by these constructions of generalized differential form, we can define a generalized bilinear forms $ \ll -, - \gg$ for the higher algebra-valued differential forms.
	Given a pair of generalized forms of type $N=1$, $(\mathcal{A}, \mathcal{B})$ of the term  \eqref{B},  define 
	\begin{align}\label{gb}
		\ll \mathcal{A}, \mathcal{B} \gg=\ll \overset{p}{a} +  \overset{p+1}{a}\xi, 	\overset{q}{b} +  \overset{q+1}{b}\xi\gg=\langle  \overset{p}{a}, \overset{q+1}{b}\rangle_{\mathcal{g}, \mathcal{h}} + \langle \overset{q}{b},  \overset{p+1}{a} \rangle_{\mathcal{g}, \mathcal{h}}.
	\end{align}
	From  the property of the  non singular $G$-invariant bilinear form on $(\mathcal{h}, \mathcal{g}; \alpha, \vartriangleright)$, one can apparently note that $\ll -, - \gg$ is  a symmetric  $G$-invariant bilinear form.
	This structure will lead us to some unexpected and novel results in the construction of 2CS theory.

	\subsection{Type $N=2$ generalized forms valued in differential $2$-crossed modules}\label{2gf}
	We have already defined the differential crossed module-valued generalized forms of type $N=1$, and along this approach we will establish a type $N=2$ generalized forms valued differential $2$-crossed modules.
	
	\begin{definition}
		Given a differential $2$-crossed module $(\mathcal{l}, \mathcal{h}, \mathcal{g}; \beta, \alpha, \vartriangleright, \{, \})$ and a $n$-dimension manifold $M$, a type $N=2$ generalized $p$-form valued in $(\mathcal{l}, \mathcal{h}, \mathcal{g}; \beta, \alpha, \vartriangleright, \{, \})$ is given by 
		\begin{align}\label{AA}
			\mathcal{A} = \overset{p}{a} +  \overset{p+1}{a}_i \xi^i + \overset{p+2}{a}\xi^{12}, \ \ \ -2\leq p \leq n
		\end{align}
		with $i =1, 2$,
		$\overset{p}{a} \in \Lambda^p(M, \mathcal{g})$, $\overset{p+1}{a}_i \in \Lambda^{p+1}(M, \mathcal{h})$ and $\overset{p+2}{a} \in \Lambda^{p+2}(M, \mathcal{l})$.
	\end{definition}
	
	In the same measure, we can define the generalized exterior product and derivative for the above generalized $p$-forms valued in the differential 2-crossed modules.
	\begin{definition}
		Given a generalized $p$-form $\mathcal{A}$ and a $q$-form $\mathcal{B}$ valued in a differential 2-crossed module $(\mathcal{l}, \mathcal{h}, \mathcal{g}; \beta, \alpha, \vartriangleright, \{, \})$,
		\begin{align}\label{BB}
			\mathcal{A} = \overset{p}{a} +  \overset{p+1}{a}_i \xi^i + \overset{p+2}{a}\xi^{12},\nonumber\\
			\mathcal{B} = \overset{q}{b} +  \overset{q+1}{b}_i \xi^i + \overset{q+2}{b}\xi^{12},
		\end{align}
		with $\overset{q}{b} \in \Lambda^q(M, \mathcal{g})$, $\overset{q+1}{b}_i \in \Lambda^{q+1}(M, \mathcal{h})$ and $\overset{q+2}{b} \in \Lambda^{q+2}(M, \mathcal{l})$ for $i=1, 2$,
		the generalized exterior product is given by
		\begin{align}\label{N2}
			\mathcal{A} \boldsymbol{\wedge} \mathcal{B} = \overset{p}{a} \wedge \overset{q}{b}+ \overset{p}{a} \wedge^{\vartriangleright} \overset{q+1}{b}_i  \xi^i + (\overset{p}{a}\wedge^{\vartriangleright} \overset{q+2}{b} + (-1)^{q+1}\overset{p+1}{a}_1\wedge^{\{, \}} \overset{q+1}{b}_2) \xi^{12},
		\end{align}
		which is induced by \eqref{gepq},
		and the generalized exterior derivative of $\mathcal{A}$ is given by
		\begin{align}\label{N22}
			\textbf{d} \mathcal{A} = d \, \overset{p}{a} + (-1)^{p+1} k^i \alpha(\overset{p+1}{a}_i )+  (d \overset{p+1}{a}_i +(-1)^{p+i}k^j \beta(\overset{p+2}{a}))\xi^i + d \overset{p+2}{a}\xi^{12} ,
		\end{align}
		which is induced by \eqref{gdq}, and where $k^i$ is a constant and $i, j =1, 2, i\neq j$.   
	\end{definition}
	
	In this definition, the equations \eqref{N2} and \eqref{N22} are induced by \eqref{gepq} and \eqref{gdq}, respectively. 
	Consequently, we infer that
	\begin{align}\label{N222}
		\mathcal{A}\wedge^{[, ]} \mathcal{B}=&\overset{p}{a}\wedge^{[, ]}\overset{q}{b} +(\overset{p}{a}\wedge^{\vartriangleright}\overset{q+1}{b}_i - (-1)^{pq}\overset{q}{b}\wedge^{\vartriangleright}\overset{p+1}{a}_i)\xi^i + (\overset{p}{a}\wedge^{\vartriangleright}\overset{q+2}{b} - (-1)^{pq}\overset{q}{b}\wedge^{\vartriangleright}\overset{p+2}{a}\nonumber\\
		& +(-1)^{q+1}\overset{p+1}{a}_1 \wedge^{\{, \}}\overset{q+1}{b}_2 - (-1)^{pq +p+1}\overset{q+1}{b}_1 \wedge^{\{, \}}\overset{p+1}{a}_2)\xi^{12},\\
		\textbf{d}(\mathcal{A}\wedge^{[, ]} \mathcal{B})&= \textbf{d} \mathcal{A}\wedge^{[, ]} \mathcal{B} + (-1)^p \mathcal{A} \wedge^{[, ]} \textbf{d} \mathcal{B},
	\end{align}
	and that
	\begin{align}
		\textbf{d}^2=0.
	\end{align}
	
	By the same token, we define a generalized bilinear form $\ll -, -\gg$ for these differential 2-crossed module-valued generalized forms. 
	Given a pair of generalized forms of type $N=2$, $(\mathcal{A}, \mathcal{B})$, of the term \eqref{BB}, we define 
	\begin{align}\label{gl3cs}
		\ll \mathcal{A}, \mathcal{B} \gg=&\ll \overset{p}{a} +  \overset{p+1}{a}_i \xi^i + \overset{p+2}{a}\xi^{12}, 	\overset{q}{b} +  \overset{q+1}{b}_j \xi^j + \overset{q+2}{b}\xi^{12}\gg\nonumber\\
		=&\langle  \overset{p}{a}, \overset{q+2}{b}\rangle_{\mathcal{g}, \mathcal{l}} + \langle \overset{q}{b},  \overset{p+2}{a} \rangle_{\mathcal{g}, \mathcal{l}} - k^i \langle \overset{p+1}{a}_i, \overset{q+1}{b}_j \rangle_{\mathcal{h}},\ \ \ \ \ \  i, j=1, 2, i \neq j.
	\end{align}
	From  the property of the  non singular $G$-invariant bilinear form on $(\mathcal{l}, \mathcal{h}, \mathcal{g}; \beta, \alpha, \vartriangleright, \{, \})$, one can also apparently note that $\ll -, - \gg$ is  a $G$-invariant bilinear form.
	This structure will be useful in the construction of 3CS theory.
	
	\section{Generalized connections}\label{sub5}
	In this section, we present a generalized formulation of the higher gauge theory based on the subsections \ref{1gf} and \ref{2gf}. Under the framework of the GDC, it is easy to show the close relationship of higher to ordinary gauge theory.  Based on the above argument, it allows so to many import ideas and techniques of the $CS$ gauge theory to the higher $CS$ gauge theories in the next section. 
	
	\paragraph{\textbf{Type $N=1$ generalized connections}}
	Given a $2$-connection $(A, B)$,
	the generalized connection $1$-form of type $N=1$ is given by 
	\begin{align}
		\mathcal{A}= A + B \xi.
	\end{align}
	Then the generalized curvature $2$-form of $\mathcal{A}$ is $\mathcal{F}= \textbf{d}\mathcal{A} + \mathcal{A}\wedge \mathcal{A}$.   
	By using \eqref{N1} and \eqref{N11} with $k=-1$, a straightforward computation gives
	\begin{align}
		\mathcal{F}= dA + A \wedge A - \alpha(B)+(dB + A \wedge^{\vartriangleright}B)\xi= \Omega_1 + \Omega_2\xi
	\end{align}
	with the $2$-curvature $(\Omega_1, \Omega_2)$ of the form \eqref{1}.
	
	The generalized curvature $\mathcal{F}$ also satisfies the generalized Bianchi Identity
	\begin{align}
		d  \mathcal{F} + \mathcal{A}\wedge^{[,]}\mathcal{F}=0,
	\end{align}
	which gives
	\begin{align}
		d \Omega_1 + A \wedge^{[, ]}\Omega_1 + \alpha(\Omega_2)+(d \Omega_2 + A \wedge^{\vartriangleright}\Omega_2 - \Omega_1 \wedge^{\vartriangleright}B)\xi=0
	\end{align}
	with the 2-curvature $(\Omega_1, \Omega_2)$, i.e. the 2-Bianchi identities  \eqref{2BI}.

	\paragraph{\textbf{Type $N=2$ generalized connections}}
	Given a $3$-connection $(A, B,C)$,
	the generalized connection $1$-form of type $N=2$ is given by 
	\begin{align}
		\mathcal{A}= A + B \xi^1+ B \xi^2 + C \xi^{12}.
	\end{align}
	Then the generalized curvature $2$-form of $\mathcal{A}$ is $\mathcal{F}= \textbf{d}\mathcal{A} + \mathcal{A}\wedge \mathcal{A}$.   By using \eqref{N2} and \eqref{N22} with $k^1=0$ and $k^2=-1$, a straightforward computation gives
	\begin{align}
		\mathcal{F}= &dA + A \wedge A - \alpha(B) +(dB + A \wedge^{\vartriangleright}B - \beta(C))\xi^1 \nonumber\\
		& + (dB + A \wedge^{\vartriangleright}B)\xi^2 + (dC + A \wedge^{\vartriangleright}C + B \wedge^{\{, \}}B)\xi^{12}\nonumber\\
		=&\Omega_1 + \Omega_2 \xi^1 +(\Omega_2 + \beta(C))\xi^2 + \Omega_3 \xi^{12}.
	\end{align}
	
	Similarly, there is the generalized Bianchi Identity
	\begin{align}
		d  \mathcal{F} + \mathcal{A}\wedge^{[,]}\mathcal{F}=0,
	\end{align}
	which gives
	\begin{align}\label{gbi}
		&	d \Omega_1 + A\wedge^{[, ]}\Omega_1 + \alpha(\Omega_2) +(d \Omega_2 + A\wedge^{\vartriangleright} \Omega_2+ \beta(\Omega_3)-\Omega_1 \wedge^{\vartriangleright}B)\xi^1\nonumber\\
		& +(d\Omega_2 + A\wedge^{\vartriangleright}(\Omega_2 + \beta(C))-\Omega_1 \wedge^{\vartriangleright}B +\beta(dC))\xi^2\nonumber\\
		&+(d \Omega_3 + A \wedge^{\vartriangleright}\Omega_3 - \Omega_1 \wedge^{\vartriangleright}C - B\wedge^{\{, \}}\Omega_2- \Omega_2\wedge^{\{, \}}B)\xi^{12}=0,
	\end{align}
	with the 3-curvature $(\Omega_1, \Omega_2, \Omega_3)$, i.e. the $3$-Bianchi Identity \eqref{3bi}
	\begin{align}
		d \Omega_1 + A \wedge^{[, ]} \Omega_1 + \alpha(\Omega_2)&=0,\label{BI1}\\
		d \Omega_2 + A \wedge^{\vartriangleright}\Omega_2 - \Omega_1 \wedge^{\vartriangleright}B + \beta(\Omega_3)&=0, \label{BI2}\\
		d \Omega_3 + A \wedge^{\vartriangleright}\Omega_3 - \Omega_1\wedge^{\vartriangleright}C - B\wedge^{\{, \}}\Omega_2 - \Omega_2 \wedge^{\{, \}}B &=0, \label{BI3}
	\end{align}
	According to \eqref{BI1} and \eqref{BI2}, it follows that the coefficient of $\xi^2$ will vanish in equation \eqref{gbi}.

	\section{Higher Chern-Simons theory}\label{sub6}
	In this section, we first briefly remind the ordinary CS gauge theory and the relevant Chern-Weil theorem.
	Then we will construct the 2CS and 3CS gauge theories based on the generalized connections, and generalize these associated discussions in the ordinary CS theory to the HCS.
	\subsection{CS gauge theory and Chern-Weil theorem}\label{sub61}
	Let us consider a principle bundle $P(M, G)$ on a 3-dimensional manifold $M$ with a gauge group $G$ and the associated Lie algebra $\mathcal{g}$. Locally, a gauge field is a connection 1-form $A \in \Lambda^1(M, \mathcal{g})$ that transforms under the gauge transformation like
	\begin{align}\label{423}
		A'=g^{-1}A g + g^{-1}dg,
	\end{align}
	so that the corresponding curvature 2-form $F= dA + A \wedge A  \in \Lambda^2(M, \mathcal{g})$ transforms like
	\begin{align}
		F'= g^{-1}F g.
	\end{align}
	Besides, they satisfy the Bianchi Identity, $d F + A \wedge^{[, ]}F =0$.
	
	In order to consider their topological properties,  let
	us first briefly remind the curvature invariant symmetric polynomial on the bundle $P(M, G)$, which is called the \textbf{second Chern form},
	\begin{align}
		P(F,F)= \langle F, F\rangle_{\mathcal{g}},
	\end{align}
	where $\langle-, -\rangle_{\mathcal{g}}$ is a $G$-invariant symmetric non-degenerate bilinear form, satisfying the invariance condition
	\begin{align}\label{Sg}
		\langle [X, X'], X'' \rangle_{\mathcal{g}} = -\langle X', [X, X'']\rangle_{\mathcal{g}}
	\end{align}
	for any $X, X', X'' \in \mathcal{g}$. Under the gauge transformation \eqref{423}, it is not difficult to verify that the second Chern form is gauge invariant
	\begin{align}\label{FF0}
		P(F', F')=P(F, F).
	\end{align}

	Take an action of the form
	\begin{align}
		S=\int_{M} \langle F, F\rangle_{\mathcal{g}},
	\end{align}
	and the action is gauge invariant due to \eqref{FF0}.  However, the action gives completely trivial equations. To see this, simply compute, having
	\begin{align}
		\delta S=&\int_{M}2 \langle \delta F, F \rangle_{\mathcal{g}}\nonumber\\
		=&\int_{M}2 \langle  \delta dA + A \wedge^{[, ]}\delta A, F \rangle_{\mathcal{g}}\nonumber\\
		=&\int_{M}2 \langle \delta A, dF + A \wedge^{[, ]}F \rangle_{\mathcal{g}}.
	\end{align}
	It is apparent from the above equation that $\delta S =0$ for all $A$.
	
	Next we recall the Chern-Weil theorem for the second Chern form \cite{GHY} and give a rigorous proof. This theorem will be generalized to the 2CS and 3CS gauge theories in subsection \ref{sub62} and \ref{sub63}, respectively. And these proofs of the new Chern-Weil theorems can be completed by the method analogous to that used as follows.
	\begin{theorem}[Chern-Weil theorem]
		The second Chern form $P(F, F)$ satisfies
		\begin{itemize}\label{theorem1}
			\item[1).] $P(F, F)$ is closed, i.e. $d P(F)=0$;
			\item[2).] $P(F,F)$ has topologically invariant integrals, namely satisfying the Chern-Weil homomorphism formula:
			\begin{align}\label{F01}
				P(F^1, F^1) - P(F^0, F^0) = d Q(A^0, A^1),
			\end{align}
			with
			\begin{align}
				Q(A^0, A^1)= 2\int_{0}^{1} P(A^1 - A^0, F^t)dt,
			\end{align}
			where $A^0$ and $A^1$ are two connection 1-forms, $F^0$ and $F^1$ the corresponding curvature 2-forms,
			\begin{align}
				A^t = A^0 + t \eta, \ \ \ \eta = A^1 - A^0, \ \ \ (0 \leq t \leq 1),
			\end{align}
			the interpolation between $A^0$ and $A^1$,
			\begin{align}
				F^t= dA^t + A^t \wedge A^t,
			\end{align}
			the curvature of this interpolation.
		\end{itemize}
	\end{theorem}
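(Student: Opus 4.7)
The plan is to establish the two assertions using the Bianchi identity $dF + A \wedge^{[,]} F = 0$, the invariance condition \eqref{Sg} of the bilinear form $\langle -,-\rangle_\mathcal{g}$, and a one-parameter transgression argument for the homomorphism formula.

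For part (1), I would apply the graded Leibniz rule to obtain $dP(F,F) = 2\langle dF, F\rangle_\mathcal{g}$, using the symmetry of $\langle -,-\rangle_\mathcal{g}$ and the fact that $F$ is a 2-form. Substituting the Bianchi identity $dF = -A \wedge^{[,]} F$ reduces the task to showing $\langle A \wedge^{[,]} F, F\rangle_\mathcal{g} = 0$. Expanding $A = \sum_a A^a X_a$ and $F = \sum_b F^b X_b$ as in subsection \ref{sub21} and invoking the invariance identity \eqref{Sg} on the Lie algebra generators yields a cyclic-type sum in the brackets that vanishes by the symmetry of the bilinear form. Hence $dP(F,F) = 0$.

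For part (2), I would use the affine interpolation $A^t = A^0 + t\eta$ with $\eta = A^1 - A^0$ and the induced curvature $F^t = dA^t + A^t \wedge A^t$. A direct calculation gives $\partial_t A^t = \eta$ and $\partial_t F^t = d\eta + A^t \wedge^{[,]} \eta$, so
\begin{align*}
\frac{d}{dt} P(F^t, F^t) = 2 \bigl\langle d\eta + A^t \wedge^{[,]} \eta,\, F^t\bigr\rangle_\mathcal{g}.
\end{align*}
The key identity to establish is that this expression is exact, namely equal to $d\bigl(2\langle \eta, F^t\rangle_\mathcal{g}\bigr)$. By the Leibniz rule and the Bianchi identity for $A^t$,
\begin{align*}
d\langle \eta, F^t\rangle_\mathcal{g} = \langle d\eta, F^t\rangle_\mathcal{g} - \langle \eta, dF^t\rangle_\mathcal{g} = \langle d\eta, F^t\rangle_\mathcal{g} + \langle \eta, A^t \wedge^{[,]} F^t\rangle_\mathcal{g},
\end{align*}
and the invariance \eqref{Sg} lets one rewrite the second term as $\langle A^t \wedge^{[,]} \eta, F^t\rangle_\mathcal{g}$, with the signs dictated by the form degrees working out correctly since both $\eta$ and $A^t$ are 1-forms while $F^t$ is a 2-form. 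Integrating the resulting equality $\partial_t P(F^t, F^t) = d\bigl(2\langle \eta, F^t\rangle_\mathcal{g}\bigr)$ over $t \in [0,1]$ and applying the fundamental theorem of calculus produces
\begin{align*}
P(F^1, F^1) - P(F^0, F^0) = d \int_0^1 2 \langle \eta, F^t\rangle_\mathcal{g}\, dt = d Q(A^0, A^1),
\end{align*}
which is exactly the Chern--Weil formula \eqref{F01}.

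The main obstacle is the sign bookkeeping needed to lift the Lie algebra invariance \eqref{Sg} to an identity for Lie-algebra-valued differential forms of mixed degrees, namely the exchange formula of the schematic shape $\langle A \wedge^{[,]} \beta, \gamma\rangle_\mathcal{g} = \pm \langle \beta, A \wedge^{[,]} \gamma\rangle_\mathcal{g}$. Once that bookkeeping is fixed in the conventions of subsection \ref{sub21}, every remaining step is a direct computation, and the same transgression template will serve as the scaffolding for the 2-Chern--Weil and 3-Chern--Weil theorems proved in subsections \ref{sub62} and \ref{sub63}.
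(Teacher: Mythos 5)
Your proposal is correct and follows essentially the same route as the paper: part 1) via the Leibniz rule, the Bianchi identity, and the invariance condition \eqref{Sg}, and part 2) via the standard transgression argument with the interpolation $A^t$, showing $\frac{d}{dt}P(F^t,F^t)=2\,d\langle \eta, F^t\rangle_{\mathcal{g}}$ and integrating over $t$. The paper merely packages the key step as $\langle D^t\eta, F^t\rangle_{\mathcal{g}}=d\langle\eta,F^t\rangle_{\mathcal{g}}$ using $D^tF^t=0$, which is the same computation you carry out explicitly with \eqref{Sg}.
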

	\begin{proof}
		\begin{itemize}
			\item[1).] According to the Bianchi Identity, have
			\begin{align}
				d P(F, F)=&\langle d F, F \rangle_{\mathcal{g}} + \langle F, dF \rangle_{\mathcal{g}}\nonumber\\
				=&-\langle A \wedge^{[, ]} F, F \rangle_{\mathcal{g}} - \langle F, A \wedge^{[, ]}F \rangle_{\mathcal{g}}\nonumber\\
				=&0
			\end{align}
			by using \eqref{Sg}.
			\item[2).] It is easy to see that
			\begin{align}
				\frac{d}{dt}F^t = d \eta + A^t \wedge^{[, ]}\eta \equiv D^t\eta.
			\end{align}
			Hence 
			\begin{align}
				& \frac{d}{dt}P(F^t, F^t)= 2\langle \frac{d}{dt}F^t, F^t \rangle_{\mathcal{g}}\nonumber\\
				& =2\langle D^t\eta, F^t \rangle_{\mathcal{g}}
				=2d\langle \eta, F^t \rangle_{\mathcal{g}}= 2 d P(\eta, F^t),
			\end{align}
			by using the Bianchi Identity $D^t F^t =0$. Thus 
			\begin{align}
				P(F^1, F^1) - P(F^0, F^0) &= 2d \int_{0}^{1} P(\eta, F^t)dt\nonumber\\
				&=d Q(A^0, A^1).
			\end{align}
		\end{itemize}
		
	\end{proof}
	
	This theorem shows that $P(F^1, F^1)$ and $P(F^0, F^0)$ differ by an exact form. In other wards, their integrals over 4-d manifolds without boundary give the same results, and $Q(A^0, A^1)$ is called the secondary topological class.
	
	In particular, we consider $A^1=A$ and $A^0=0$ in \eqref{F01}, then
	\begin{align}
		\langle F, F\rangle_{\mathcal{g}}= d Q_{CS},
	\end{align}
	where $Q_{CS}$  is given by
	\begin{align}
		Q_{CS}= 2 \int_{0}^{1} \langle A, t dA + t^2 A \wedge A \rangle_{\mathcal{g}} dt= \langle A, d A + \frac{1}{3}A \wedge^{[, ]}A \rangle_{\mathcal{g}},
	\end{align}
	which is called the \textbf{Chern-Simons form}.
	Then the Chern-Simons action  on a general 3-dimensional manifold $M$ is defined by
	\begin{align}
		S_{CS}&= \frac{\kappa}{4 \pi} \int_{M}\langle A, d A + \frac{1}{3}A \wedge^{[, ]}A \rangle_{\mathcal{g}}\nonumber\\
		&=\frac{\kappa}{4 \pi} \int_{M}\langle A, F - \frac{1}{3}A \wedge A \rangle_{\mathcal{g}},
	\end{align}
	with $\kappa$  is the coupling constant, sometimes called the level of the theory. This action is topological because it does not depend on a choice of metric on $M$. It is easy to show that the equation of motion is the condition of flatness for the connection:
	\begin{align}
		F = d A + A \wedge A = 0.
	\end{align}
	
	\subsection{2CS gauge theory and 2-Chern-Weil theorem}\label{sub62}
	In this section, we construct a 2CS gauge theory based on the type $N=1$ generalized connection theory. This construction was inspired by \cite{R.Z, R.Z1}. 
	As ordinary CS gauge theory exists only in
	odd dimensional manifolds, the 2CS gauge
	theory exists in $4$-dimensional manifolds and is
	built in the framework of  the $2$-gauge theory \cite{Baez.2010}.
	The $2$-gauge symmetry of this
	theory is encoded by a Lie crossed module.

	Consider a Lie crossed module $O= (H,G; \alpha, \vartriangleright)$ and  the associated differential crossed module $\mathcal{O}= (\mathcal{h},\mathcal{g}; \alpha, \vartriangleright)$. We assume that $M$ is an oriented, compact manifold without boundary. No further restrictions are imposed. Let $(A, B)$ be a 2-connection on the principle 2-bundle $P(M, O)$. The 2CS action is defined by a type $N=1$ generalized connection $\mathcal{A}= A + B \xi$ as follows
	\begin{align}\label{CS4}
		S_{2CS}=\frac{k}{4 \pi}\int \ll \mathcal{A}, d \mathcal{A} + \frac{1}{3}\mathcal{A}\wedge^{[,]}\mathcal{A}\gg.
	\end{align}
	By using \eqref{N1}, \eqref{N111} with $k=-1$ and \eqref{gb}, a straightforward computation gives
	\begin{align}\label{2cs}
		S_{2CS}=&\frac{\kappa}{4 \pi}\int \ll A + B \xi, dA + \frac{2}{3}A \wedge A - \alpha(B) + (d B + \frac{2}{3}A\wedge^{\vartriangleright}B)\xi\gg\nonumber\\
		=&\frac{k}{4 \pi}\int \langle A, d B + \frac{2}{3}A\wedge^{\vartriangleright}B\rangle_{\mathcal{g}, \mathcal{h}} + \langle dA + \frac{2}{3}A \wedge A - \alpha(B), B \rangle_{\mathcal{g}, \mathcal{h}}.
	\end{align}
	Using \eqref{XXY}, we have
	\begin{align}\label{2cs0}
		\langle A, A\wedge^{\vartriangleright}B\rangle_{\mathcal{g}, \mathcal{h}}=\langle A \wedge^{[, ]}A, B\rangle_{\mathcal{g}, \mathcal{h}}.
	\end{align}
	By substituting \eqref{2cs0} into \eqref{2cs}, we obtain 
	\begin{align}
		S_{2CS}=&\frac{k}{4 \pi}\int \langle A, dB + \frac{1}{2} A \wedge^{\vartriangleright}B \rangle_{\mathcal{g}, \mathcal{h}} + \langle dA + A \wedge A - \alpha(B), B \rangle_{\mathcal{g}, \mathcal{h}}\nonumber\\
		=&\frac{k}{4 \pi}\int \langle 2 F - \alpha(B), B \rangle_{\mathcal{g}, \mathcal{h}},
	\end{align}
	with $F= dA + A \wedge A$. So, the 2CS theory can be described as a generalized BF theory with a cosmological term determined by the map $\alpha$. 
	
	Varying the action $S_{2CS}$ with respect to $A$ and $B$, we obtain the equations of motion:
	\begin{align}
		&\delta A: d B + A \wedge^{\vartriangleright}B = \Omega_2 =0,\\
		&\delta B: F - \alpha(B)= \Omega_1= 0.
	\end{align}
	Clearly, the solution of these equations gives a flat $2$-connection $(A, B)$ on the trivial principal $O$-$2$-bundle over $M$, analogously to the standard $CS$ theory.

	Let us generalize the second Chern form \eqref{FF0} to a new one with the 2-curvature $(\Omega_1, \Omega_2)$,
	\begin{align}
		P(\Omega_1, \Omega_2)= 2\langle \Omega_1, \Omega_2\rangle_{\mathcal{g}, \mathcal{h}},
	\end{align}
	which is called the \textbf{second 2-Chern form}. 
	
	\begin{lemma}\label{lemma1}
		$P(\Omega_1, \Omega_2)= 2\langle \Omega_1, \Omega_2\rangle_{\mathcal{g}, \mathcal{h}}$ is 2-gauge invariant under the general 2-gauge transformation \eqref{33}. 
	\end{lemma}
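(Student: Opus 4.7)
The plan is to compute $\langle \Omega''_1, \Omega''_2 \rangle_{\mathcal{g}, \mathcal{h}}$ directly using the transformation rules \eqref{2gtc} and show it coincides with $\langle \Omega_1, \Omega_2 \rangle_{\mathcal{g}, \mathcal{h}}$. Substituting $\Omega''_1 = g^{-1}\Omega_1 g$ and $\Omega''_2 = g^{-1}\vartriangleright \Omega_2 + \Omega''_1 \wedge^{\vartriangleright}\phi$ and using bilinearity splits the quantity into two terms: a ``conjugation'' term $\langle g^{-1}\Omega_1 g,\, g^{-1}\vartriangleright \Omega_2 \rangle_{\mathcal{g}, \mathcal{h}}$ coming from the first kind of 2-gauge transformation, and a ``shift'' term $\langle \Omega''_1,\, \Omega''_1 \wedge^{\vartriangleright}\phi \rangle_{\mathcal{g}, \mathcal{h}}$ coming from the second kind.

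For the conjugation term, I would expand both factors in bases of $\mathcal{g}$ and $\mathcal{h}$ and apply the $G$-invariance property \eqref{gin} componentwise; this immediately returns $\langle \Omega_1, \Omega_2 \rangle_{\mathcal{g}, \mathcal{h}}$. This step is routine and makes no essential use of the 2-form/1-form structure.

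The real work is to show that the shift term vanishes. My approach is to promote the algebraic identity \eqref{XXY} to a differential-form statement: for any $\mathcal{g}$-valued $2$-form $X$ and any $\mathcal{h}$-valued $1$-form $\phi$,
\begin{equation*}
\langle X \wedge^{[,]} X,\, \phi \rangle_{\mathcal{g}, \mathcal{h}} \;=\; -\langle X,\, X \wedge^{\vartriangleright}\phi \rangle_{\mathcal{g}, \mathcal{h}}.
\end{equation*}
Because $X$ has even degree, the sign that would in principle arise from commuting the scalar form $\phi^c$ past $X^b$ equals $+1$, so the form identity inherits exactly the sign of \eqref{XXY}. Applied to $X=\Omega''_1$, the shift term becomes $-\tfrac{1}{2}\langle \Omega''_1 \wedge^{[,]} \Omega''_1,\, \phi \rangle_{\mathcal{g}, \mathcal{h}}$ up to a combinatorial factor, and I then observe that $\Omega''_1 \wedge^{[,]} \Omega''_1 = 0$: using the matrix-Lie-algebra expression from Section~\ref{sub21}, the summand $\Omega''^{a}_1 \wedge \Omega''^{b}_1 [X_a,X_b]$ is simultaneously symmetric in $(a,b)$ (since $\Omega''^a_1 \wedge \Omega''^b_1 = \Omega''^b_1 \wedge \Omega''^a_1$ for even forms) and antisymmetric in the bracket, so it vanishes after relabeling.

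Combining the two steps yields $\langle \Omega''_1, \Omega''_2 \rangle_{\mathcal{g}, \mathcal{h}} = \langle \Omega_1, \Omega_2 \rangle_{\mathcal{g}, \mathcal{h}}$, and hence $P(\Omega''_1,\Omega''_2) = P(\Omega_1,\Omega_2)$. The main obstacle I anticipate is purely bookkeeping: correctly translating \eqref{XXY} into the wedge-product setting while tracking all the Koszul signs from the degrees of $\Omega''_1$ (a 2-form) and $\phi$ (a 1-form). For this particular combination of degrees the signs are benign, but one should verify them with a short basis-expansion computation rather than by analogy with the purely algebraic case.
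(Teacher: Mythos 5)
Your proposal is correct and follows essentially the same route as the paper's proof: substitute the transformed curvatures \eqref{2gtc}, use the $G$-invariance \eqref{gin} to recover $\langle \Omega_1,\Omega_2\rangle_{\mathcal{g},\mathcal{h}}$ from the conjugation term, and kill the cross term $\langle \Omega''_1, \Omega''_1\wedge^{\vartriangleright}\phi\rangle_{\mathcal{g},\mathcal{h}}$ via the invariance condition \eqref{XXY}. The only difference is that you spell out the sign bookkeeping and the symmetric-versus-antisymmetric relabeling (equivalently $\Omega''_1\wedge^{[,]}\Omega''_1=0$ for an even-degree form) that the paper compresses into the phrase ``by using \eqref{XXY}.''
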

	\begin{proof}
		\begin{align}
			P(\Omega''_1, \Omega''_2)&=  2\langle \Omega''_1, \Omega''_2\rangle_{\mathcal{g}, \mathcal{h}} \nonumber\\
			&=2 \langle g^{-1} \Omega_1 g, g^{-1} \vartriangleright \Omega_2 + (g^{-1} \Omega_1 g)\wedge^{\vartriangleright}\phi \rangle_{\mathcal{g}, \mathcal{h}}\nonumber\\
			&=2\langle \Omega_1, \Omega_2 \rangle_{\mathcal{g}, \mathcal{h}} + 2\langle \Omega_1, \Omega_1 \wedge^{\vartriangleright}\phi \rangle_{\mathcal{g}, \mathcal{h}}\nonumber\\
			&=2\langle \Omega_1, \Omega_2 \rangle_{\mathcal{g}, \mathcal{h}},
		\end{align}
		by using the invariance of $\langle-, - \rangle_{\mathcal{g}, \mathcal{h}}$ and \eqref{2gtc}, \eqref{XXY}.
	\end{proof}
	
	Similarly, take an action
	\begin{align}
		S=\int_{M}2\langle \Omega_1, \Omega_2\rangle_{\mathcal{g}, \mathcal{h}},
	\end{align}
	and the action is 2-gauge invariant according to  the Lemma \ref{lemma1}. Without loss of generality,  the action also gives completely trivial equations. A routine computation is given by
	\begin{align}
		\delta S &=2 \int_{M} \langle \delta \Omega_1, \Omega_2 \rangle_{\mathcal{g}, \mathcal{h}} + \langle \Omega_1, \delta \Omega_2 \rangle_{\mathcal{g}, \mathcal{h}}\nonumber\\
		&=2 \int_{M} \langle d \delta A + A \wedge^{[, ]}\delta A- \alpha(\delta B), \Omega_2 \rangle_{\mathcal{g},\mathcal{h}} + \langle \Omega_1, d \delta B + \delta A \wedge^{\vartriangleright} B + A \wedge^{\vartriangleright} \delta B \rangle_{\mathcal{g}, \mathcal{h}}\nonumber\\
		&= 2\int_{M} \langle \delta A , d \Omega_2 + A \wedge^{\vartriangleright}\Omega_2 - \Omega_1 \wedge^{\vartriangleright}B \rangle_{\mathcal{g}, \mathcal{h}} - \langle d \Omega_1 + A \wedge^{[, ]}\Omega_1 + \alpha(\Omega_2), \delta B \rangle_{\mathcal{g}, \mathcal{h}}
	\end{align}
	It is apparent from the above equation that $\delta S =0$ for all $(A, B)$ by using the 2-Bianchi Identities \eqref{2BI}.
	
	Similar to the Chern-Weil theorem in the ordinary CS gauge theory, we can develop a 2-Chern-Weil theorem for the second 2-Chern form. The proof is analogous to that in Theorem \ref{theorem1}.
	\begin{theorem}[2-Chern-Weil theorem]
		The second 2-Chern form $P(\Omega_1, \Omega_2)$ satisfies
		\begin{itemize}
			\item[1).] $P(\Omega_1, \Omega_2)$ is closed, i.e. $d P(\Omega_1, \Omega_2)=0$;
			\item[2).] $P(\Omega_1, \Omega_2)$ has topologically invariant integral, namely satisfying the 2-Chern-Weil homomorphism formula:
			\begin{align}\label{PPQ}
				P( \Omega^1_1, \Omega^1_2) - P (\Omega^0_1, \Omega^0_2)=d Q(A^1, A^0, B^1, B^0),
			\end{align}
			with 
			\begin{align}\label{Q4}
				Q(A^1, A^0, B^1, B^0)= \int_{0}^{1} P(A^1- A^0, \Omega^t_2) + P	(\Omega^t_1, B^1 - B^0) dt,
			\end{align}
			where $(A^0, B^0)$ and $(A^1, B^1)$ are two 2-connections, $(\Omega^0_1, \Omega^0_2)$ and $(\Omega^1_1, \Omega^1_2)$ the corresponding 2-curvatures,
			\begin{align}
				&A^t = A^0 + t \eta, \ \ \eta= A^1- A^0,\\
				&B^t = B^0 + t \overline{\eta},\ \ \overline{\eta}= B^1- B^0,\ \ \ \ (0 \leq t \leq 1)
			\end{align}
			the interpolation between $(A^0, B^0)$ and $(A^1, B^1)$,
			\begin{align}\label{AB}
				\Omega^t_1 = d A^t + A^t \wedge A^t - \alpha(B^t),\ \ \ 
				\Omega^t_2 = dB^t + A^t \wedge^{\vartriangleright}B^t,
			\end{align}
			the 2-curvature of this interpolation.
		\end{itemize}
	\end{theorem}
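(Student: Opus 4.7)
The plan is to mirror the two-step structure of the classical Chern-Weil proof (Theorem \ref{theorem1}), with the invariance identities \eqref{XXY}--\eqref{XY} and the 2-Bianchi identities \eqref{2BI} playing the roles formerly filled by the Lie-algebra invariance \eqref{Sg} and the ordinary Bianchi identity. In both parts the calculation proceeds at the level of differential forms valued in $\mathcal{g}$ and $\mathcal{h}$, so the graded Leibniz rule for $d$ and the form-degree signs coming from swapping factors under the wedge will be the bookkeeping engine.

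For part 1), I would apply graded Leibniz to get $dP(\Omega_1,\Omega_2)=2\langle d\Omega_1,\Omega_2\rangle_{\mathcal{g},\mathcal{h}}+2\langle \Omega_1,d\Omega_2\rangle_{\mathcal{g},\mathcal{h}}$ (the sign $(-1)^{\deg\Omega_1}=+1$ is trivial), and then substitute \eqref{2BI}. This yields four terms. The pair $\langle A\wedge^{[,]}\Omega_1,\Omega_2\rangle_{\mathcal{g},\mathcal{h}}$ and $\langle \Omega_1,A\wedge^{\vartriangleright}\Omega_2\rangle_{\mathcal{g},\mathcal{h}}$ cancels after a component expansion: \eqref{XXY} exchanges $[\,,\,]$ with $\vartriangleright$ across the pairing, and the graded commutativity between the 1-form $A$ and the 2-form $\Omega_1$ supplies the matching sign. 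The term $\langle \alpha(\Omega_2),\Omega_2\rangle_{\mathcal{g},\mathcal{h}}$ vanishes on its own because \eqref{XY} renders the scalar symmetric in its two 3-form arguments while wedging two 3-forms is graded-antisymmetric. Finally, $\langle \Omega_1,\Omega_1\wedge^{\vartriangleright}B\rangle_{\mathcal{g},\mathcal{h}}$ reduces via \eqref{XXY} to $\langle \Omega_1\wedge^{[,]}\Omega_1,B\rangle_{\mathcal{g},\mathcal{h}}$, and $\Omega_1\wedge^{[,]}\Omega_1=0$ because the $[\,,\,]$-wedge of a 2-form with itself is forced to vanish.

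For part 2), I would reduce \eqref{PPQ} to the pointwise identity $\frac{d}{dt}P(\Omega_1^t,\Omega_2^t)=d\bigl[P(\eta,\Omega_2^t)+P(\Omega_1^t,\overline{\eta})\bigr]$ and integrate over $t\in[0,1]$. To set this up, I would differentiate \eqref{AB} to obtain $\frac{d}{dt}\Omega_1^t=D^t\eta-\alpha(\overline{\eta})$ and $\frac{d}{dt}\Omega_2^t=\tilde{D}^t\overline{\eta}+\eta\wedge^{\vartriangleright}B^t$, where $D^t\eta:=d\eta+A^t\wedge^{[,]}\eta$ and $\tilde{D}^t\overline{\eta}:=d\overline{\eta}+A^t\wedge^{\vartriangleright}\overline{\eta}$. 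Expanding $\frac{d}{dt}P$ directly, and separately expanding $dP(\eta,\Omega_2^t)+dP(\Omega_1^t,\overline{\eta})$ by graded Leibniz and then substituting \eqref{2BI} for $(A^t,B^t)$, the two expressions are matched term by term by repeated use of \eqref{XXY} (to interchange $\wedge^{[,]}$ and $\wedge^{\vartriangleright}$ across the pairing) and \eqref{XY} (to move $\alpha$ between slots).

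The main obstacle will be sign bookkeeping: the pairings mix forms of differing degree, so every application of \eqref{XXY}--\eqref{XY} must be accompanied by the correct graded-commutativity sign from permuting factors under the wedge. In particular, I expect the crux to be identifying $\langle \Omega_1^t,\eta\wedge^{\vartriangleright}B^t\rangle_{\mathcal{g},\mathcal{h}}$, which appears in $\frac{d}{dt}P$ but not obviously in $d[P(\eta,\Omega_2^t)+P(\Omega_1^t,\overline{\eta})]$, with $-\langle \eta,\Omega_1^t\wedge^{\vartriangleright}B^t\rangle_{\mathcal{g},\mathcal{h}}$, which is precisely the term that the 2-Bianchi identity for $d\Omega_2^t$ contributes to $d\langle \eta,\Omega_2^t\rangle_{\mathcal{g},\mathcal{h}}$. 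Once these sign computations are organized in the style of Lemma \ref{lemma1}, the identity closes and the theorem follows.
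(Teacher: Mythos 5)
Your proposal is correct and takes essentially the same approach as the paper: part 1) by substituting the 2-Bianchi identities \eqref{2BI} and cancelling terms via \eqref{XXY} and \eqref{XY}, and part 2) by differentiating the interpolated curvatures in $t$, showing $\frac{d}{dt}P(\Omega_1^t,\Omega_2^t)=d\bigl(P(\eta,\Omega_2^t)+P(\Omega_1^t,\overline{\eta})\bigr)$, and integrating over $[0,1]$. The step you flag as the crux, trading $\langle \Omega_1^t,\eta\wedge^{\vartriangleright}B^t\rangle_{\mathcal{g},\mathcal{h}}$ for $-\langle \eta,\Omega_1^t\wedge^{\vartriangleright}B^t\rangle_{\mathcal{g},\mathcal{h}}$ coming from the Bianchi identity for $d\Omega_2^t$, is precisely the manipulation the paper performs in its evaluation of $dP(\eta,\Omega_2^t)$.
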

	\begin{proof}
		\begin{itemize}
			\item[1).] 	Using the 2-Bianchi Identities, we can verify directly that
			\begin{align}
				\frac{1}{2}d P(\Omega_1, \Omega_2) = &d\langle \Omega_1, \Omega_2\rangle_{\mathcal{g}, \mathcal{h}}\nonumber\\
				=&\langle d\Omega_1, \Omega_2\rangle_{\mathcal{g}, \mathcal{h}} + \langle \Omega_1, d\Omega_2\rangle_{\mathcal{g}, \mathcal{h}}\nonumber\\
				=&\langle A \wedge^{[, ]}\Omega_1 + \alpha(\Omega_2), \Omega_2\rangle_{\mathcal{g}, \mathcal{h}} + \langle \Omega_1, A \wedge^{\vartriangleright} \Omega_2 - \Omega_1 \wedge^{\vartriangleright} B\rangle_{\mathcal{g}, \mathcal{h}}\nonumber\\
				=&0,
			\end{align}
			by using \eqref{XXY} and \eqref{XY}.
			\item[2).] Differentiating both $\Omega^t_1$ and $\Omega^t_2$ with respect to $t$ gives
			\begin{align}
				\frac{d}{dt}\Omega^t_1 = d \eta + A^t \wedge^{[, ]}\eta - \alpha(\overline{\eta}),\\
				\frac{d}{dt}\Omega^t_2 = d \overline{\eta} + A^t \wedge^{\vartriangleright} \overline{\eta} + \eta \wedge^{\vartriangleright}B^t.
			\end{align}
			Then 
			\begin{align}\label{Pt}
				\frac{d}{dt}P(\Omega^t_1, \Omega^t_2)&= 2 \langle \frac{d}{dt}\Omega^t_1, \Omega^t_2 \rangle_{\mathcal{g}, \mathcal{h}} +2 \langle \Omega^t_1, \frac{d}{dt}\Omega^t_2 \rangle_{\mathcal{g}, \mathcal{h}} \nonumber\\
				&=2\langle d \eta + A^t \wedge^{[, ]}\eta - \alpha(\overline{\eta}), \Omega^t_2 \rangle_{\mathcal{g}, \mathcal{h}} +2 \langle \Omega^t_1, d \overline{\eta} + A^t \wedge^{\vartriangleright} \overline{\eta} + \eta \wedge^{\vartriangleright}B^t \rangle_{\mathcal{g}, \mathcal{h}}.
			\end{align}
			On the other hand, we have
			\begin{align}\label{Ptt}
				d P(\eta, \Omega^t_2)&=2\langle d \eta, \Omega^t_2\rangle_{\mathcal{g}, \mathcal{h}} - 2\langle \eta, d\Omega^t_2 \rangle_{\mathcal{g}, \mathcal{h}} \nonumber\\
				&=2\langle d \eta, \Omega^t_2\rangle_{\mathcal{g}, \mathcal{h}} -2\langle \eta, \Omega^t_1 \wedge^{\vartriangleright}B^t - A^t \wedge^{\vartriangleright}\Omega^t_2 \rangle_{\mathcal{g}, \mathcal{h}}\nonumber\\
				&=2\langle d \eta + A^t\wedge^{[, ]}\eta, \Omega^t_2 \rangle_{\mathcal{g}, \mathcal{h}} + 2\langle \Omega^t_1, \eta \wedge^{\vartriangleright}B^t \rangle_{\mathcal{g}, \mathcal{h}}
			\end{align}
			by using \eqref{XXY}, and
			\begin{align}\label{Pttt}
				d P(\Omega^t_1, \overline{\eta})&=2\langle d \Omega^t_1, \overline{\eta} \rangle_{\mathcal{g}, \mathcal{h}}+ 2\langle \Omega^t_1, d \overline{\eta} \rangle_{\mathcal{g}, \mathcal{h}}\nonumber\\
				&= -2\langle A^t \wedge^{[, ]}\Omega^t_1 + \alpha(\Omega^t_2), \overline{\eta} \rangle_{\mathcal{g}, \mathcal{h}} + 2\langle \Omega^t_1, d \overline{\eta} \rangle_{\mathcal{g}, \mathcal{h}}\nonumber\\
				&=2\langle \Omega^t_1, d \overline{\eta} + A^t \wedge^{\vartriangleright}\overline{\eta} \rangle_{\mathcal{g}, \mathcal{h}} -2 \langle \alpha(\overline{\eta}), \Omega^t_2 \rangle_{\mathcal{g}, \mathcal{h}},
			\end{align}
			by using \eqref{XY}. 
			
			By comparing \eqref{Pt}, \eqref{Ptt} and \eqref{Pttt}, we obtain
			\begin{align}
				\frac{d}{dt}P(\Omega^t_1, \Omega^t_2)= d P(\eta, \Omega^t_2)+ d P(\Omega^t_1, \overline{\eta}),
			\end{align}
			and integrating the both sides of  this equation of $t$ between the limits $0$ and $1$, we have
			\begin{align}
				P( \Omega^1_1, \Omega^1_2) - P (\Omega^0_1, \Omega^0_2)=d \int_{0}^{1} P(\eta, \Omega^t_2)+ P(\Omega^t_1, \overline{\eta})dt=
				d Q(A^1, A^0, B^1, B^0).
			\end{align}
		\end{itemize}
	\end{proof}
	
	This shows that $P( \Omega^1_1, \Omega^1_2)$ and $P (\Omega^0_1, \Omega^0_2)$ differ by an exact form. In other wards, their integrals over 5-d manifolds without boundary give the same results, and we call $Q(A^1, A^0, B^1, B^0)$ the secondary 2-topological class.
	
	In particular, consider $A^0=0$, $B^0=0$, $A^1=A$ and $B^1=B$ in \eqref{PPQ}, and we have
	\begin{align}
		2\langle \Omega_1, \Omega_2\rangle_{\mathcal{g}, \mathcal{h}} =d Q_{2CS}.
	\end{align}
	We call $Q_{2CS}$ the \textbf{ 2-Chern-Simons form}, having
	\begin{align}
		Q_{2CS}&= 2\int_{0}^{1}\langle A, t dB + t^2 A \wedge^{\vartriangleright}B \rangle_{\mathcal{g}, \mathcal{h}} + \langle t dA + t^2 A \wedge A - t \alpha(B), B\rangle_{\mathcal{g}, \mathcal{h}}dt\nonumber\\
		&= \langle A, d B + \frac{2}{3}A\wedge^{\vartriangleright}B\rangle_{\mathcal{g}, \mathcal{h}} + \langle dA + \frac{2}{3}A \wedge A - \alpha(B), B \rangle_{\mathcal{g}, \mathcal{h}}.
	\end{align}
	Then we obtain the 2CS action on a 4-dimensional manifold $M$
	\begin{align}
		S_{2CS}= \frac{k}{4 \pi}\int_{M} \langle A, d B + \frac{2}{3}A\wedge^{\vartriangleright}B\rangle_{\mathcal{g}, \mathcal{h}} + \langle dA + \frac{2}{3}A \wedge A - \alpha(B), B \rangle_{\mathcal{g}, \mathcal{h}}.
	\end{align}

	%
	
	For these reasons, by its analogy to the standard CS
	theory and as implied by its given name, the present model can be legitimately considered a 2CS gauge theory.

	\subsection{3CS gauge theory and  3-Chern-Weil theorem}\label{sub63}
	
	In this section, we construct a $3CS$ gauge theory based on a similar heuristic argument with the $2CS$ gauge theory. Similar arguments apply to the type $N=2$ generalized connection, and the 3CS gauge theory exists in 5-dimensional manifolds and is built in the framework of the 3-gauge theory. Besides, the 3-gauge symmetry of this theory is encoded by a Lie 2-crossed module.

	Given a 2-crossed module $W=(L, H, G; \beta, \alpha, \vartriangleright, \{, \})$,  there is an associated differential 2-crossed module $\mathcal{W}=(\mathcal{l}, \mathcal{h}, \mathcal{g}; \beta, \alpha, \vartriangleright, \{, \})$. 
	Let $(A, B, C)$ be a 3-connection on the principle 3-bundle $P(M, W)$.
	The 3CS action is defined by a type $N=2$ generalized connection $	\mathcal{A}= A + B \xi^1+ B \xi^2 + C \xi^{12}$ as follows
	\begin{align}\label{51}
		S_{3CS}=\frac{k}{4 \pi}\int \ll \mathcal{A}, d \mathcal{A} + \frac{1}{3}\mathcal{A}\wedge^{[,]}\mathcal{A}\gg.
	\end{align}
	Using \eqref{gl3cs}, \eqref{N2} and \eqref{N22} with $k^1=0$ and $k^2=-1$, a straightforward computation gives
	\begin{align}\label{3csac}
		S_{3CS}
		=&\frac{k}{4 \pi}\int \ll A + B \xi^1+ B \xi^2 + C \xi^{12}, dA -\alpha(B) + \frac{2}{3}A \wedge A + (d B - \beta(C) + \frac{2}{3}A \wedge^{\vartriangleright}B)\xi^1 \nonumber\\
		&+ (dB +\frac{2}{3}A \wedge^{\vartriangleright}B )\xi^2 + (dC + \frac{2}{3}A \wedge^{\vartriangleright}C + \frac{2}{3}B \wedge^{\{, \}}B)\xi^{12}\gg\nonumber\\
		=&\frac{k}{4 \pi}\int \langle A, dC + \frac{2}{3}A \wedge^{\vartriangleright}C + \frac{2}{3}B \wedge^{\{, \}}B\rangle_{\mathcal{g}, \mathcal{l}} + \langle dA -\alpha(B) + \frac{2}{3}A \wedge A, C\rangle_{\mathcal{g}, \mathcal{l}} \nonumber\\
		&+ \langle B, d B - \beta(C) + \frac{2}{3}A \wedge^{\vartriangleright}B\rangle_{\mathcal{h}}.
	\end{align}
	For the first part, have
	\begin{align}\label{f}
		&\langle A, dC + \frac{2}{3} A \wedge^{\vartriangleright}C + \frac{2}{3}B \wedge^{\{, \}}B \rangle_{\mathcal{g}, \mathcal{l}} \nonumber \\
		&= \langle A, dC \rangle_{\mathcal{g}, \mathcal{l}} + \frac{2}{3}\langle A \wedge^{[, ]} A, C \rangle_{\mathcal{g}, \mathcal{l}} + \frac{2}{3}\langle A, B \wedge^{\{, \}}B \rangle_{\mathcal{g}, \mathcal{l}},\nonumber\\
		&= \langle dA, C \rangle_{\mathcal{g}, \mathcal{l}} + \frac{4}{3}\langle A \wedge A, C \rangle_{\mathcal{g}, \mathcal{l}} + \frac{1}{3}\langle B, A\wedge^{\vartriangleright} B \rangle_{\mathcal{h}},\nonumber\\
		&= \langle dA + \frac{4}{3}A \wedge A, C\rangle_{\mathcal{g}, \mathcal{l}} +  \frac{1}{3}\langle B, A\wedge^{\vartriangleright} B\rangle_{\mathcal{h}}
	\end{align}
	by using \eqref{YX}, \eqref{XZ} and \eqref{XYY}.
	By substituting \eqref{f} into \eqref{3csac}, we obtain 
	\begin{align}\label{3CS}
		S_{3CS}& =\frac{k}{4 \pi}\int  \langle 2F- \alpha(B), C\rangle_{\mathcal{g}, \mathcal{l}} +\langle B, \Omega_2 \rangle_{\mathcal{h}}.
	\end{align}
	So, the 3CS theory can be described as a 5-d deformed 2BF theory in the case that $\mathcal{g}=\mathcal{l}$.
	
	Taking the variational derivative of the action $S_{3CS}$,  we obtain the equations of motion:
	\begin{align}
		&\delta A: dC + A \wedge^{\vartriangleright}C + B \wedge^{\{, \}}B=\Omega_3=0,\\
		&\delta B: dB + A \wedge^{\vartriangleright}B - \beta(C)=\Omega_2=0,\\
		&\delta C: dA + A \wedge A - \alpha(B)=\Omega_1=0.
	\end{align}
	Then  the solution of these equations gives a flat 3-connection $(A, B, C)$ on the trivial  principal 3-bundle, analogously to standard CS gauge theory.
	
	
	Similar considerations about Chern form apply to the 3CS gauge theory.
	Let us first introduce the \textbf{second 3-Chern form} defined by the 3-curvature $(\Omega_1, \Omega_2, \Omega_3)$
	\begin{align}
		P(\Omega_1, \Omega_2, \Omega_3) 
		= 2 \langle \Omega_1, \Omega_3 \rangle_{\mathcal{g}, \mathcal{l}} + \langle \Omega_2, \Omega_2 \rangle_{\mathcal{h}}.
	\end{align}
	The lemma follows will tell us the second 3-Chern form is 3-gauge invariant.
	\begin{lemma}
		$P(\Omega_1, \Omega_2, \Omega_3)= 2 \langle \Omega_1, \Omega_3 \rangle_{\mathcal{g}, \mathcal{l}} + \langle \Omega_2, \Omega_2 \rangle_{\mathcal{h}}$ is 3-gauge invariant under the general 3-gauge transformation \eqref{g3gt}.
	\end{lemma}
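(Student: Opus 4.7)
The plan is to substitute the transformation rules \eqref{3gtc} directly into $P(\overline{\Omega}_1, \overline{\Omega}_2, \overline{\Omega}_3) = 2\langle \overline{\Omega}_1, \overline{\Omega}_3\rangle_{\mathcal{g},\mathcal{l}} + \langle \overline{\Omega}_2, \overline{\Omega}_2\rangle_{\mathcal{h}}$, to expand multilinearly, and to sort the resulting terms by their order in the gauge parameters $\phi$ and $\psi$. Since $\phi$ and $\psi$ are independent, each such order must separately match the corresponding piece of the untransformed $P(\Omega_1,\Omega_2,\Omega_3)$. The overall strategy parallels the proof of Lemma \ref{lemma1} but with substantially more bookkeeping.

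At order zero I would use $G$-invariance of both bilinear forms to reduce $\langle g^{-1}\Omega_1 g, g^{-1}\vartriangleright\Omega_3\rangle_{\mathcal{g},\mathcal{l}}$ and $\langle g^{-1}\vartriangleright\Omega_2, g^{-1}\vartriangleright\Omega_2\rangle_{\mathcal{h}}$ to $\langle \Omega_1, \Omega_3\rangle_{\mathcal{g},\mathcal{l}}$ and $\langle \Omega_2, \Omega_2\rangle_{\mathcal{h}}$ respectively, which already accounts for $P(\Omega_1,\Omega_2,\Omega_3)$. The term linear in $\psi$, namely $-2\langle \overline{\Omega}_1, \overline{\Omega}_1\wedge^{\vartriangleright}\psi\rangle_{\mathcal{g},\mathcal{l}}$, I would handle by rewriting it via \eqref{XZ} as $-2\langle \overline{\Omega}_1\wedge^{[,]}\overline{\Omega}_1, \psi\rangle_{\mathcal{g},\mathcal{l}}$, which vanishes because $\overline{\Omega}_1$ is a $\mathcal{g}$-valued 2-form and $A\wedge^{[,]}A=0$ for even-degree algebra-valued forms.

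For the terms linear in $\phi$, after expanding $\overline{\Omega}_2 = g^{-1}\vartriangleright\Omega_2 + \overline{\Omega}_1\wedge^{\vartriangleright}\phi$ in the cross-term $-2\langle \overline{\Omega}_1, \overline{\Omega}_2\wedge^{\{,\}}\phi\rangle_{\mathcal{g},\mathcal{l}}$, three distinct $\phi$-linear contributions arise: two $\mathcal{g}\times\mathcal{l}$-pairings involving $\{\cdot,\cdot\}$ and one $\mathcal{h}$-pairing coming from expanding $\langle\overline{\Omega}_2,\overline{\Omega}_2\rangle_{\mathcal{h}}$. I would convert the Peiffer pairings into $\mathcal{h}$-pairings by applying \eqref{XYY} componentwise, namely $\langle X, \{Y_1,Y_2\}\rangle_{\mathcal{g},\mathcal{l}} = \tfrac{1}{2}\langle Y_2, X\vartriangleright Y_1\rangle_{\mathcal{h}}$, then use the symmetry \eqref{YX} of $\langle -, X\vartriangleright-\rangle_{\mathcal{h}}$ together with the Koszul signs produced by reordering the underlying scalar form factors to verify that the three $\phi$-linear pieces sum to zero.

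The main obstacle will be the terms quadratic in $\phi$, namely $-2\langle \overline{\Omega}_1, (\overline{\Omega}_1\wedge^{\vartriangleright}\phi)\wedge^{\{,\}}\phi\rangle_{\mathcal{g},\mathcal{l}} + \langle \overline{\Omega}_1\wedge^{\vartriangleright}\phi, \overline{\Omega}_1\wedge^{\vartriangleright}\phi\rangle_{\mathcal{h}}$, since both involve four scalar form factors whose reorderings generate several Koszul signs that must be tracked simultaneously. My plan is to rewrite the first term by applying \eqref{XYY} at the component level to obtain an $\mathcal{h}$-pairing of $\phi$ with $\overline{\Omega}_1\wedge^{\vartriangleright}(\overline{\Omega}_1\wedge^{\vartriangleright}\phi)$, and then to exploit the vanishing of $\overline{\Omega}_1\wedge^{[,]}\overline{\Omega}_1$ together with the antisymmetry of $\langle -,-\rangle_{\mathcal{h}}$ to match the result to the second quadratic term. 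Because this verification reduces to a long chain of sign-tracking manipulations rather than genuinely new algebra, in line with the practice elsewhere in this paper I would relegate the detailed computation to the appendix.
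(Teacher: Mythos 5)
Your proposal is correct and follows essentially the same route as the paper's proof: direct substitution of \eqref{3gtc}, $G$-invariance of the two pairings for the leading terms, conversion of the Peiffer-lift pairings into $\mathcal{h}$-pairings via \eqref{XYY}, and cancellation of the $\phi$- and $\psi$-dependent terms using \eqref{YX} and \eqref{XZ} (the paper merely organizes the computation by the two summands $2\langle\overline{\Omega}_1,\overline{\Omega}_3\rangle_{\mathcal{g},\mathcal{l}}$ and $\langle\overline{\Omega}_2,\overline{\Omega}_2\rangle_{\mathcal{h}}$ rather than by order in the gauge parameters). One small remark: the quadratic-in-$\phi$ cancellation you defer actually drops out directly from \eqref{XYY} followed by \eqref{YX}, so neither $\overline{\Omega}_1\wedge^{[,]}\overline{\Omega}_1=0$ nor the antisymmetry of $\langle-,-\rangle_{\mathcal{h}}$ is needed there; that identity is only needed (as you correctly use it) for the $\psi$-linear term.
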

	\begin{proof}
		Under the general 3-gauge transformation \eqref{g3gt}, have
		\begin{align}
			P(\overline{\Omega}_1, \overline{\Omega}_2, \overline{\Omega}_3)= 2 \langle \overline{\Omega}_1, \overline{\Omega}_3\rangle_{\mathcal{g}, \mathcal{l}} + \langle \overline{\Omega}_2, \overline{\Omega}_2 \rangle_{\mathcal{h}}. 
		\end{align}
		For the first section, have
		\begin{align}
			2 \langle \overline{\Omega}_1, \overline{\Omega}_3\rangle_{\mathcal{g}, \mathcal{l}}
			&=2 \langle g^{-1}\Omega_1g, g^{-1}\vartriangleright \Omega_3 - ( g^{-1}\vartriangleright \Omega_2 + (g^{-1}\Omega_1 g)\wedge^{\vartriangleright}\phi) \wedge^{\{, \}} \phi \nonumber\\
			&\ \ \ + \phi \wedge^{\{, \}} (g^{-1}\vartriangleright \Omega_2) - (g^{-1}\Omega_1 g)\wedge^{\vartriangleright} \psi \rangle_{\mathcal{g}, \mathcal{l}}\nonumber\\
			&= 2\langle \Omega_1, \Omega_3 \rangle_{\mathcal{g}, \mathcal{l}} - 2 \langle g^{-1} \Omega_1 g, ( g^{-1}\vartriangleright \Omega_2 + (g^{-1}\Omega_1 g)\wedge^{\vartriangleright}\phi) \wedge^{\{, \}} \phi  \rangle_{\mathcal{g}, \mathcal{l}}\nonumber\\
			&\ \ \ + 2\langle g^{-1} \Omega_1 g, \phi \wedge^{\{, \}} (g^{-1}\vartriangleright \Omega_2)\rangle_{\mathcal{g}, \mathcal{l}}
			- 2\langle g^{-1} \Omega_1 g, (g^{-1}\Omega_1 g)\wedge^{\vartriangleright} \psi \rangle_{\mathcal{g}, \mathcal{l}}\nonumber\\
			&= 2\langle \Omega_1, \Omega_3 \rangle_{\mathcal{g}, \mathcal{l}} + \langle\phi, (g^{-1}\vartriangleright \Omega_1)\wedge^{\vartriangleright}(g^{-1}\vartriangleright \Omega_2 + (g^{-1}\Omega_1g)\wedge^{\vartriangleright}\phi)\rangle_{\mathcal{h}} \nonumber\\
			&\ \ \ - \langle g^{-1} \vartriangleright\Omega_2, (g^{-1}\Omega_1g)\wedge^{\vartriangleright}\phi \rangle_{\mathcal{h}} -2\langle \Omega_1, \Omega_1\wedge^{\vartriangleright}\psi \rangle_{\mathcal{g}, \mathcal{l}}\nonumber\\
			&= 2\langle \Omega_1, \Omega_3 \rangle_{\mathcal{g}, \mathcal{l}} + \langle \phi, (g^{-1}\Omega_1g)\wedge^{\vartriangleright}(g^{-1}\vartriangleright \Omega_2)\rangle_{\mathcal{h}} - \langle g^{-1}\vartriangleright \Omega_2, (g^{-1}\Omega_1g)\wedge^{\vartriangleright}\phi \rangle_{\mathcal{h}},
		\end{align}
		and for the second section, 
		\begin{align}
			\langle \overline{\Omega}_2, \overline{\Omega}_2 \rangle_{\mathcal{h}} &= \langle g^{-1}\vartriangleright \Omega_2 + (g^{-1}\Omega_1g)\wedge^{\vartriangleright}\phi,  g^{-1}\vartriangleright \Omega_2 + (g^{-1}\Omega_1g)\wedge^{\vartriangleright}\phi
			\rangle_{\mathcal{h}}\nonumber\\
			&=\langle \Omega_2, \Omega_2 \rangle_{\mathcal{h}} + \langle g^{-1}\vartriangleright \Omega_2, (g^{-1}\Omega_1g)\wedge^{\vartriangleright}\phi \rangle_{\mathcal{h}}+ \langle (g^{-1}\Omega_1g)\wedge^{\vartriangleright}\phi, g^{-1}\vartriangleright \Omega_2 \rangle_{\mathcal{h}}\nonumber\\
			&\ \ \ +\langle \Omega_1\wedge^{\vartriangleright}\phi, \Omega_1 \wedge^{\vartriangleright}\phi \rangle_{\mathcal{h}}\nonumber\\
			&= \langle \Omega_2, \Omega_2 \rangle_{\mathcal{h}} - \langle \phi, (g^{-1} \Omega_1g)\wedge^{\vartriangleright}(g^{-1}\vartriangleright \Omega_2)\rangle_{\mathcal{h}} + \langle g^{-1}\vartriangleright \Omega_2, (g^{-1}\Omega_1g)\wedge^{\vartriangleright}\phi \rangle_{\mathcal{h}}
		\end{align}
		by using \eqref{YX}, \eqref{XXY} and \eqref{XYY}.Then
		\begin{align}
			P(\overline{\Omega}_1, \overline{\Omega}_2, \overline{\Omega}_3)= P(\Omega_1, \Omega_2, \Omega_3).
		\end{align}
	\end{proof}
	
	The same manipulations work apply to the second 3-Chern form, having 
	\begin{align}
		S= \int_{M} 2 \langle \Omega_1, \Omega_3 \rangle_{\mathcal{g}, \mathcal{l}} + \langle \Omega_2, \Omega_2 \rangle_{\mathcal{h}},
	\end{align}
	and varying this action with respect to $A$, $B$ and $C$ gives
	\begin{align}
		\delta S = & 2\int_{M} \langle \delta \Omega_1, \Omega_3 \rangle_{\mathcal{g},\mathcal{l}} + \langle \Omega_1, \delta \Omega_3 \rangle_{\mathcal{g},\mathcal{l}} + \langle \delta \Omega_2, \Omega_2 \rangle_{ \mathcal{h}}\nonumber\\
		=& 2\int_{M}\langle d \delta A + A \wedge^{[, ]}\delta A - \alpha(\delta B), \Omega_3 \rangle_{\mathcal{g}, \mathcal{l}} + \langle \Omega_1, d \delta C + \delta A \wedge^{\vartriangleright}C + A \wedge^{\vartriangleright}\delta C + \delta B \wedge^{\{, \}}B \nonumber\\
		&+ B \wedge^{\{, \}}\delta B \rangle_{\mathcal{g}, \mathcal{l}} + \langle d \delta B + \delta A \wedge^{\vartriangleright}B + A \wedge^{\vartriangleright}\delta B - \beta(\delta C), \Omega_2 \rangle_{\mathcal{h}}\nonumber\\
		=&2\int_{M} \langle \delta A, d \Omega_3 \rangle_{\mathcal{g}, \mathcal{l}} + \langle \delta A, A \wedge^{\vartriangleright}\Omega_3 \rangle_{\mathcal{g}, \mathcal{l}} + \langle \beta(\Omega_3), \delta B \rangle_{\mathcal{h}} -\langle d\Omega_1, \delta C \rangle_{\mathcal{g}, \mathcal{l}} -\langle \delta A, \Omega_1 \wedge^{\vartriangleright}C \rangle_{\mathcal{g}, \mathcal{l}}\nonumber\\
		&- \langle A\wedge^{[, ]}\Omega_1, \delta C \rangle_{\mathcal{g}, \mathcal{l}} + \frac{1}{2}\langle B, \Omega_1 \wedge^{\vartriangleright}\delta B \rangle_{\mathcal{h}} + \frac{1}{2}\langle \delta B, \Omega_1 \wedge^{\vartriangleright} B \rangle_{\mathcal{h}} - \langle \delta B, d \Omega_2 \rangle_{\mathcal{h}} \nonumber\\
		&+ \langle \delta A\wedge^{\vartriangleright}\Omega_2, B \rangle_{\mathcal{h}} + \langle A \wedge^{\vartriangleright}\Omega_2, \delta B \rangle_{\mathcal{h}} - \langle \alpha(\Omega_2), \delta C \rangle_{\mathcal{g}, \mathcal{l}},\label{eS}
	\end{align}
	by using \eqref{XZ}--\eqref{XYY} and \eqref{YX}.
	Note that 
	\begin{align}
		\langle B, \Omega_1 \wedge^{\vartriangleright}\delta B \rangle_{\mathcal{h}} &= \langle \delta B, \Omega_1 \wedge^{\vartriangleright}B \rangle_{\mathcal{h}},\label{e1}\\
		\langle \delta A\wedge^{\vartriangleright}\Omega_2, B \rangle_{\mathcal{h}} &= - 2\langle \delta A, \Omega_2 \wedge^{\{, \}}B \rangle_{\mathcal{g}, \mathcal{l}}\label{e2},\\
		\langle B, \delta A \wedge^{\vartriangleright}\Omega_2 \rangle_{\mathcal{h}} &= - \langle \Omega_2, \delta A \wedge^{\vartriangleright}B \rangle_{\mathcal{h}} =2 \langle \delta A, B \wedge^{\{, \}}\Omega_2 \rangle_{\mathcal{g}, \mathcal{l}},\label{e3}
	\end{align}
	by using \eqref{YX} and \eqref{XYY}.
	Substitute \eqref{e1}--\eqref{e3} to \eqref{eS} to get
	\begin{align}
		\delta S = &2\int_{M}\langle \delta A, 	d \Omega_3 + A \wedge^{\vartriangleright}\Omega_3 - \Omega_1\wedge^{\vartriangleright}C - B\wedge^{\{, \}}\Omega_2 - \Omega_2 \wedge^{\{, \}}B \rangle_{\mathcal{g}, \mathcal{l}} \nonumber\\
		&- \langle \delta B, d \Omega_2 + A \wedge^{\vartriangleright}\Omega_2 - \Omega_1 \wedge^{\vartriangleright}B + \beta(\Omega_3)\rangle_{\mathcal{h}} - \langle d \Omega_1 + A \wedge^{[, ]} \Omega_1 + \alpha(\Omega_2), \delta C \rangle_{\mathcal{g}, \mathcal{l}}.
	\end{align}
	It is apparent from the above equation that $\delta S=0$ for all $(A, B, C)$ by using the 3-Bianchi Identities \eqref{3bi}. Then the action also gives completely trivial equations.
	
	\begin{theorem}[3-Chern-Weil theorem]
		The second 3-Chern form $P(\Omega_1, \Omega_2, \Omega_3)$ satisfies
		\begin{itemize}
			\item[1).] $P(\Omega_1, \Omega_2, \Omega_3)$ is closed, i.e. $d P(\Omega_1, \Omega_2, \Omega_3)=0$;
			\item[2).] $P(\Omega_1, \Omega_2, \Omega_3)$ has topologically invariant integral. Namely, it satisfies the 3-Chern-Weil homomorphism formula:
			\begin{align}\label{PPQQ}
				P( \Omega^1_1, \Omega^1_2, \Omega^1_3) - P (\Omega^0_1, \Omega^0_2, \Omega^0_3)=d Q(A^1, A^0, B^1, B^0, C^1, C^0),
			\end{align}
			with 
			\begin{align}\label{Q43}
				Q(A^1, A^0, B^1, B^0, C^1, C^0)= 2 \int_{0}^{1} \langle A^1- A^0, \Omega^t_3\rangle_{\mathcal{g}, \mathcal{l}} + \langle \Omega^t_1, C^1 - C^0\rangle_{\mathcal{g}, \mathcal{l}} + \langle B^1 - B^0, \Omega^t_2\rangle_{\mathcal{h}}dt,
			\end{align}
			where $(A^0, B^0, C^0)$ and $(A^1, B^1, C^1)$ are two 3-connections, $(\Omega^0_1, \Omega^0_2, \Omega^0_3)$ and $(\Omega^1_1, \Omega^1_2, \Omega^1_3)$ the corresponding 3-curvatures,
			\begin{align}
				&A^t = A^0 + t \eta, \ \ \eta= A^1- A^0,\\
				&B^t = B^0 + t \overline{\eta},\ \ \overline{\eta}= B^1- B^0,\\
				&C^t = C^0 + t \tilde{\eta}, \ \ \tilde{\eta}= C^1- C^0, \ \ \ (0 \leq t \leq 1)
			\end{align}
			the interpolation between $(A^0, B^0, C^0)$ and $(A^1, B^1, C^1)$,
			\begin{align}\label{ABC}
				\Omega^t_1 &= d A^t + A^t \wedge A^t - \alpha(B^t),\\
				\Omega^t_2 &= dB^t + A^t \wedge^{\vartriangleright}B^t-\beta(C^t),\nonumber\\
				\Omega^t_3&= dC^t + A^t \wedge^{\vartriangleright}C^t + B^t \wedge^{\{, \}} B^t,
			\end{align}
			the 3-curvature of this interpolation.
		\end{itemize}
	\end{theorem}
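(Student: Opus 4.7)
The proof will follow the same template used for Theorem 2.1 and the 2-Chern-Weil theorem, now applied to the richer algebraic data of a differential 2-crossed module. The structural idea is: for part (1), differentiate and funnel every term through the 3-Bianchi identities \eqref{BI1}--\eqref{BI3}, then collapse what remains using the invariance identities \eqref{XXY}, \eqref{XZ}, \eqref{YZ} and \eqref{XYY}; for part (2), compute the $t$-derivative of $P(\Omega^t_1, \Omega^t_2, \Omega^t_3)$ along the straight-line interpolation and show it equals $d$ applied to the integrand of $Q$, so that integration in $t$ from $0$ to $1$ yields the desired exact form.

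For (1), I first write
\begin{align*}
\tfrac{1}{2}dP = \langle d\Omega_1, \Omega_3\rangle_{\mathcal{g}, \mathcal{l}} + \langle \Omega_1, d\Omega_3\rangle_{\mathcal{g}, \mathcal{l}} + \langle d\Omega_2, \Omega_2\rangle_{\mathcal{h}}.
\end{align*}
Substituting the three Bianchi identities produces a sum in which each term contains either a connection wedged with a curvature through $\vartriangleright$, $\{,\}$, $\alpha$ or $\beta$. The plan is to move the connection factor across the pairing using the invariance rules: terms with $A\wedge^{[,]}\Omega_1$ and $A\wedge^{\vartriangleright}\Omega_3$ cancel via \eqref{XZ}; terms containing $\alpha(\Omega_2)$ paired against $\Omega_3$ are traded, via \eqref{YZ}, for $\langle \beta(\Omega_3),\Omega_2\rangle_{\mathcal{h}}$ and then cancel with the $\beta(\Omega_3)$ piece from the second Bianchi identity paired with $\Omega_2$; the terms with $A\wedge^{\vartriangleright}\Omega_2$ paired in $\mathcal{h}$ cancel by \eqref{YX}; and the remaining $\Omega_1\wedge^{\vartriangleright}B$, $\Omega_1\wedge^{\vartriangleright}C$ and $B\wedge^{\{,\}}\Omega_2$ type contributions from the second and third Bianchi identities cancel pairwise using \eqref{XYY}, which exchanges the Peiffer lifting for a $\vartriangleright$ action. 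This is essentially the same bookkeeping that works in Lemma 6.2's proof that $dP(\Omega_1,\Omega_2)$ vanishes, but with the Peiffer-lifted terms accounted for.

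For (2), differentiating in $t$ the expressions \eqref{ABC} yields
\begin{align*}
\tfrac{d}{dt}\Omega^t_1 &= d\eta + A^t\wedge^{[,]}\eta - \alpha(\bar\eta), \\
\tfrac{d}{dt}\Omega^t_2 &= d\bar\eta + A^t\wedge^{\vartriangleright}\bar\eta + \eta\wedge^{\vartriangleright}B^t - \beta(\tilde\eta), \\
\tfrac{d}{dt}\Omega^t_3 &= d\tilde\eta + A^t\wedge^{\vartriangleright}\tilde\eta + \eta\wedge^{\vartriangleright}C^t + B^t\wedge^{\{,\}}\bar\eta + \bar\eta\wedge^{\{,\}}B^t.
\end{align*}
Plugging these into $\tfrac{d}{dt}P(\Omega^t_1,\Omega^t_2,\Omega^t_3)$ gives one group of ``Leibniz'' terms containing $d\eta, d\bar\eta, d\tilde\eta$ and a second group of ``connection'' terms. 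Independently I would compute $d$ of the proposed primitive
\begin{align*}
R^t = 2\langle \eta, \Omega^t_3\rangle_{\mathcal{g},\mathcal{l}} + 2\langle \Omega^t_1, \tilde\eta\rangle_{\mathcal{g},\mathcal{l}} + 2\langle \bar\eta, \Omega^t_2\rangle_{\mathcal{h}},
\end{align*}
and on $dR^t$ apply the 3-Bianchi identities at the connection $(A^t,B^t,C^t)$ to replace $d\Omega^t_i$. The same invariance identities \eqref{XXY}--\eqref{XYY} and \eqref{YX} used in part (1) should then match $dR^t$ term-for-term with $\tfrac{d}{dt}P$, giving $\tfrac{d}{dt}P(\Omega^t_1,\Omega^t_2,\Omega^t_3) = dR^t$. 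Integrating in $t\in[0,1]$ and recognising the integral of $R^t$ as $Q(A^1,A^0,B^1,B^0,C^1,C^0)$ in \eqref{Q43} yields \eqref{PPQQ}.

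The main obstacle will be the bookkeeping in part (2) around the cross-terms mixing the Peiffer lifting with the $\vartriangleright$-action, specifically the $B^t\wedge^{\{,\}}\bar\eta + \bar\eta\wedge^{\{,\}}B^t$ contribution to $\tfrac{d}{dt}\Omega^t_3$ paired with $\Omega^t_1$ on the one hand, and the $\eta\wedge^{\vartriangleright}B^t$ contribution to $\tfrac{d}{dt}\Omega^t_2$ paired with $\Omega^t_2$ on the other. These have to be rewritten via \eqref{XYY} into common form so that the $\mathcal{h}$- and $(\mathcal{g},\mathcal{l})$-pairings talk to each other; sign-tracking under the graded commutativity $A\wedge^{[,]}A'=A\wedge A'-(-1)^{k_1k_2}A'\wedge A$ in these mixed-degree wedge products is where the calculation can easily derail. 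I would handle this by isolating all Peiffer-lifting contributions first, reducing each to a single canonical $\mathcal{h}$-pairing via \eqref{XYY}, and only then comparing with the Leibniz side.
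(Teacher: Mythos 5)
Your proposal is correct and follows essentially the same route as the paper's proof: for 1) substitute the 3-Bianchi identities into $dP$ and cancel via the invariance identities \eqref{YX}--\eqref{XYY}, and for 2) compute the same $t$-derivatives of $(\Omega^t_1,\Omega^t_2,\Omega^t_3)$, show $\tfrac{d}{dt}P(\Omega^t_1,\Omega^t_2,\Omega^t_3)=2d\langle \eta,\Omega^t_3\rangle_{\mathcal{g},\mathcal{l}}+2d\langle \Omega^t_1,\tilde{\eta}\rangle_{\mathcal{g},\mathcal{l}}+2d\langle \overline{\eta},\Omega^t_2\rangle_{\mathcal{h}}$ using the Bianchi identities at $(A^t,B^t,C^t)$ together with \eqref{XXY}--\eqref{XYY} and \eqref{YX}, and integrate in $t$. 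The paper additionally records an alternative direct verification of 2) in its appendix, but its main argument coincides with yours.
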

	\begin{proof}
		\begin{itemize}
			\item[1).]
			Using the 3-Bianchi Identities \eqref{3bi}, have
			\begin{align}
				& \frac{1}{2}d P(\Omega_1, \Omega_2, \Omega_3)\nonumber\\
				=&\langle d \Omega_1, \Omega_3\rangle_{\mathcal{g}, \mathcal{l}} + \langle \Omega_1 d \Omega_3 \rangle_{\mathcal{g}, \mathcal{l}} + \langle d\Omega_2, \Omega_2\rangle_{\mathcal{h}}\nonumber\\
				=&\langle A \wedge^{[, ]}\Omega_1 + \alpha(\Omega_2), \Omega_3 
				\rangle_{\mathcal{g}, \mathcal{l}} + \langle \Omega_1, A \wedge^{\vartriangleright} \Omega_3 - \Omega_1\wedge^{\vartriangleright}C - B \wedge^{\{, \}}\Omega_2 - \Omega_2 \wedge^{\{, \}}B \rangle_{\mathcal{g}, \mathcal{l}}\nonumber\\
				& + \langle A \wedge^{\vartriangleright}\Omega_2 - \Omega_1 \wedge^{\vartriangleright}B + \beta(\Omega_3), \Omega_2 \rangle_{\mathcal{h}}\nonumber\\
				=&-\langle \Omega_1, A \wedge^{\vartriangleright}\Omega_3\rangle_{\mathcal{g}, \mathcal{l}} - \langle \beta(\Omega_3), \Omega_2 \rangle_{\mathcal{h}} +  \langle \Omega_1, A \wedge^{\vartriangleright} \Omega_3 - \Omega_1\wedge^{\vartriangleright}C \rangle_{\mathcal{g}, \mathcal{l}} - \frac{1}{2}\langle \Omega_2, \Omega_1 \wedge^{\vartriangleright}B\rangle_{\mathcal{h}} \nonumber\\
				&-\frac{1}{2}\langle B, \Omega_1\wedge^{\vartriangleright}\Omega_2 \rangle_{\mathcal{h}} + \langle A \wedge^{\vartriangleright}\Omega_2, \Omega_2 \rangle_{\mathcal{h}} + \langle \Omega_2, \Omega_1 \wedge^{\vartriangleright}B \rangle_{\mathcal{h}} + \langle \beta(\Omega_3), \Omega_2 \rangle_{\mathcal{h}}\nonumber\\
				=0
			\end{align}
			by using \eqref{YX}--\eqref{XYY}.
			\item[2).] Differentiating $\Omega^t_1$, $\Omega^t_2$ and $\Omega^t_3$ with respect to $t$ gives 
			\begin{align}
				\frac{d}{dt}\Omega^t_1 = d \eta + A^t \wedge^{[, ]}\eta - \alpha(\overline{\eta}),\\
				\frac{d}{dt}\Omega^t_2 = d \overline{\eta} + A^t \wedge^{\vartriangleright} \overline{\eta} + \eta \wedge^{\vartriangleright}B^t- \beta(\tilde{\eta}),\\
				\frac{d}{dt}\Omega^t_3 = d \tilde{\eta} + A^t \wedge^{\vartriangleright}\tilde{\eta} + \eta \wedge^{\vartriangleright}C^t + B^t \wedge^{\{, \}}\overline{\eta} + \overline{\eta} \wedge^{\{, \}}B^t.
			\end{align}
			Then, have
			\begin{align}
				\frac{d}{dt}P(\Omega^t_1, \Omega^t_2, \Omega^t_3)=&2 \langle \frac{d}{dt}\Omega^t_1, \Omega^t_3\rangle_{\mathcal{g}, \mathcal{l}} + 2\langle \Omega^t_1, 	\frac{d}{dt}\Omega^t_3\rangle_{\mathcal{g}, \mathcal{l}} +2\langle \frac{d}{dt} \Omega^t_2,  \Omega^t_2\rangle_{\mathcal{h}}\nonumber\\
				=&2 \langle d \eta + A^t \wedge^{[, ]}\eta - \alpha(\overline{\eta}), \Omega^t_3\rangle_{\mathcal{g}, \mathcal{l}} + 2\langle \Omega^t_1, d \tilde{\eta} + A^t \wedge^{\vartriangleright}\tilde{\eta} + \eta \wedge^{\vartriangleright}C^t \nonumber\\
				&+ B^t \wedge^{\{, \}}\overline{\eta} + \overline{\eta} \wedge^{\{, \}}B^t\rangle_{\mathcal{g}, \mathcal{l}} +2\langle d \overline{\eta} + A^t \wedge^{\vartriangleright} \overline{\eta} + \eta \wedge^{\vartriangleright}B^t- \beta(\tilde{\eta}), \Omega^t_2\rangle_{\mathcal{h}}.
			\end{align}
			On the other hand, we have
			\begin{align}
				2 d \langle \eta, \Omega^t_3 \rangle_{\mathcal{g}, \mathcal{l}}&= 2\langle d \eta, \Omega^t_3 \rangle_{\mathcal{g}, \mathcal{l}} - 2\langle \eta, d \Omega^t_3 \rangle_{\mathcal{g}, \mathcal{l}}\nonumber\\
				&=2\langle d \eta, \Omega^t_3 \rangle_{\mathcal{g}, \mathcal{l}} - 2\langle \eta, \Omega^t_1 \wedge^{\vartriangleright}C^t + B^t \wedge^{\{, \}}\Omega^t_2 + \Omega^t_2 \wedge^{\{, \}}B^t - A^t \wedge^{\vartriangleright} \Omega^t_3 \rangle_{\mathcal{g}, \mathcal{l}}\nonumber\\
				&=2\langle d \eta + A^t \wedge^{[, ]}\eta, \Omega^t_3 \rangle_{\mathcal{g}, \mathcal{l}} + 2\langle \Omega^t_1, \eta \wedge^{\vartriangleright} C^t \rangle_{\mathcal{g}, \mathcal{l}} + 2\langle \Omega^t_2, \eta \wedge^{\vartriangleright} B^t \rangle_{\mathcal{h}},\label{6}\\
				2 d \langle \Omega^t_1, \tilde{\eta} \rangle_{\mathcal{g}, \mathcal{l}}&=2 \langle d\Omega^t_1, \tilde{\eta} \rangle_{\mathcal{g}, \mathcal{l}} + 2 \langle \Omega^t_1, d\tilde{\eta} \rangle_{\mathcal{g}, \mathcal{l}}\nonumber\\
				&=-2\langle A^t \wedge^{[, ]}\Omega^t_1 + \alpha(\Omega^t_2), \tilde{\eta} \rangle_{\mathcal{g}, \mathcal{l}} + 2 \langle \Omega^t_1, d\tilde{\eta} \rangle_{\mathcal{g}, \mathcal{l}}\nonumber\\
				&=2\langle \Omega^t_1, A^t \wedge^{\vartriangleright}\tilde{\eta} \rangle_{\mathcal{g}, \mathcal{l}} - 2 \langle \beta(\tilde{\eta}), \Omega^t_2 \rangle_{\mathcal{h}} + 2 \langle \Omega^t_1, d\tilde{\eta} \rangle_{\mathcal{g}, \mathcal{l}},\label{66}\\
				2 d\langle \overline{\eta}, \Omega^t_2 \rangle_{\mathcal{h}}&=2 \langle d\overline{\eta}, \Omega^t_2 \rangle_{\mathcal{h}} + 2 \langle \overline{\eta}, d\Omega^t_2 \rangle_{\mathcal{h}}\nonumber\\
				&=2 \langle d\overline{\eta}, \Omega^t_2 \rangle_{\mathcal{h}} +2 \langle \overline{\eta}, \Omega^t_1 \wedge^{\vartriangleright}B^t - A^t \wedge^{\vartriangleright}\Omega^t_2 - \beta(\Omega^t_3)\rangle_{\mathcal{h}}\nonumber\\
				&=2\langle d \overline{\eta} + A^t \wedge^{\vartriangleright}\overline{\eta}, \Omega^t_2 \rangle_{\mathcal{h}} + 2\langle \Omega^t_1, B^t \wedge^{\{, \}}\overline{\eta} + \overline{\eta} \wedge^{\{, \}}B^t \rangle_{\mathcal{h}} - 2\langle \alpha(\overline{\eta}), \Omega^t_3 \rangle_{\mathcal{g}, \mathcal{l}},\label{666}
			\end{align}
			by using \eqref{YX}--\eqref{XYY} and the 3-Bianchi Identities \eqref{3bi}. Comparing \eqref{6}, \eqref{66} and \eqref{666}, we obtain
			\begin{align}
				\frac{d}{dt}P(\Omega^t_1, \Omega^t_2, \Omega^t_3)=2 d \langle \eta, \Omega^t_3 \rangle_{\mathcal{g}, \mathcal{l}} + 2 d \langle \Omega^t_1, \tilde{\eta} \rangle_{\mathcal{g}, \mathcal{l}} +	2 d\langle \overline{\eta}, \Omega^t_2 \rangle_{\mathcal{h}},
			\end{align}
			and integrating the both sides of the this equation of $t$ between the limits $0$ and $1$, we have
			\begin{align}
				P( \Omega^1_1, \Omega^1_2, \Omega^1_3) - P (\Omega^0_1, \Omega^0_2, \Omega^0_3)&=2d \int_{0}^{1}  \langle \eta, \Omega^t_3 \rangle_{\mathcal{g}, \mathcal{l}} +  \langle \Omega^t_1, \tilde{\eta} \rangle_{\mathcal{g}, \mathcal{l}} +\langle \overline{\eta}, \Omega^t_2 \rangle_{\mathcal{h}}\nonumber\\
				&=d Q(A^1, A^0, B^1, B^0, C^1, C^0).
			\end{align}
			There is an alternative proof of $2)$ given in Appendix \ref{23cw}.
		\end{itemize}
	\end{proof}
	
	This shows that $P( \Omega^1_1, \Omega^1_2, \Omega^1_3)$ and $P( \Omega^0_1, \Omega^0_2, \Omega^0_3)$ differ by an exact form. Namely, their integrals over 6-d manifolds without boundary give the same results, and we call $Q(A^1, A^0, B^1, B^0, C^1, C^0)$ the secondary 3-topological class.
	
	In particular, consider $A^0=0$, $B^0=0$, $C^0=0$, $A^1=A$, $B^1=B$,and $C^1 = C$ in \eqref{PPQQ}, 
	and we have
	\begin{align}
		2 \langle \Omega_1, \Omega_3 \rangle_{\mathcal{g}, \mathcal{l}} + \langle \Omega_2, \Omega_2 \rangle_{\mathcal{h}}= d Q_{3CS},
	\end{align}
	with 
	\begin{align}
		Q_{3CS}=&2 \int_{0}^{1}\langle A, t dC+t^2 A \wedge^{\vartriangleright}C + t^2 B \wedge^{\{, \}}B \rangle_{\mathcal{g}, \mathcal{l}} + \langle t dA + t^2 A \wedge A - t \alpha(B), C \rangle_{\mathcal{g}, \mathcal{l}}\nonumber\\
		& + \langle B, t dB + t^2 A \wedge^{\vartriangleright}B - t \beta(C)\rangle_{\mathcal{h}}\nonumber\\
		=& \langle A, dC + \frac{2}{3}A \wedge^{\vartriangleright}C + \frac{2}{3}B \wedge^{\{, \}}B\rangle_{\mathcal{g}, \mathcal{l}} + \langle dA -\alpha(B) + \frac{2}{3}A \wedge A, C\rangle_{\mathcal{g}, \mathcal{l}} \nonumber\\
		&+ \langle B, d B - \beta(C) + \frac{2}{3}A \wedge^{\vartriangleright}B\rangle_{\mathcal{h}}.
	\end{align}
	We call $Q_{3CS}$ the \textbf{3-Chern-Simons form}.
	Then we obtain the 3CS action on the 5-dimensional manifold $M$
	\begin{align}
		S_{3CS}=&\frac{k}{4 \pi}\int_{M} \langle A, dC + \frac{2}{3}A \wedge^{\vartriangleright}C + \frac{2}{3}B \wedge^{\{, \}}B\rangle_{\mathcal{g}, \mathcal{l}} + \langle dA -\alpha(B) + \frac{2}{3}A \wedge A, C\rangle_{\mathcal{g}, \mathcal{l}} \nonumber\\
		&+ \langle B, d B - \beta(C) + \frac{2}{3}A \wedge^{\vartriangleright}B\rangle_{\mathcal{h}}.
	\end{align}

	%

	For these reasons, by its analogy to the standard CS
	theory and as implied by its given name, the present model can be legitimately considered a 3CS gauge theory.
	
	\section{Conclusion and outlook}
	
	In this article, we constructed the generalized differential forms valued in the differential (2-)crossed modules by using the GDC, and developed the generalized connections which consist of the higher connections. Based on the ordinary CS gauge theory, we established the 2CS and 3CS gauge theories. Finally, we generalized the second Chern form to second 2-Chern form and 3-Chern form, which satisfy the corresponding 2-Chern-Weil theorem and 3-Chern-Weil theorem, respectively.
	
	Likewise, the higher CS gauge theories may be studied using a certain kind of functional integral
	quantization which exists in the ordinary CS gauge theory. We leave the argument for future work.

	\section*{Acknowledgment}
	
	This work is supported by the National Natural Science Foundation of China (Nos.11871350, NSFC no. 11971322).
	
	The authors would like to thank the anonymous referee and editor for their valuable comments and suggestions which helped us improve the paper.
	
	\appendix

	\section{Lie  (2-)crossed modules and  differential (2-)crossed modules} \label{2cm}
	In this appendix, we collect a number of basic definitions and relations in order to define our terminology and notation and for reference throughout in
	the text. In order not to introduce additional symbols, we use some same symbols in the higher algebras. See  \cite{Radenkovic:2019qme, Martins:2010ry, TRMV1, YHMNRY, Mutlu1998FREENESSCF} for more details. The definitions and properties of the ordinary differential forms with values in  differential crossed module and 2-crossed module have been introduced in our previous paper \cite{SDH} and  related work \cite{doi:10.1063/1.4870640}.

	\textbf{Lie pre-crossed modules.}
	A Lie pre-crossed module $\left(H,G;\alpha,\vartriangleright \right)$ is given by a Lie group map $\alpha: H  \longrightarrow G$ together with a smooth left action $\vartriangleright$ of G on H by automorphisms such that:
	\begin{equation}
		\alpha \left(g \vartriangleright h\right) = g \alpha \left(h\right) g^{-1},\label{equi}
	\end{equation}
	for each $g \in G $ and $h \in H$. The Peiffer commutators in a pre-crossed module are defined as $\lbrack\lbrack \cdot , \cdot \rbrack\rbrack : H \times H \longrightarrow H $ by
	\begin{equation}
		\lbrack\lbrack h , h' \rbrack\rbrack = h h' h^{-1}\left(\alpha \left(h\right) \vartriangleright h'^{-1}\right),
	\end{equation}
	for any $h, h' \in H$.
	
	\textbf{Lie crossed modules (or a strict Lie 2-groups).}
	A Lie pre-crossed module $\left(H,G;\alpha,\vartriangleright \right)$ is said to be  a crossed module, if all of its Peiffer commutators are trivial, which is to say that
	\begin{equation}
		\alpha \left(h\right) \vartriangleright h' = h h' h^{-1} ,\label{peiffer}
	\end{equation}
	for each $h, h' \in H$. Note that the map $\left(h_1 , h_2\right) \in H \times H \longrightarrow \lbrack\lbrack h , h' \rbrack\rbrack \in H $, called the Peiffer pairing, is G-equivariant 
	\begin{equation}
		g \vartriangleright \lbrack\lbrack h_1 , h_2 \rbrack\rbrack = \lbrack\lbrack g \vartriangleright h_1 , g \vartriangleright h_2 \rbrack\rbrack ,
	\end{equation}
	for each $ h_1 , h_2 \in H$ and $g \in G$. Moreover $\lbrack\lbrack h_1 , h_2 \rbrack\rbrack =1_H $ if either $h_1$ or $h_2$ is $1_H$.
	And	\eqref{equi} and \eqref{peiffer} are called equivariance and Peiffer	properties of the crossed module, respectively.

	\textbf{Differential pre-crossed modules.} 
	A differential pre-crossed module $(\mathcal{h}, \mathcal{g}; \alpha, \vartriangleright)$ is given by a Lie algebra map $\alpha : \mathcal{h} \longrightarrow \mathcal{g}$ together with a left action $\vartriangleright$ of $\mathcal{g}$ on $\mathcal{h}$ by derivations such that:
	\begin{equation}\label{XxY}
		\alpha(X\vartriangleright Y ) =\left[ X, \alpha(Y) \right], 
	\end{equation}
	for each $X \in \mathcal{g}$ and $ Y \in \mathcal{h}$.
	The Peiffer commutators in a differential pre-crossed module are defined as $\lbrack\lbrack , \rbrack\rbrack : \mathcal{h} \times \mathcal{h} \longrightarrow \mathcal{h} $ by
	\begin{equation}\label{jkh}
		\lbrack\lbrack Y , Y'\rbrack\rbrack = \left[Y,Y'\right] - \alpha(Y)\vartriangleright Y',
	\end{equation}
	for each $Y, Y' \in \mathcal{h}$.
	
	\textbf{Differential crossed modules (or a strict Lie 2-algebras).}
	A differential pre-crossed module $(\mathcal{h}, \mathcal{g}; \alpha, \vartriangleright)$ is said to be  a differential crossed module, if all of its Peiffer commutators vanish, which is to say that: 
	\begin{equation}\label{YyY'}
		\alpha(Y)\vartriangleright Y'=\left[Y,Y'\right],
	\end{equation}
	for each $Y,Y'\in \mathcal{h}$. Note that the map $(Y_1,Y_2) \in \mathcal{h} \times \mathcal{h} \longrightarrow 	\lbrack\lbrack Y_1 , Y_2\rbrack\rbrack  \in \mathcal{h}$, called the Peiffer pairing, is $\mathcal{g}$-equivariant:
	\begin{equation}
		X \vartriangleright \lbrack\lbrack Y_1 , Y_2\rbrack\rbrack = \lbrack\lbrack X \vartriangleright Y_1 , Y_2 \rbrack\rbrack +\lbrack\lbrack  Y_1 , X \vartriangleright Y_2 \rbrack\rbrack ,
	\end{equation}
	for each $X \in \mathcal{g}$, and $Y_1,Y_2 \in \mathcal{h}$. Moreover the map $(X,Y)\in \mathcal{g} \times \mathcal{h} \longrightarrow X \vartriangleright Y \in \mathcal{h}$ is necessarily bilinear, and we have
	\begin{equation}\label{XY_1Y_2}
		X \vartriangleright \left[Y_1 ,  Y_2\right]  = \left[ X \vartriangleright Y_1 , Y_2\right] +\left[  Y_1 , X \vartriangleright Y_2\right],
	\end{equation}
	for each $X \in \mathcal{g}$ and $Y_1,Y_2 \in \mathcal{h}$, and 
	\begin{equation}\label{X_1X_2Y}
		\left[X_1, X_2\right] \vartriangleright Y =X_1 \vartriangleright (X_2 \vartriangleright Y) - X_2 \vartriangleright(X_1 \vartriangleright Y),
	\end{equation}
	for each $X_1,X_2 \in \mathcal{g}$, and $Y\in \mathcal{h}$.
	\eqref{XxY} and \eqref{YyY'} are called equivariance and Peiffer properties, respectively, in analogy to the Lie crossed module case. 
	
	\textbf{Lie 2-crossed modules.}
	A Lie 2-crossed module $(L, H, G;\beta, \alpha, \vartriangleright, \left\{,\right\})$ is given by a complex of Lie groups:
	$$L \stackrel{\beta}{\longrightarrow}H \stackrel{\alpha}{\longrightarrow}G$$
	together with smooth left action $\vartriangleright$ by automorphisms of $G$ on $L$ and $H$ (and on $G$ by conjugation), i.e.
	\begin{equation}
		g \vartriangleright (e_1 e_2)=(g \vartriangleright e_1)(g \vartriangleright e_2), \ \ \ (g_1 g_2)\vartriangleright e = g_1 \vartriangleright (g_2 \vartriangleright e),
	\end{equation}
	for any $g, g_1, g_2\in G, e, e_1, e_2\in H $ or $L$, and a $G$-equivariant smooth function $ \left\{, \right\} :H \times H \longrightarrow L $, the Peiffer lifting, such that
	\begin{equation}
		g \vartriangleright \left\{ h_1, h_2 \right\} = \left\{g \vartriangleright h_1, g \vartriangleright h_2\right\},
	\end{equation}
	for any $g \in G$ and $h_1,h_2\in H$. They satisfy:
	\begin{enumerate}
		\item $L \stackrel{\beta}{\longrightarrow}H \stackrel{\alpha}{\longrightarrow}G$ is a complex of $G$-modules (in other words $\beta$ and $\alpha$ are $G$-equivariant and $\alpha \circ \beta$ maps $L$ to $1_G$, the identity of G);
		\setlength{\itemsep}{4pt}
		\item $\beta \left\{h_1,h_2\right\} =\lbrack\lbrack h_1 , h_2 \rbrack\rbrack $, for each $h_1, h_2\in H$;
		
		\setlength{\itemsep}{4pt}
		\item $ \left[l_1, l_2\right]= \left\{ \beta (l_1), \beta (l_2)\right\}$, for each $l_1,l_2 \in L$, and here $\left[l_1,l_2\right]=l_1 l_2 l_1 ^{-1} l_2 ^{-1}$;
		
		\setlength{\itemsep}{4pt}
		\item $\left\{ h_1 h_2, h_3 \right\}= \left\{h_1, h_2 h_3 h_2 ^{-1}\right\} \alpha (h_1) \vartriangleright \left\{h_2,h_3\right\}$, for each $h_1,h_2, h_3 \in H$;
		
		\setlength{\itemsep}{4pt}
		\item $ \left\{h_1,h_2 h_3\right\}  =  \left\{ h_1 , h_2  \right\} \left\{ h_1 , h_3  \right\} \left\{ \lbrack\lbrack h_1, h_3 \rbrack\rbrack  ^{-1}, \alpha (h_1) \vartriangleright h_2 \right\}$, for each $h_1,h_2, h_3 \in H$;
		
		\setlength{\itemsep}{4pt}
		\item $\left\{ \beta (l) , h\right\} \left\{h, \beta (l)\right\} = l (\alpha (h)\vartriangleright l^{-1})$, for each $h \in H $ and $ l \in L$. 
	\end{enumerate}
	There is a left action of $H$ on $L$ by automorphisms $\vartriangleright' $ which is defined by
	\begin{equation}
		h \vartriangleright' l = l \left\{ \beta (l)^{-1}, h \right\},
	\end{equation}
	for each $l \in L$ and $h \in H$. This together with the homomorphism $ \beta : L \longrightarrow H $ defines a crossed module. In particular, for any $h\in H$,
	$$h \vartriangleright' 1_L = \left\{ 1_H, h\right\} =\left\{h, 1_H\right\}=1_L,$$
	where $1_H$ and $1_L$ are the identity of $H$ and $L$, respectively.

	\textbf{Differential 2-crossed modules.}
	A differential 2-crossed module $(\mathcal{l},\mathcal{h}, \mathcal{g}; \beta, \alpha,\vartriangleright, \left\{ , \right\})$ is given by a complex of Lie algebras:
	$$\mathcal{l}\stackrel{\beta}{\longrightarrow} \mathcal{h} \stackrel{\alpha}{\longrightarrow} \mathcal{g},$$
	together with left action $\vartriangleright $ by derivations of $\mathcal{g}$ on $\mathcal{l}, \mathcal{h}, \mathcal{g}$ (on the latter by the adjoint representation), and a $\mathcal{g}$-equivariant bilinear map $\left\{ , \right\}:\mathcal{h} \times \mathcal{h} \longrightarrow \mathcal{l}$, the Peiffer lifting, such that
	\begin{equation}\label{12}
		X\vartriangleright \left\{ Y_1,Y_2\right\} = \left\{X\vartriangleright Y_1,Y_2\right\} + \left\{Y_1, X\vartriangleright Y_2\right\},
	\end{equation}
	for each $X \in \mathcal{g}$ and $Y_1,Y_2 \in \mathcal{h}$. They satisfy:
	\begin{enumerate}
		\setlength{\itemsep}{4pt}
		\item 	$\mathcal{l}\stackrel{\beta}{\longrightarrow} \mathcal{h} \stackrel{\alpha}{\longrightarrow} \mathcal{g}$ is a complex of $\mathcal{g}$-modules, and they satisfy $\alpha \circ \beta =0 $;
		
		\setlength{\itemsep}{4pt}
		\item $\beta \left\{Y_1,Y_2\right\} =\lbrack\lbrack Y_1 , Y_2\rbrack\rbrack $, for each $Y_1, Y_2\in \mathcal{h}$, where $\lbrack\lbrack Y_1 , Y_2\rbrack\rbrack=\left[ Y_1, Y_2\right] - \alpha (Y_1) \vartriangleright Y_2$;
		
		\setlength{\itemsep}{4pt}
		\item $ \left[Z_1, Z_2\right]= \left\{ \beta (Z_1), \beta (Z_2)\right\}$, for   each $Z_1,Z_2 \in L$;
		
		\setlength{\itemsep}{4pt}
		\item $\left\{ \left[Y_1 ,Y_2\right], Y_3 \right\}= \alpha (Y_1) \vartriangleright \left\{Y_2,Y_3\right\}+\left\{Y_1,\left[Y_2,Y_3\right]\right\}-\alpha (Y_2)\vartriangleright \left\{Y_1,Y_3\right\}-\left\{Y_2,\left[Y_1,Y_3\right]\right\}$, for each $Y_1,Y_2, Y_3 \in \mathcal{h}$. This is the same as :
		$$\left\{\left[Y_1,Y_2\right],Y_3\right\}= \left\{\alpha (Y_1)\vartriangleright Y_2, Y_3\right\} - \left\{ \alpha (Y_2)\vartriangleright Y_1, Y_3\right\} - \left\{Y_1, \beta \left\{ Y_2, Y_3\right\}\right\} + \left\{Y_2, \beta\left\{Y_1,Y_2 \right\}\right\};$$
		
		\setlength{\itemsep}{4pt}
		\item $ \left\{Y_1,\left[Y_2,Y_3\right]\right\}= \left\{ \beta\left\{Y_1,Y_2 \right\},Y_3  \right\}-\left\{ \beta\left\{Y_1,Y_3 \right\},Y_2 \right\}$ for each $Y_1,Y_2, Y_3 \in \mathcal{h}$;
		
		\setlength{\itemsep}{4pt}
		\item $\left\{ \beta (Z) , Y \right\} +\left\{Y, \beta (Z)\right\} =- (\alpha (Y)\vartriangleright Z)$, for each $Y\in \mathcal{h} $ and $ Z \in \mathcal
		{l}$.	
	\end{enumerate}
	Analogously to the Lie 2-crossed module case, there is a left action of $\mathcal{h}$ on $\mathcal{l}$ which is defined by
	\begin{equation}\label{YZZ}
		Y\vartriangleright'Z= -\left\{ \beta(Z),Y\right\},
	\end{equation}
	for each $Y\in \mathcal{h}$ and $Z \in \mathcal{l}$. This together with the homomorphism $\beta :\mathcal{l} \longrightarrow \mathcal{h}$ defines a differential crossed module $(L, H; \beta, \vartriangleright')$.
	If $(\mathcal{h}, \mathcal{g}; \alpha, \vartriangleright)$ is also a differential crossed module and have
	\begin{align}
		\alpha (Y)\vartriangleright Z= Y\vartriangleright'Z,
	\end{align}
	in $(\mathcal{l},\mathcal{h}, \mathcal{g}; \beta, \alpha,\vartriangleright, \left\{ , \right\})$, we call this kind of differential 2-crossed module fine.
	
	\textbf{Mixed relations.}
	Let $(H, G; \alpha, \vartriangleright)$ be a Lie crossed module, and let $(\mathcal{h}, \mathcal{g}; \alpha, \vartriangleright)$ be
	the associated differential crossed module. 
	Therefore $G$ acts on $\mathcal{g}$ by the adjoint
	action $Ad: G \times \mathcal{g} \longrightarrow \mathcal{g}$, 
	$Ad_g X= g X g^{-1}$,
	and on $\mathcal{h}$ by the action $\vartriangleright: G \times \mathcal{h} \longrightarrow \mathcal{h}$, obeying the following algebraic identities:
	\begin{align}
		&	(g_1 g_2)\vartriangleright Y = g_1 \vartriangleright (g_2 \vartriangleright Y),\\
		& g \vartriangleright (X \vartriangleright Y)= (Ad_g X)\vartriangleright (g \vartriangleright Y) ,\\
		&\alpha(g\vartriangleright Y)= Ad_g \alpha(Y),\label{gY}\\
		&	\alpha (h) \vartriangleright Y = h Y h^{-1},
	\end{align}
	for any $g$, $g_1$, $g_2 \in G$, $X \in \mathcal{g}$, $Y \in \mathcal{h}$, and $h \in H $. 
	
	Let $(L, H, G;\beta, \alpha, \vartriangleright, \left\{,\right\})$ be a Lie 2-crossed module, and $(\mathcal{l},\mathcal{h},\mathcal{g};\beta,\alpha,\vartriangleright,\left\{,\right\})$ be the associated differential 2-crossed module. Except the above identities, there are another mixed relations given by the action of $G$ on $\mathcal{l}$, $\vartriangleright: G \times \mathcal{l} \longrightarrow \mathcal{l}$,
	\begin{align}
		&\beta(g \vartriangleright Z)=g \vartriangleright \beta(Z),\\
		&(g_1 g_2)\vartriangleright Z = g_1 \vartriangleright (g_2 \vartriangleright Z),\\
		& g \vartriangleright (X \vartriangleright Z)= (Ad_g X)\vartriangleright (g \vartriangleright Z) ,\\
		&g \vartriangleright\{Y_1, Y_2\}=\{g \vartriangleright Y_1, g \vartriangleright Y_2\}, \label{gyy}
	\end{align}
	for any $g$, $g_1$, $g_2 \in G$, $X \in \mathcal{g}$, $Z \in \mathcal{l}$.  
	

	\section{Gauge invariance of 2- and 3-Bianchi Identities}\label{2gt}
	\paragraph{\textbf{2-gauge invariance of 2-Bianchi Identities}}
	
	Under the general 2-gauge transformation \eqref{33},
	\begin{align}
		&d \Omega''_1 + A'' \wedge^{[, ]}\Omega''_1 + \alpha(\Omega''_2)     \nonumber\\
		&= d(g^{-1} \Omega_1g) + (g^{-1}A g + g^{-1}dg + \alpha(\phi)) \wedge (g^{-1} \Omega_1 g)- (g^{-1} \Omega_1 g)\wedge (g^{-1}A g + g^{-1}dg + \alpha(\phi))  \nonumber\\
		&\ \ \ + \alpha(g^{-1}\vartriangleright \Omega_2 + (g^{-1}\Omega_1 g)\wedge^{\vartriangleright}\phi )  \nonumber\\
		&= dg^{-1} \wedge \Omega_1 g + g^{-1}d \Omega_1 g + g^{-1} \Omega_1 \wedge dg + g^{-1} A \wedge \Omega_1 g + g^{-1}dg g^{-1}\wedge \Omega_1 g + \alpha(\phi) \wedge g^{-1} \Omega_1 g \nonumber\\
		&\ \ \ - g^{-1}\Omega_1 \wedge A g - g^{-1}\Omega_1 \wedge dg - (g^{-1}\Omega_1 g) \wedge \alpha(\phi) + \alpha(g^{-1}\vartriangleright \Omega_2) + (g^{-1}\Omega_1 g) \wedge^{[, ]}\alpha(\phi)\nonumber\\
		&= g^{-1}(d \Omega_1 + A \wedge^{[, ]} \Omega_1 + \alpha(\Omega_2))g \nonumber\\
		&=0,
	\end{align}
	and 
	\begin{align}
		&d \Omega''_2 + A'' \wedge^{\vartriangleright}\Omega''_2 - \Omega''_1 \wedge^{\vartriangleright}B''\nonumber\\
		&= d(g^{-1}\vartriangleright \Omega_2 + (g^{-1}\Omega_1g)\wedge^{\vartriangleright}\phi  ) + (g^{-1}A g + g^{-1}dg + \alpha(\phi))\wedge^{\vartriangleright}(g^{-1} \vartriangleright \Omega_2 + (g^{-1}\Omega_1g )\wedge^{\vartriangleright}\phi )\nonumber\\
		&\ \ \ -  (g^{-1}\Omega_1g )\wedge^{\vartriangleright} (g^{-1}\vartriangleright B + d \phi + (g^{-1}A g + g^{-1}dg + \alpha(\phi))\wedge^{\vartriangleright}\phi - \phi \wedge \phi)\nonumber\\
		&= dg^{-1} \vartriangleright \Omega_2 + g^{-1} \vartriangleright d \Omega_2 + (dg^{-1}\Omega_1 g + g^{-1}d \Omega_1 g + g^{-1} \Omega_1 dg )\wedge^{\vartriangleright}\phi  + (g^{-1}\Omega_1 g ) \wedge^{\vartriangleright}d \phi\nonumber\\
		&\ \ \ (g^{-1}A)\wedge^{\vartriangleright}\Omega_2 + (g^{-1}dg g^{-1})\wedge^{\vartriangleright}\Omega_2 + (\alpha(\phi)g^{-1})\wedge^{\vartriangleright}\Omega_2 + (g^{-1}A \wedge \Omega_1 g)\wedge^{\vartriangleright}\phi \nonumber\\
		&\ \ \ + (g^{-1} dg g^{-1}\wedge \Omega_1 g)\wedge^{\vartriangleright}\phi + (\alpha(\phi)\wedge (g^{-1}\Omega_1 g))\wedge^{\vartriangleright}\phi - (g^{-1}\Omega_1)\wedge^{\vartriangleright}B - (g^{-1} \Omega_1 g)\wedge^{\vartriangleright}d\phi \nonumber\\
		&\ \ \ - (g^{-1} \Omega_1 \wedge A g)\wedge^{\vartriangleright}\phi - (g^{-1}\Omega_1 \wedge dg)\wedge^{\vartriangleright}\phi - ((g^{-1}\Omega_1 g) \wedge \alpha(\phi))\wedge^{\vartriangleright} \phi + (g^{-1}\Omega_1 g)\wedge^{\vartriangleright}(\phi \wedge \phi)\nonumber\\
		&= g^{-1} \vartriangleright (d \Omega_2 + A \wedge^{\vartriangleright} \Omega_2 - \Omega_1 \wedge^{\vartriangleright} B) + g^{-1}(d \Omega_1 + A \wedge^{[, ]}\Omega_1)g \wedge^{\vartriangleright} \phi  + (\alpha(\phi)g^{-1})\wedge^{\vartriangleright} \Omega_2 \nonumber\\
		&\ \ \ + (\alpha(\phi) \wedge (g^{-1}\Omega_1 g))\wedge^{\vartriangleright}\phi - ((g^{-1}\Omega_1 g) \wedge \alpha(\phi))\wedge^{\vartriangleright}\phi + (g^{-1}\Omega_1 g)\wedge^{\vartriangleright}(\phi \wedge \phi)\nonumber\\
		&=0,
	\end{align}
	by using the 2-Bianchi-Identities \eqref{2BI}
	and the following identities:
	\begin{align}
		&	dg^{-1}= - g^{-1}dg g^{-1},\\
		& \alpha((g^{-1}\Omega_1 g)\wedge^{\vartriangleright}\phi) = (g^{-1}\Omega_1 g)\wedge^{[, ]}\alpha(\phi),\\
		&\alpha(g^{-1} \vartriangleright \Omega_2)= g^{-1}\alpha(\Omega_2)g,\\
		&\alpha(\phi)\wedge^{\vartriangleright}(g^{-1} \vartriangleright \Omega_2)=\alpha(g^{-1}\vartriangleright \Omega_2)\wedge^{\vartriangleright}\phi,\\
		& \alpha((g^{-1} \Omega_1 g)\wedge^{\vartriangleright}\phi)\wedge^{\vartriangleright}\phi = ((g^{-1}\Omega_1 g)\wedge^{\vartriangleright}\phi)^{[, ]}\phi,\\
		& (g^{-1}\Omega_1 g) \wedge^{\vartriangleright}(\phi \wedge \phi)=((g^{-1}\Omega_1 g) \wedge^{\vartriangleright} \phi)\wedge \phi + \phi \wedge ((g^{-1}\Omega_1 g) \wedge^{\vartriangleright}\phi),
	\end{align}
	which follow from \eqref{XxY}, \eqref{YyY'}, \eqref{gY}.
	
	\paragraph{\textbf{3-gauge invariance of 3-Bianchi Identities}}
	Under the general 3-gauge transformation \eqref{g3gt},
	\begin{align}
		&\ \ \ d \overline{\Omega}_1 + \overline{A} \wedge^{[, ]} \overline{\Omega}_1 + \alpha(\overline{\Omega}_2) \nonumber\\
		&= d(g^{-1} \Omega_1 g) + (g^{-1}Ag + g^{-1}dg + \alpha(\phi))\wedge^{[, ]}(g^{-1}\Omega_1 g) + \alpha(g^{-1}\vartriangleright \Omega_2 + \overline{\Omega}_1 \wedge^{\vartriangleright}\phi) \nonumber\\
		&= d g^{-1} \wedge \Omega_1 g + g^{-1}d\Omega_1 g + g^{-1} \Omega_1 dg + g^{-1}A \wedge \Omega_1 g + g^{-1}dg g^{-1}\wedge \Omega_1 g + \alpha(\phi)\wedge (g^{-1}\Omega_1 g )\nonumber\\
		& \ \ \  - g^{-1}\Omega_1 \wedge A g - g^{-1}\Omega_1 \wedge dg - (g^{-1}\Omega_1 g) \wedge \alpha(\phi)+ g^{-1}\alpha(\Omega_2)g + (g^{-1}\Omega_1 g)\wedge^{[, ]}\alpha(\phi)\nonumber\\
		&=g^{-1}(d \Omega_1 + A \wedge^{[, ]}\Omega_1 + \alpha(\Omega_2))g\nonumber\\
		&=0,
	\end{align}
	and
	\begin{align}
		&\ \ \ d \overline{\Omega}_2 + \overline{A} \wedge^{\vartriangleright}\overline{\Omega}_2 - \overline{\Omega}_1\wedge^{\vartriangleright}\overline{B} + \beta(\overline{\Omega}_3)\nonumber\\
		&=d (g^{-1} \vartriangleright \Omega_2 + (g^{-1}\Omega_1 g) \wedge^{\vartriangleright}\phi) + (g^{-1}Ag + g^{-1}dg + \alpha(\phi))\wedge^{\vartriangleright}(g^{-1} \vartriangleright \Omega_2 + (g^{-1}\Omega_1 g) \wedge^{\vartriangleright}\phi)\nonumber\\
		&\ \ \ -(g^{-1}\Omega_1 g)\wedge^{\vartriangleright} (g^{-1}\vartriangleright B + d \phi + (g^{-1}Ag + g^{-1}dg + \alpha(\phi))\wedge^{\vartriangleright} \phi - \phi \wedge\phi - \beta(\psi))\nonumber\\
		&\ \ \ + \beta(g^{-1}\vartriangleright \Omega_3 - (g^{-1}\vartriangleright \Omega_2 + (g^{-1}\Omega_1 g) \wedge^{\vartriangleright}\phi)\wedge^{\{, \}}\phi + \phi \wedge^{\{, \}}(g^{-1}\vartriangleright \Omega_2)- (g^{-1}\Omega_1 g) \wedge^{\vartriangleright}\psi)\nonumber\\
		&=dg^{-1}\wedge^{\vartriangleright}\Omega_2 + g^{-1}\vartriangleright d \Omega_2 + (dg^{-1}\Omega_1 g + g^{-1}d \Omega_1 g + g^{-1}\Omega_1 dg)\wedge^{\vartriangleright}\phi + g^{-1}\Omega_1g \wedge^{\vartriangleright}d \phi \nonumber\\
		&\ \ \ +g^{-1}\vartriangleright(A \wedge^{\vartriangleright}\Omega_2) + g^{-1}dg g^{-1}\wedge^{\vartriangleright}\Omega_2 + \alpha(\phi)\wedge^{\vartriangleright}(g^{-1}\vartriangleright \Omega_2) + (g^{-1}A \wedge \Omega_1 g)\wedge^{\vartriangleright}\phi \nonumber\\
		&\ \ \ + (g^{-1}dg g^{-1}\Omega_1 g)\wedge^{\vartriangleright}\phi + (\alpha(\phi)g^{-1}\Omega_1 g)\wedge^{\vartriangleright}\phi - g^{-1} \vartriangleright(\Omega_1\wedge^{\vartriangleright}B) - (g^{-1}\Omega_1 g) \wedge^{\vartriangleright}d \phi \nonumber\\
		&\ \ \ - (g^{-1}\Omega_1 \wedge A g)\wedge^{\vartriangleright}\phi - (g^{-1}\Omega_1dg)\wedge^{\vartriangleright}\phi - (g^{-1}\Omega_1g)\wedge^{\vartriangleright}(\alpha(\phi)\wedge^{\vartriangleright}\phi) + (g^{-1}\Omega_1g)\wedge^{\vartriangleright}(\phi \wedge \phi)\nonumber\\
		&\ \ \ + (g^{-1}\Omega_1g)\wedge^{\vartriangleright} \beta(\psi) + g^{-1}\vartriangleright \beta(\Omega_3) - \beta((g^{-1}\vartriangleright \Omega_2 + (g^{-1}\Omega_1 g) \wedge^{\vartriangleright}\phi)\wedge^{\{, \}}\phi) \nonumber\\
		&\ \ \ + \beta (\phi\wedge^{\{, \}}(g^{-1}\vartriangleright \Omega_2)) - (g^{-1}\Omega_1 g) \wedge^{\vartriangleright}\beta(\psi)\nonumber\\
		&=g^{-1}\vartriangleright (d \Omega_2 + A \wedge^{\vartriangleright}\Omega_2 - \Omega_1 \wedge^{\vartriangleright}B + \beta(\Omega_3))\nonumber\\
		&=0,
	\end{align}
	and 
	\begin{align}
		&\ \ \ d \overline{\Omega}_3 + \overline{A} \wedge^{\vartriangleright}\overline{\Omega}_3 - \overline{\Omega}_1\wedge^{\vartriangleright}\overline{C} - \overline{B}\wedge^{\{, \}}\overline{\Omega}_2 - \overline{\Omega}_2 \wedge^{\{, \}}\overline{B}\nonumber\\
		&= d(g^{-1}\vartriangleright \Omega_3 - (g^{-1}\vartriangleright \Omega_2 + (g^{-1}\Omega_1 g) \wedge^{\vartriangleright}\phi)\wedge^{\{, \}}\phi + \phi \wedge^{\{, \}}(g^{-1}\vartriangleright \Omega_2)- (g^{-1}\Omega_1 g) \wedge^{\vartriangleright}\psi)\nonumber\\
		&\ \ \ +(g^{-1}A g + g^{-1}dg + \alpha(\phi))\wedge^{\vartriangleright} (g^{-1}\vartriangleright \Omega_3 - (g^{-1}\vartriangleright \Omega_2 + (g^{-1}\Omega_1 g) \wedge^{\vartriangleright}\phi)\wedge^{\{, \}}\phi \nonumber\\
		&\ \ \ + \phi \wedge^{\{, \}}(g^{-1}\vartriangleright \Omega_2)- (g^{-1}\Omega_1 g) \wedge^{\vartriangleright}\psi)- (g^{-1}\Omega_1g )\wedge^{\vartriangleright}(g^{-1}\vartriangleright C - (g^{-1}\vartriangleright B + d \phi + \overline{A}\wedge^{\vartriangleright}\phi \nonumber\\
		&\ \ \ - \phi \wedge \phi -\beta(\psi))\wedge^{\{, \}} \phi + \phi \wedge^{\vartriangleright'}\psi - \phi \wedge^{\{, \}}(g^{-1}\vartriangleright B)- d \psi - (g^{-1}A g + g^{-1}dg + \alpha(\phi)) \wedge^{\vartriangleright} \psi)\nonumber\\
		&\ \ \ - (g^{-1}\vartriangleright B + d \phi +  (g^{-1}A g + g^{-1}dg + \alpha(\phi))\wedge^{\vartriangleright}\phi - \phi \wedge \phi -\beta(\psi)) \wedge^{\{, \}}(g^{-1}\vartriangleright \Omega_2 + \overline{\Omega}_1\wedge^{\vartriangleright}\phi)\nonumber\\
		&\ \ \ - (g^{-1}\vartriangleright \Omega_2 + \overline{\Omega}_1\wedge^{\vartriangleright}\phi)\wedge^{\{, \}}(g^{-1}\vartriangleright B + d \phi +  (g^{-1}A g + g^{-1}dg + \alpha(\phi))\wedge^{\vartriangleright}\phi - \phi \wedge \phi -\beta(\psi))\nonumber\\
		&= g^{-1}\vartriangleright(d \Omega_3 + A \wedge^{\vartriangleright}\Omega_3 - \Omega_1\wedge^{\vartriangleright}C - B\wedge^{\{, \}}\Omega_2 - \Omega_2 \wedge^{\{, \}}B )\nonumber\\
		&= 0,
	\end{align}
	by using the definition of differential 2-crossed module \eqref{2cm}, the 3-Bianchi-Identities \eqref{3bi}
	and the following identities:
	\begin{align}
		&	dg^{-1}= - g^{-1}dg g^{-1},\\
		&\alpha(\phi)\wedge^{\vartriangleright}(g^{-1}\vartriangleright \Omega_2)= \phi\wedge^{[, ]}(g^{-1}\vartriangleright \Omega_2)= \alpha(g^{-1}\vartriangleright \Omega_2)\wedge^{\vartriangleright}\phi= (g^{-1}\vartriangleright \alpha(\Omega_2))\wedge^{\vartriangleright}\phi, \\
		&(g^{-1}\Omega_1g)\wedge^{\vartriangleright}(\alpha(\phi)\wedge^{\vartriangleright}\phi)= (g^{-1}\Omega_1g)\wedge^{\vartriangleright}(\phi \wedge^{[, ]}\phi)= 2 (g^{-1}\Omega_1g)\wedge^{\vartriangleright}(\phi \wedge \phi),\\
		&(\alpha(\phi)\wedge (g^{-1}\Omega_1g))\wedge^{\vartriangleright}\phi = \alpha(\phi)\wedge^{\vartriangleright}((g^{-1}\Omega_1g )\wedge^{\vartriangleright}\phi)=\phi \wedge^{[, ]}((g^{-1}\Omega_1g) \wedge^{\vartriangleright}\phi),\\
		&(g^{-1}\Omega_1 g )\wedge^{\vartriangleright}(\phi\wedge \phi)= (g^{-1}\Omega_1g)\wedge^{\vartriangleright}\phi \wedge \phi + \phi\wedge(g^{-1} \Omega_1 g \wedge^{\vartriangleright} \phi)\\
		&\beta(\phi \wedge^{\{, \}}(g^{-1}\vartriangleright\Omega_2))= \phi \wedge^{[,]}(g^{-1}\vartriangleright\Omega_2)- \alpha(\phi) \wedge^{\vartriangleright}(g^{-1}\vartriangleright\Omega_2),\\
		& \beta((g^{-1}\vartriangleright \Omega_2)\wedge^{\{, \}}\phi)= (g^{-1}\vartriangleright \Omega_2)\wedge^{[, ]}\phi - \alpha(g^{-1}\vartriangleright \Omega_2)\wedge^{\vartriangleright}\phi ,\\
		&\beta(((g^{-1}\Omega_1g) \wedge^{\vartriangleright}\phi)\wedge^{\{, \}}\phi)= ((g^{-1}\Omega_1g) \wedge^{\vartriangleright}\phi)\wedge^{[, ]}\phi- \alpha((g^{-1}\Omega_1g) \wedge^{\vartriangleright}\phi)\wedge^{\vartriangleright}\phi,
	\end{align}
	which follow from \eqref{YyY'}, \eqref{X_1X_2Y}, \eqref{XY_1Y_2}, \eqref{gyy}, \eqref{12}.
	
	\section{The another proofs of the 2- and 3-Chern-Weil theorem}\label{23cw}
	\paragraph{\textbf{The another proof of the 2-Chern-Weil theorem}}
	\begin{proof}
		\begin{align}
			&2\langle \Omega^1_1, \Omega^1_2 \rangle_{\mathcal{g}, \mathcal{h}} - 2\langle \Omega^0_1, \Omega^0_2\rangle_{\mathcal{g}, \mathcal{h}}\nonumber\\
			&= 2(\langle d A^1 + A^1 \wedge A^1 - \alpha(B^1),  dB^1 + A^1 \wedge^{\vartriangleright}B^1 \rangle_{\mathcal{g}, \mathcal{h}}\nonumber\\
			&\ \  \ - \langle d A^0 + A^0\wedge A^0 - \alpha(B^0), dB^0 + A^0 \wedge^{\vartriangleright}B^0 \rangle_{\mathcal{g}, \mathcal{h}})\nonumber\\
			&= 2(\langle d A^1, dB^1 \rangle_{\mathcal{g}, \mathcal{h}} + \langle A^1 \wedge A^1, d B^1 \rangle_{\mathcal{g}, \mathcal{h}} - \langle \alpha(B^1), d B^1 \rangle_{\mathcal{g}, \mathcal{h}} + \langle d A^1, A^1 \wedge^{\vartriangleright} B^1 \rangle_{\mathcal{g}, \mathcal{h}}\nonumber\\
			&\ \ \ -\langle d A^0, dB^0 \rangle_{\mathcal{g}, \mathcal{h}} - \langle A^0 \wedge A^0, d B^0 \rangle_{\mathcal{g}, \mathcal{h}} + \langle \alpha(B^0), d B^0 \rangle_{\mathcal{g}, \mathcal{h}} - \langle d A^0, A^0 \wedge^{\vartriangleright} B^0 \rangle_{\mathcal{g}, \mathcal{h}}),
		\end{align}
		by using \eqref{XXY}.  Consider  \eqref{AB} and \eqref{Q4}, have
		\begin{align}
			&d Q(A^1, A^0, B^1, B^0)\nonumber\\
			&=2 \int_{0}^{1}dt (\langle d A^1 - dA^0, \Omega^t_2 \rangle_{\mathcal{g}, \mathcal{h}} - \langle A^1 - A^0, d \Omega^t_2 \rangle_{\mathcal{g}, \mathcal{h}}\nonumber\\
			&\ \ \ + \langle d \Omega^t_1, B^1 - B^0 \rangle_{\mathcal{g}, \mathcal{h}} + \langle \Omega^t_1, d B^1 - dB^0 \rangle_{\mathcal{g}, \mathcal{h}})\nonumber\\
			&= 2 ( \langle dA^1 - dA^0, \frac{1}{2}dB^0 + \frac{1}{2}dB^1 + \frac{1}{3} A^0 \wedge^{\vartriangleright} B^0 + \frac{1}{6} A^0 \wedge^{\vartriangleright}B^1 + \frac{1}{6} A^1 \wedge^{\vartriangleright}B^0 + \frac{1}{3} A^1 \wedge^{\vartriangleright}B^1 \rangle_{\mathcal{g}, \mathcal{h}}\nonumber\\
			&\ \ \ - \langle A^1 -A^0, \frac{1}{3} d A^0 \wedge^{\vartriangleright} B^0 + \frac{1}{6}dA^0 \wedge^{\vartriangleright}B^1+ \frac{1}{6}dA^1 \wedge^{\vartriangleright}B^0 + \frac{1}{3} d A^1 \wedge^{\vartriangleright} B^1 - \frac{1}{3}  A^0 \wedge^{\vartriangleright} d B^0 \nonumber\\
			&\ \ \ - \frac{1}{6}A^0 \wedge^{\vartriangleright}dB^1 - \frac{1}{6}A^1 \wedge^{\vartriangleright}dB^0 - \frac{1}{3}A^1 \wedge^{\vartriangleright}dB^1 \rangle_{\mathcal{g}, \mathcal{h}}\nonumber\\
			&\ \ \ + \langle \frac{1}{3}dA^0 \wedge A^0 + \frac{1}{6}dA^0 \wedge A^1 +  \frac{1}{6}dA^1 \wedge A^0 +  \frac{1}{3}dA^1 \wedge A^1 -  \frac{1}{3}A^0 \wedge dA^0 - \frac{1}{6}A^0 \wedge dA^1 -  \frac{1}{6}A^1 \wedge dA^0 \nonumber\\
			&\ \ \ -  \frac{1}{3}A^1 \wedge dA^1 - \frac{1}{2} \alpha(d B^0) - \dfrac{1}{2} \alpha(d B^1), B^1 - B^0 \rangle_{\mathcal{g}, \mathcal{h}} \nonumber\\
			&\ \ \  + \langle \frac{1}{2} dA^0 + \frac{1}{2}d A^1 + \frac{1}{3}A^0 \wedge A^0 + \frac{1}{6}A^0 \wedge A^1 + \frac{1}{6}A^1 \wedge A^0 + \frac{1}{3}A^1 \wedge A^1 -\frac{1}{2}\alpha(B^0) - \frac{1}{2}\alpha(B^1),\nonumber\\
			&\ \ \  d B^1 - dB^0 \rangle_{\mathcal{g}, \mathcal{h}})\nonumber\\
			&=2(\langle d A^1, dB^1 \rangle_{\mathcal{g}, \mathcal{h}} + \langle A^1 \wedge A^1, d B^1 \rangle_{\mathcal{g}, \mathcal{h}} - \langle \alpha(B^1), d B^1 \rangle_{\mathcal{g}, \mathcal{h}} + \langle d A^1, A^1 \wedge^{\vartriangleright} B^1 \rangle_{\mathcal{g}, \mathcal{h}}\nonumber\\
			&\ \ \ -\langle d A^0, dB^0 \rangle_{\mathcal{g}, \mathcal{h}} - \langle A^0 \wedge A^0, d B^0 \rangle_{\mathcal{g}, \mathcal{h}} + \langle \alpha(B^0), d B^0 \rangle_{\mathcal{g}, \mathcal{h}} - \langle d A^0, A^0 \wedge^{\vartriangleright} B^0 \rangle_{\mathcal{g}, \mathcal{h}}),
		\end{align}
		by using \eqref{XXY}. Then, we have
		\begin{align}
			2\langle \Omega^1_1, \Omega^1_2 \rangle_{\mathcal{g}, \mathcal{h}} - 2\langle \Omega^0_1, \Omega^0_2\rangle_{\mathcal{g}, \mathcal{h}}=d Q(A^1, A^0, B^1, B^0)
		\end{align}
	\end{proof}
	
	\paragraph{\textbf{The another proof of the 3-Chern-Weil theorem}}
	\begin{proof}
		We have
		\begin{align}\label{dQ5}
			&d Q(A^1, A^0, B^1, B^0, C^1, C^0)\nonumber\\
			&=2 \int_{0}^{1}dt( \langle dA^1- dA^0, \Omega^t_3 \rangle_{\mathcal{g}, \mathcal{l}}- \langle A^1- A^0, d\Omega^t_3 \rangle_{\mathcal{g}, \mathcal{l}} + \langle 	 dB^1 - dB^0, \Omega^t_2 \rangle_{\mathcal{h}} \nonumber\\
			&\ \ \ + \langle 	 B^1 - B^0, d\Omega^t_2 \rangle_{\mathcal{h}}+ \langle d\Omega^t_1, C^1 - C^0 \rangle_{\mathcal{g}, \mathcal{l}} +  \langle \Omega^t_1, dC^1 - dC^0 \rangle_{\mathcal{g}, \mathcal{l}})\nonumber\\
			&= \langle dA^1- dA^0, 2 \int_{0}^{1}dt \Omega^t_3 \rangle_{\mathcal{g}, \mathcal{l}}- \langle A^1- A^0, 2 \int_{0}^{1}dt d\Omega^t_3 \rangle_{\mathcal{g}, \mathcal{l}} + \langle 	 dB^1 - dB^0, 2 \int_{0}^{1}dt \Omega^t_2 \rangle_{\mathcal{h}} \nonumber\\
			&\ \ \ + \langle 	 B^1 - B^0, 2 \int_{0}^{1}dt d\Omega^t_2 \rangle_{\mathcal{h}}+ \langle 2 \int_{0}^{1}dt d\Omega^t_1, C^1 - C^0 \rangle_{\mathcal{g}, \mathcal{l}} +  \langle 2 \int_{0}^{1}dt \Omega^t_1, dC^1 - dC^0 \rangle_{\mathcal{g}, \mathcal{l}}.
		\end{align}
		We can calculate 
		\begin{align}
			2 \int_{0}^{1}dt \Omega^t_3= &dC^0 + dC^1 + \frac{2}{3}A^0 \wedge C^0 + \frac{1}{3}A^0 \wedge^{\vartriangleright}C^1 + \frac{1}{3}A^1 \wedge^{\vartriangleright}C^0 + \frac{2}{3}A^1 \wedge^{\vartriangleright}C^1 \nonumber\\
			& + \frac{2}{3}B^0 \wedge^{\{, \}} B^0 + \frac{1}{3}B^0 \wedge^{\{, \}} B^1  + \frac{1}{3}B^1 \wedge^{\{, \}} B^0  + \frac{2}{3}B^1 \wedge^{\{, \}} B^1,\label{j3}\\
			2 \int_{0}^{1}dt d\Omega^t_3= &\frac{2}{3} dA^0 \wedge^{\vartriangleright}C^0 + \frac{1}{3} dA^0 \wedge^{\vartriangleright}C^1 + \frac{1}{3} dA^1 \wedge^{\vartriangleright}C^0 + \frac{2}{3} dA^1 \wedge^{\vartriangleright}C^1 \nonumber\\
			& - \frac{2}{3} A^0 \wedge^{\vartriangleright}d C^0 - \frac{1}{3} A^0 \wedge^{\vartriangleright}d C^1 - \frac{1}{3} A^1 \wedge^{\vartriangleright}d C^0 - \frac{2}{3} A^1 \wedge^{\vartriangleright}d C^1 \nonumber\\
			& + \frac{2}{3} d B^0 \wedge^{\{, \}}B^0 + \frac{1}{3} d B^0 \wedge^{\{, \}}B^1 + \frac{1}{3} d B^1 \wedge^{\{, \}}B^0 + \frac{2}{3} d B^1 \wedge^{\{, \}}B^1 \nonumber\\
			&+ \frac{2}{3} B^0 \wedge^{\{, \}} dB^0 + \frac{1}{3} B^0 \wedge^{\{, \}}dB^1 + \frac{1}{3} B^1 \wedge^{\{, \}}dB^0 
			+ \frac{2}{3} B^1 \wedge^{\{, \}}dB^1, \label{dj3}
		\end{align}
		\begin{align}
			2 \int_{0}^{1}dt \Omega^t_2= & dB^0 +dB^1 + \frac{2}{3}A^0 \wedge^{\vartriangleright}B^0 + \frac{1}{3}A^0 \wedge^{\vartriangleright}B^1 + \frac{1}{3}A^1 \wedge^{\vartriangleright}B^0  \nonumber\\
			&+ \frac{2}{3}A^1 \wedge^{\vartriangleright}B^1 - \beta(C^0)- \beta(C^1), \label{j2}\\
			2 \int_{0}^{1}dt d\Omega^t_2= & \frac{2}{3}dA^0 \wedge^{\vartriangleright}B^0 + \frac{1}{3}dA^0 \wedge^{\vartriangleright}B^1 + \frac{1}{3}dA^1 \wedge^{\vartriangleright}B^0 + \frac{2}{3}dA^1 \wedge^{\vartriangleright}B^1- \frac{2}{3}A^0 \wedge^{\vartriangleright}d B^0  \nonumber\\
			&- \frac{1}{3}A^0 \wedge^{\vartriangleright}d B^1 - \frac{1}{3}A^1 \wedge^{\vartriangleright}d B^0 - \frac{2}{3}A^1 \wedge^{\vartriangleright}d B^1 - \beta(dC^0)- \beta(dC^1),\label{dj2}
		\end{align}
		and
		\begin{align}
			2 \int_{0}^{1}dt \Omega^t_1= & dA^0 + dA^1 + \frac{2}{3}A^0 \wedge A^0 + \frac{1}{3} A^0 \wedge A^1 + \frac{1}{3} A^1 \wedge A^0 + \frac{2}{3}A^1 \wedge A^1 - \alpha(B^0)- \alpha(B^1),\label{j1}\\
			2 \int_{0}^{1}dt d\Omega^t_1= & \frac{2}{3}d A^0 \wedge A^0 + \frac{1}{3} dA^0 \wedge A^1 + \frac{1}{3} dA^1 \wedge A^0 + \frac{2}{3}dA^1 \wedge A^1 	-\frac{2}{3} A^0 \wedge dA^0 -\nonumber\\
			& \frac{1}{3} A^0 \wedge dA^1 - \frac{1}{3} A^1 \wedge dA^0 - \frac{2}{3}A^1 \wedge dA^1 
			- \alpha(dB^0)- \alpha(dB^1).\label{dj1}
		\end{align}
		Substitute \eqref{j3}, \eqref{dj3}, \eqref{j2}, \eqref{dj2}, \eqref{j1} and \eqref{dj1} to \eqref{dQ5} to get
		\begin{align}
			&d Q_5(A^1, A^0, B^1, B^0, C^1, C^0)\nonumber\\
			&=2 \langle d A^1 + A^1\wedge A^1 - \alpha(B^1), d C^1 + A^1 \wedge^{\vartriangleright}C^1 + B^1 \wedge^{\{, \}}B^1 \rangle_{\mathcal{g}, \mathcal{l}}\nonumber\\
			&+ \langle dB^1 + A^1 \wedge^{\vartriangleright} B^1 - \beta(C^1), dB^1 + A^1 \wedge^{\vartriangleright} B^1 - \beta(C^1)\rangle_{\mathcal{h}}\nonumber\\
			&- 2 \langle d A^0 + A^0\wedge A^0- \alpha(B^0), d C^0 + A^0 \wedge^{\vartriangleright}C^0+ B^0 \wedge^{\{, \}}B^0 \rangle_{\mathcal{g}, \mathcal{l}} \nonumber\\
			&- \langle dB^0 + A^0 \wedge^{\vartriangleright} B^0 - \beta(C^0), dB^0 + A^0 \wedge^{\vartriangleright} B^0 - \beta(C^0)\rangle_{\mathcal{h}}\nonumber\\
			&= 	2 \langle \Omega^1_1, \Omega^1_3 \rangle_{\mathcal{g}, \mathcal{l}} + \langle \Omega^1_2, \Omega^1_2 \rangle_{\mathcal{h}}
			-
			2 \langle \Omega^0_1, \Omega^0_3 \rangle_{\mathcal{g}, \mathcal{l}} - \langle \Omega^0_2, \Omega^0_2 \rangle_{\mathcal{h}}.
		\end{align}
	\end{proof}

\end{document}